\newcommand{\U}{\mathcal{U}}
\def\textiid{i.i.d.\@\xspace}
\newcommand\iid{\ifmmode\text{ i.i.d. } \else \textiid \fi}
\newcommand{\beqs}{\begin{equation*}}
\newcommand{\eeqs}{\end{equation*}}
\newcommand{\beq}{\begin{equation}}
\newcommand{\eeq}{\end{equation}}
\theoremstyle{plain}
\newtheorem{theorem}{Theorem}
\newtheorem{claim}{Claim}
\newtheorem{lemma}{Lemma}
\newtheorem{proposition}{Proposition}
\newtheorem{corollary}{Corollary}
\newtheorem{definition}{Definition}
\newtheorem{remark}{Remark}
\newtheorem{example}{Example}
\newtheorem{note}{Note}
\newcommand{\A}{{\mathsf{A}}}
\newcommand{\B}{{\mathsf{B}}}
\newcommand{\mf}{{\mathsf{f}}}
\newcommand{\D}{{\mathscr{D}}}
\newcommand{\RR}{{\mathscr{R}}}
\newcommand{\tw}{{\tilde{\textbf{w}}}}
\newcommand{\tW}{{\tilde{\textbf{W}}}}
\begin{document}

\title{ 
$\mf$-divergences and their applications in lossy compression and bounding generalization error  
}

\author{Saeed Masiha\qquad Amin Gohari \qquad Mohammad Hossein Yassaee}

\allowdisplaybreaks

\maketitle
\begin{abstract}

In this paper, we provide three applications for $\mf$-divergences: (i) we introduce Sanov's upper bound on the tail probability of the sum of independent random variables based on super-modular $\mf$-divergence and show that our generalized Sanov's bound strictly improves over ordinary one, (ii) we consider the lossy compression problem which studies  the set of achievable rates for a given distortion and code length. We extend the rate-distortion function using mutual $\mf$-information and provide new and strictly better bounds on achievable rates in the finite blocklength regime using super-modular $\mf$-divergences, and (iii) we provide a connection between the generalization error of algorithms with bounded input/output mutual $\mf$-information and a generalized rate-distortion problem. This connection allows us to bound the generalization error of learning algorithms using lower bounds on the $\mf$-rate-distortion function. Our bound is based on a new lower bound on the rate-distortion function that (for some examples) strictly improves over previously best-known bounds.\footnote{Some preliminary ideas at the early stages of this work were presented in \cite{masiha2021learning}.}



\end{abstract}
\section{Introduction}

The generalized \emph{relative entropy} $D_\mf(\cdot\|\cdot)$, also known as the $\mf$-divergence,  was introduced by Ali-Silvey \cite{ali1966general} and Csiszar \cite{csiszar2011information,csiszar1967information} as a measure of dissimilarity between the two distributions defined on the same sample space. $\mf$-divergences have found various applications in information theory, statistics and machine learning among other fields. Some applications and properties of $\mathsf f$-divergence are given in \cite{sason2016f,ziv1973functionals,tridenski2015ziv,raginsky2016strong,makur2020comparison}.

In this paper, we give the applications of $\mf$-divergences in Sanov's theorem \cite{csiszar2006simple} on the tail probability of the sum of random variables, lossy source coding, and upper bounding generalization error. 

Before describing our contributions, we introduce a class of $\mf$-divergences that satisfies a supermodularity property. More specifically, given an arbitrary distribution $p_{X_1,\cdots, X_n}$ and an arbitrary product distribution $q_{X_1,\cdots, X_n}=\prod_{i=1}^n q_{X_i}$, we consider the   function $g:2^{[n]}\mapsto \mathbb{R}$ defined as 
\[g(A)=D_\mf(p_{X_{A}}\|q_{X_A}), \qquad \forall A\subset [n],\]
where $X_A=(X_i: i\in A)$. We say that $\mf$-divergences satisfy a supermodularity property if $g(A)$ is a super-modular set function. We provide an explicit class of convex functions $\mf$ that lead to supermodular $\mf$-divergences.

Next, we list our contributions as follows:

\begin{itemize}
    \item  We generalize the upper bound part of Sanov's theorem \cite{csiszar2006simple} on the tail probability of the sum of random variables using supermodular $\mf$-divergences. We show that our extension of Sanov's bound based on some choice of supermodular $\mf$-divergence strictly improves over the ordinary Sanov's upper bound in the non-asymptotic regime.
    \item $\mf$-divergences can be used to define a mutual $\mf$-information $I_\mf(X;Y)$ between two random variables $X$ and $Y$. Mutual $\mf$-information is a measure of dependence between two random variables and generalizes Shannon's mutual information. We show that supermodular $\mf$-divergences imply that the resulting mutual $\mf$-information satisfies the following property: for any random variables $A,B,C$ we have
    \begin{align}I_\mf(A,B;C)\geq I_\mf(A;C)+I_\mf(B;C)\label{ABproperty}
    \end{align}
     as long as $A$ and $B$ are independent. We call the above property the \textit{$AB$-property}. The $AB$-property is stronger than the data processing inequality $I_\mf(A,B;C)\geq I_\mf(A;C)$. We continue by giving an explicit application of the $AB$-property. We note that in the converse proof of lossy source coding problem (also known as the rate-distortion problem), one can rewrite the proof in such a way that instead of using the chain-rule property of mutual information, we use the $AB$-property to complete the proof. This insight shows that we can mimic the standard proof and replace Shannon's mutual information with $\mf$-information in all places in the proof. We define a notion of $\mf$-rate-distortion function and show that it can provide new and strictly better bounds on the achievable rates for a given distortion in the finite blocklength regime.\footnote{Note that in the asymptotic regime, as the blocklength tends to infinity, a full characterization of the achievable rates is known and no improvement is possible in that case.} We also extend a previous work by Ziv and Zakai in \cite{ziv1973functionals}.
     \item 
     It is known that under certain assumptions, the generalization error of a learning algorithm can be bounded from above in terms of the mutual information between the input and output of the algorithm \cite{russo2019much,xu2017information}. Similar bounds on generalization error are obtained in \cite{bu2020tightening,lopez2018generalization,wang2019information,hellstrom2020generalization,aminian2020jensen, aminian2022tighter,aminian2021characterizing,aminian2021information,esposito2020robust,esposito2019generalization,jiao2017dependence,asadi2018chaining} for various learning algorithms and using other measures of dependence. In this paper, we are interested in the generalization error for the class of algorithms with bounded
     input/output mutual $\mf$-information. For this class of algorithms, we give a novel connection between the $\mf$-rate-distortion function and the generalization error. Moreover, this leads to a new upper bound on the generalization error using the $\mf$-rate-distortion function that strictly improves over the previous bounds in  \cite{xu2017information,bu2020tightening}. Finally, $\mf$-rate-distortion function defined using super-modular $\mf$-divergences enjoys certain properties that facilitate its evaluation when the number of data samples is large.
\end{itemize}

 As stated above, we provide a novel connection between generalization error and the rate-distortion theory.  This connection allows us to strictly improve over the bound in
\cite{xu2017information}. In order to show that our rate-distortion bound strictly improves over the bound by Xu and Raginsky \cite{xu2017information}, we provide a new lower bound on the rate-distortion function (and on $\mf$-rate distortion in general). Not only this bound allows us to relate our bound to the bound by Xu and Raginsky, but it also strictly improves over the previously known bounds on the rate-distortion function in some cases \cite{xu2017information,bu2020tightening}.

We also give some variants of the rate-distortion bound in Appendices \ref{rate_of_consistency} and \ref{further-ideas} that can tighten the ordinary rate-distortion bounds on the generalization error. 

This paper is organized as follows: in Section \ref{f-divergence-sec}, we define supermodular $\mf$-divergence and discuss its application in Sanov's theorem. In Section \ref{sec: mutual-f-information}, we explore various definitions of mutual $\mf$-information and $\mf$-entropy and their properties. Lossy compression with mutual $\mf$-information is discussed in Section \ref{sec: lossy compression}. Lower bounds on ($\mf$-)rate-distortion function and its connection with generalization error of learning with bounded input/output mutual $\mf$-information are discussed in Sections \ref{sec:lower bound on rate-dist} and \ref{sec: gen} respectively.

\subsection{Notation and preliminaries}
 
 Random variables are shown in capital letters, whereas their realizations are shown in lowercase letters. We show sets with calligraphic font. For a random variable $X$ generated from a distribution $\mu$, we use $\mathbb{E}_{X}$ or $\mathbb{E}_{\mu}$ to denote the expectation taken over $X$ with distribution $\mu$. $P_{Z}$ means the distribution over $Z$. We use $\log_{2}(x)$ and $\ln(x)$ to denote the logarithm in base two and in base $e$ respectively. We use the notation $\mathcal{O}$ to hide constants.

\section{$\mf$-divergence}
\label{f-divergence-sec}

The generalized \emph{relative entropy} of Ali-Silvey \cite{ali1966general} and Csiszar \cite{csiszar2011information}\cite{csiszar1967information} (also called the ``$\mf$-divergence") is defined as follows:
\begin{definition}
 Let $\mathsf f:\mathbb{R}_{+}\to \mathbb{R}$ be a convex function with $\mathsf f(1)=0$. Let $P$ and
$Q$ be two probability distributions on a measurable space $(\mathcal{X},\sigma(\mathcal{X}))$. If $P\ll Q$ \footnote{We say $\mu\ll\nu$, i.e., $\mu$ is absolutely continuous with respect to $\nu$ if $\nu(A)=0$ for some $A\in\mathcal{X}$, then $\mu(A)=0$.} then the $\mathsf f$-divergence is defined as
\[
D_{\mathsf f}(P\|Q)=\mathbb{E}_{Q}\mathsf f\left(\frac{dP}{dQ}(X)\right)
\]
where $\frac{dP}{dQ}$ is a Radon-Nikodym derivative and $\mf(0)\triangleq \mf(0^{+})$.

Define the conditional $\mathsf f$-divergence as follows: \[
D_{\mathsf f}(P_{Y|X}\|Q_{Y|X}|P_{X})=\mathbb{E}_{X\sim P_{X}}[D_{\mathsf f}(P_{Y|X}\|Q_{Y|X})],
\]

\end{definition}

\begin{theorem}(Properties of $\mf$-divergences)\cite[Chap. 6]{polyanskiy2014lecture}
\begin{itemize}
    \item Non-negativity: $D_{\mathsf f}(P\|Q)\ge 0$ and equality holds if and only if $P=Q$.

\item Joint convexity: $(P,Q)\to D_{\mathsf f}(P\|Q)$ is a jointly convex function. In particular, this property implies that conditioning does not decrease $\mathsf f$-divergence: Let $P_{X}\xrightarrow[]{P_{Y|X}} P_{Y}$ and $P_{X}\xrightarrow[]{Q_{Y|X}} Q_{Y}$. Then
\[
D_{\mathsf f}(P_{Y}\|Q_{Y})\le D_{\mathsf f}(P_{Y|X}\|Q_{Y|X}|P_{X}).
\]

\item Data processing inequality: Let $P_{X}\xrightarrow[]{P_{Y|X}} P_{Y}$ and $Q_{X}\xrightarrow[]{P_{Y|X}} Q_{Y}$. Then
\[
D_{\mathsf f}(P_{Y}\|Q_{Y})\le D_{\mathsf f}(P_{X}\|Q_{X}).
\]
\end{itemize}
\label{theorem_properties_D_f}
\end{theorem}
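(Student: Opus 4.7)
All three claims reduce to carefully applying Jensen's inequality together with the defining properties that $\mf$ is convex and $\mf(1)=0$, once a common dominating measure is fixed. The plan is to treat the three items in the order they are stated, reusing the tools built for the earlier items whenever possible.

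For \emph{non-negativity}, I would apply Jensen's inequality directly to the convex function $\mf$ and the random variable $\frac{dP}{dQ}$ under $Q$:
\[
D_{\mf}(P\|Q) = \mathbb{E}_Q\!\left[\mf\!\left(\frac{dP}{dQ}(X)\right)\right] \geq \mf\!\left(\mathbb{E}_Q\!\left[\frac{dP}{dQ}(X)\right]\right) = \mf(1) = 0,
\]
since $\frac{dP}{dQ}$ integrates to $1$ under $Q$ by the assumption $P\ll Q$. For the equality case, I would invoke strict convexity of $\mf$ at $1$, a condition one may assume without loss of generality by replacing $\mf(x)$ with $\mf(x)+\lambda(x-1)$, which leaves the divergence unchanged. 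Equality in Jensen then forces $\frac{dP}{dQ}$ to be $Q$-a.s.\ constant, and the integral-to-one constraint pins this constant to $1$, giving $P=Q$.

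For \emph{joint convexity}, the key tool is the perspective transform $\widetilde{\mf}(a,b) := b\,\mf(a/b)$, which is known to be jointly convex on $\mathbb{R}_+^2$ whenever $\mf$ is convex. Writing $D_\mf(P\|Q)=\int \widetilde{\mf}\!\left(\frac{dP}{d\mu},\frac{dQ}{d\mu}\right)d\mu$ for a common dominating measure $\mu$, joint convexity in $(P,Q)$ follows from pointwise joint convexity of $\widetilde{\mf}$ combined with linearity of integration. The conditioning corollary then drops out by viewing $P_Y$ and $Q_Y$ as mixtures $\mathbb{E}_{X\sim P_X}[P_{Y|X}(\cdot|X)]$ and $\mathbb{E}_{X\sim P_X}[Q_{Y|X}(\cdot|X)]$ over the same mixing law $P_X$, and applying the joint convexity bound to this mixture.

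For the \emph{data processing inequality}, the approach is to apply Jensen's inequality to the channel kernel. For each fixed $y$, the quantities $w(x\mid y)=Q_X(x)P_{Y|X}(y\mid x)/Q_Y(y)$ form a probability law on $x$, and a direct computation gives $\frac{P_Y(y)}{Q_Y(y)} = \sum_x w(x\mid y)\,\frac{P_X(x)}{Q_X(x)}$. Applying Jensen with respect to $w(\cdot\mid y)$ yields $\mf(P_Y(y)/Q_Y(y)) \leq \sum_x w(x\mid y)\,\mf(P_X(x)/Q_X(x))$, and averaging over $y$ with weight $Q_Y(y)$ telescopes to $D_\mf(P_X\|Q_X)$. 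Alternatively, data processing follows from joint convexity by viewing $(P_Y,Q_Y)$ as the image of $(P_X,Q_X)$ under a common Markov kernel. I expect the main obstacle to be purely measure-theoretic rather than conceptual: extending the chain-of-Jensen argument from finite alphabets to a general $(\mathcal{X},\sigma(\mathcal{X}))$ requires disintegration of the joint law and justification of Fubini-style interchanges, which I would handle either by approximation of $\mf$ by well-behaved truncations and a monotone convergence argument, or by routing the general case through the joint-convexity proof, which transfers to abstract spaces more gracefully.
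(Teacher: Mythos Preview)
Your proposal is correct and follows the standard textbook route: Jensen for non-negativity, the perspective function $b\mf(a/b)$ for joint convexity, and either a direct Jensen computation with the backward-channel weights or the joint-convexity route for data processing. There is nothing to compare against here, since the paper does not supply its own proof of this theorem; it simply cites \cite[Chap.~6]{polyanskiy2014lecture} as the source, and your argument is precisely the one found in that reference. One small caveat on the equality case: the replacement $\mf(x)\mapsto \mf(x)+\lambda(x-1)$ does not by itself manufacture strict convexity at $1$ if $\mf$ is affine on a neighborhood of $1$; the usual convention (which the paper implicitly adopts via the ``if and only if'' claim) is that $\mf$ is assumed strictly convex at $1$, or equivalently that $\mf$ is not affine on any interval containing $1$.
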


For the special case of $\mathsf{f}(t)=t\ln(t)$,  $D_{\mathsf f}(\cdot\|\cdot)$ reduces to the KL divergence. For the special case of $\mathsf f_{\alpha}(t)=t^{\alpha}-1$ for $\alpha\geq 1$, the
$\mathsf f_{\alpha}$-divergence can be written as 
\[
D_{\mf_{\alpha}}(\mu\|\nu):=\int \left(\frac{d\mu}{d\nu}\right)^{\alpha}d\nu(x)-1
\]
Renyi's divergence of order $\alpha$ can be derived by $D_\alpha(\mu\|\nu):=\frac{1}{\alpha-1}\ln(1+D_{\mf_{\alpha}}(\mu\|\nu))$. In particular, we have
$D_2(\mu\|\nu)=\ln\left(1+\chi^{2}(\mu\|\nu)\right)$ where  $\chi^{2}$-divergence is defined as \begin{align}\label{def:chi-squared}\chi^{2}(\mu\|\nu)=\mathbb{E}_{\nu}\left(\frac{d\mu}{d\nu}-1\right)^{2}.
\end{align}

\subsection{Supermodular $\mf$-divergences}\label{class_conv_func_F}
\begin{definition}
Given a convex function $\mf$ with $\mf(1)=0$, we say that the $D_\mf$ is super-modular if for any joint distribution $p_{X_1,X_2,X_3}$ and any product distribution $q_{X_1}q_{X_2}q_{X_3}$ on arbitrary alphabets $\mathcal{X}_1, \mathcal{X}_2, \mathcal{X}_3$ we have
\begin{align}D_\mf(p_{X_1X_2X_3}\|q_{X_1}q_{X_2}q_{X_3})+D_\mf(p_{X_3}\|q_{X_3})\geq D_\mf(p_{X_1, X_3}\|q_{X_1}q_{X_3})+D_\mf(p_{X_2, X_3}\|q_{X_2}q_{X_3}).\label{def-sub-mod}
\end{align}
\end{definition}

\begin{remark}
The reason for using the term \emph{super-modularity} is that \eqref{def-sub-mod} implies that the   function $g:2^{[n]}\mapsto \mathbb{R}$ defined as 
\[g(A)=D_\mf(p_{X_{A}}\|q_{X_A}), \qquad \forall A\subset [n]\]
is a super-modular set function for any product distribution $q_{X^n}=\prod_{i=1}^n q_{X_i}$ and any arbitrary joint distribution $p_{X^n}$. Here, we have $X_A=(X_i: i\in A)$.
\end{remark}
\begin{corollary}
Let $P_{X_1, \cdots, X_n, W}$ and $Q_{X_1, \cdots, X_n, W}$ be two distributions on  $n+1$ random variables. Assume that $Q_{X_1, \cdots, X_n, W}=Q_{X_1}Q_{X_2}\cdots Q_{X_n}Q_W$. Then, by induction on $n$, \eqref{def-sub-mod} implies that
\begin{align}D_\mf(P_{X^n,W}\|Q_{X^n,W})-D_\mf(P_{W}\|Q_{W})\geq \sum_{i=1}^n\left[D_\mf(P_{X_i,W}\|Q_{X_i,W})-D_\mf(P_{W}\|Q_{W})\right].
\end{align}
\label{corr:def-sub-mod22}
\end{corollary}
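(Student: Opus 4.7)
The plan is to prove the claim by induction on $n$, with the base case $n=1$ being a tautology (both sides equal $D_\mf(P_{X_1,W}\|Q_{X_1,W}) - D_\mf(P_W\|Q_W)$), and with the inductive step driven by a single application of the defining inequality \eqref{def-sub-mod} in which I group the first $n-1$ variables together.

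For the inductive step, suppose the claim holds for $n-1$. To prove it for $n$, I would apply the supermodularity inequality \eqref{def-sub-mod} with the three ``random variables'' chosen as the block $\tilde X_1 := (X_1,\ldots,X_{n-1})$ (taking values in the product alphabet $\mathcal{X}_1\times\cdots\times\mathcal{X}_{n-1}$), $\tilde X_2 := X_n$, and $\tilde X_3 := W$. The hypothesis that $Q_{X^n,W}$ is a full product ensures that, with respect to this grouping, the reference distribution also factors as $Q_{\tilde X_1}Q_{\tilde X_2}Q_{\tilde X_3}$ with $Q_{\tilde X_1}=Q_{X_1}\cdots Q_{X_{n-1}}$, so the definition is applicable. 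This yields
\begin{align*}
D_\mf(P_{X^n,W}\|Q_{X^n,W}) + D_\mf(P_W\|Q_W) \geq D_\mf(P_{X^{n-1},W}\|Q_{X^{n-1},W}) + D_\mf(P_{X_n,W}\|Q_{X_n,W}).
\end{align*}
Subtracting $2D_\mf(P_W\|Q_W)$ from both sides and regrouping gives
\begin{align*}
D_\mf(P_{X^n,W}\|Q_{X^n,W}) - D_\mf(P_W\|Q_W) \geq \bigl[D_\mf(P_{X^{n-1},W}\|Q_{X^{n-1},W}) - D_\mf(P_W\|Q_W)\bigr] + \bigl[D_\mf(P_{X_n,W}\|Q_{X_n,W}) - D_\mf(P_W\|Q_W)\bigr].
\end{align*}
Applying the inductive hypothesis to the first bracketed term replaces it by $\sum_{i=1}^{n-1}[D_\mf(P_{X_i,W}\|Q_{X_i,W}) - D_\mf(P_W\|Q_W)]$, and combining the two brackets completes the induction.

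The only real subtlety, and what I expect to be the main (minor) obstacle, is the justification that the definition of supermodularity can legitimately be used with vector-valued coordinates: the definition is stated for arbitrary alphabets $\mathcal{X}_1,\mathcal{X}_2,\mathcal{X}_3$, so grouping the first $n-1$ variables into one coordinate is permitted, and the product structure needed on the reference measure carries over automatically from the fact that $Q_{X^n,W}$ is a full product. Once this is noted, the rest is bookkeeping.
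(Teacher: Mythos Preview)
Your proof is correct and is exactly the induction the paper has in mind: the corollary is stated with the phrase ``by induction on $n$'' precisely because one groups $X^{n-1}$ into a single coordinate, applies \eqref{def-sub-mod}, and invokes the inductive hypothesis. The only point worth noting, which you already flagged, is that \eqref{def-sub-mod} is stated for arbitrary alphabets, so the grouping is legitimate.
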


To proceed, we need the following definition: 

\begin{definition}\label{def:class-C}
Define $\mathcal F$ be the class of convex functions $\mathsf f(t)$ on $[0,\infty]$ that 
$\mathsf f(1)=0$, $\mf^{''}$ is strictly positive, and $1/\mathsf f''$ is concave. 
\end{definition}
The above class of convex functions is important because it makes $\Phi$-entropy  subadditive \cite[Theorem 14.1]{boucheron2013concentration}. Alternative equivalent definitions for the class $\mathcal F$ are given in \cite{boucheron2013concentration}.

\begin{remark}
This class of convex functions appears in other contexts \cite{beigi2018phi}\cite[Appendix B]{mojahedian2019correlation}. For instance, it makes the definition of $\Phi$-entropic measures of correlation possible \cite{beigi2018phi}.
\end{remark}

\begin{proposition}\label{prop_super-modular_D_f}
$D_\mf$ is super-modular for any function $\mf\in\mathcal{F}$. 
\end{proposition}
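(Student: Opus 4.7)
The plan is to recast the super-modularity inequality as an assertion about the $\mf$-entropy $H_\mf(Y) := \mathbb{E}[\mf(Y)] - \mf(\mathbb{E}Y)$ of a likelihood ratio, and then deduce it from the subadditivity of $\mf$-entropy for the class $\mathcal{F}$ (Theorem 14.1 of \cite{boucheron2013concentration}). Since $\mf(1)=0$ and $\mathbb{E}_Q[dP/dQ]=1$, any $\mf$-divergence can be rewritten as $D_\mf(P\|Q) = H_\mf(dP/dQ)$. Setting $Z := \frac{dP_{X_1X_2X_3}}{dQ_{X_1}\,dQ_{X_2}\,dQ_{X_3}}$, the fact that $Q$ is a product measure gives, via the tower property, $\mathbb{E}_Q[Z \mid X_A] = dP_{X_A}/dQ_{X_A} =: Z_A$ for every $A\subseteq \{1,2,3\}$. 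Hence each of the four $D_\mf$ terms in the super-modularity inequality becomes the $\mf$-entropy of a conditional expectation of $Z$.

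A direct calculation yields the chain-rule identity
$$H_\mf(Y) \;=\; \mathbb{E}_{X_3}\bigl[H_\mf(Y \mid X_3)\bigr] \;+\; H_\mf\bigl(\mathbb{E}_Q[Y\mid X_3]\bigr),$$
valid for any non-negative $Y$. Applying it to $Y \in \{Z, Z_{\{1,3\}}, Z_{\{2,3\}}\}$ and using $\mathbb{E}_Q[Z_{\{1,3\}} \mid X_3] = \mathbb{E}_Q[Z_{\{2,3\}} \mid X_3] = Z_{\{3\}}$, the target super-modularity inequality reduces, after the $H_\mf(Z_{\{3\}})$ terms cancel, to the conditional inequality
$$H_\mf(Z \mid X_3 = x_3) \;\geq\; H_\mf(Z_{\{1,3\}} \mid X_3 = x_3) \;+\; H_\mf(Z_{\{2,3\}} \mid X_3 = x_3)$$
for $Q_{X_3}$-almost every $x_3$. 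It therefore suffices to establish this pointwise.

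Fix such an $x_3$ and define $W(X_1,X_2) := Z(X_1,X_2,x_3)$. Under $Q$, $X_1$ and $X_2$ are independent, and $\mathbb{E}_{X_2}W = Z_{\{1,3\}}(\cdot,x_3)$, $\mathbb{E}_{X_1}W = Z_{\{2,3\}}(\cdot,x_3)$, so the pointwise inequality becomes
$$H_\mf(W) \;\geq\; H_\mf(\mathbb{E}_{X_2}W) + H_\mf(\mathbb{E}_{X_1}W).$$
A one-line algebraic rearrangement, using $\mathbb{E}[\mathbb{E}_{X_1}W] = \mathbb{E}[\mathbb{E}_{X_2}W] = \mathbb{E}[W]$, shows this is equivalent to
$$H_\mf(W) \;\leq\; \mathbb{E}_{X_1}\bigl[H_\mf(W\mid X_1)\bigr] + \mathbb{E}_{X_2}\bigl[H_\mf(W\mid X_2)\bigr],$$
which is precisely the two-variable subadditivity of $\mf$-entropy for $\mf \in \mathcal{F}$ (Theorem 14.1 of \cite{boucheron2013concentration} applied with $n=2$). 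This closes the proof.

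The main obstacle is notational bookkeeping: correctly identifying the marginal Radon--Nikodym derivatives $dP_{X_A}/dQ_{X_A}$ with the conditional expectations $\mathbb{E}_Q[Z \mid X_A]$ under the product reference measure, and verifying the equivalence between the ``subadditivity'' form used in the $\Phi$-Sobolev literature and the ``super-additivity of marginals'' form $H_\mf(W) \geq H_\mf(\mathbb{E}_{X_1}W) + H_\mf(\mathbb{E}_{X_2}W)$ actually needed here. Both translations are mechanical, but easy to get backwards.
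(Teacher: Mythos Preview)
Your proof is correct and follows essentially the same approach as the paper: both identify $D_\mf(P\|Q)$ with the $\Phi$-entropy of the likelihood ratio, fix $x_3$, and apply the two-variable subadditivity of $\Phi$-entropy from \cite[Theorem~14.1]{boucheron2013concentration}, then average over $x_3$. Your explicit chain-rule decomposition and the verification that the ``super-additivity of marginals'' inequality $H_\mf(W)\geq H_\mf(\mathbb{E}_{X_2}W)+H_\mf(\mathbb{E}_{X_1}W)$ is algebraically equivalent to the subadditivity inequality are just a more detailed rendering of the same computation the paper carries out term by term.
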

The proof is given in Appendix \ref{proof_prop_super_modular_D_f}. The above proposition provides a sufficient condition for super-modularity. We do not know if $\mf\in\mathcal{F}$ is also a necessary condition for $D_\mf$ being super-modular.
 
The functions
$\mf(t)=t\ln t$ and $\mf(t)=\frac{1}{\alpha-1}(t^{\alpha}-1)$ for $\alpha\in(1,2]$ are two examples in class $\mathcal{F}$. The following lemma (proven in Appendix \ref{section_class_F_proofs}) shows that $t\ln t$ and $t^2$ are ``extreme" members of  $\mathcal{F}$ in the sense that the growth rate of  any function $\mf(t)\in\mathcal{F}$ (as $t$ converges to infinity) is no smaller than $t\ln(t)$ and no larger than $t^2$.

\begin{lemma}\label{newlmma3}
Take an arbitrary  $\mf(t)\in\mathcal{F}$. Then, $t\mapsto t\mf''(t)$ is a non-decreasing function and
\begin{align}
\lim_{t\to \infty}\frac{\mathsf f(t)}{t\ln t}&=\lim_{t\to \infty}t\mf''(t)>0.\label{const1tlogt}\end{align}
Next, $t\mapsto \mf''(t)$ is a non-negative and non-increasing function and
\begin{align}
\lim_{t\to \infty}\frac{\frac{1}{2}t^2}{\mathsf f(t)}&=\lim_{t\to \infty}\frac{1}{\mf''(t)}>0.\label{const2t2}
\end{align}
\end{lemma}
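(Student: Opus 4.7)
The plan is to reduce every assertion to a statement about the auxiliary function $g(t):=1/\mf''(t)$, which by hypothesis is positive and concave on $(0,\infty)$. I would first establish the two monotonicity claims, and then derive both limit identities from them via two applications of the strong L'Hopital rule (the version that only requires the denominator to tend to infinity and the ratio of derivatives to have a limit).

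For the first monotonicity, I would differentiate: $(t\mf''(t))'=(t/g(t))'=(g(t)-tg'(t))/g(t)^{2}$. Setting $h(t):=g(t)-tg'(t)$, concavity of $g$ gives $h'(t)=-tg''(t)\geq 0$, so $h$ is non-decreasing on $(0,\infty)$; geometrically $h(t)$ is the $y$-intercept of the tangent to $g$ at $t$, which by concavity lies above $g(0^{+})$, and $g(0^{+})\geq 0$ since $g>0$. Hence $h\geq 0$, so $t\mf''(t)$ is non-decreasing. For the second monotonicity, I would argue that a concave function on $[0,\infty)$ that stays non-negative must itself be non-decreasing: if $g'(t_{0})<0$ for some $t_{0}$, concavity forces $g'(t)\leq g'(t_{0})<0$ for all $t\geq t_{0}$, which on integration drives $g\to-\infty$, contradicting $g\geq 0$. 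Thus $g$ is non-decreasing and $\mf''=1/g$ is non-increasing.

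With the monotonicities in hand the limits follow cleanly. Since $t\ln t\to\infty$, L'Hopital gives
\begin{equation*}
\lim_{t\to\infty}\frac{\mf(t)}{t\ln t}=\lim_{t\to\infty}\frac{\mf'(t)}{\ln t+1}=\lim_{t\to\infty}\frac{\mf''(t)}{1/t}=\lim_{t\to\infty}t\mf''(t),
\end{equation*}
and the last limit exists by step one and is at least $\mf''(1)>0$. For the second identity I would first verify $\mf(t)\to\infty$: the tangent-line bound $g(t)\leq at+b$ with $a,b\geq 0$ (available from concavity and $g\geq 0$) yields $\mf''(t)\geq 1/(at+b)$, which forces $\mf'(t)\to\infty$ and hence $\mf(t)\to\infty$. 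Then L'Hopital reduces $(t^{2}/2)/\mf(t)$ to $\lim_{t\to\infty}t/\mf'(t)=\lim_{t\to\infty}1/\mf''(t)$, a limit that exists by step two and is at least $1/\mf''(1)>0$.

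The main subtlety I expect is the positivity of $h=g-tg'$ in step one: monotonicity of $h$ alone only shows $h(t)\geq h(0^{+})$, and one must correctly identify this boundary value with $g(0^{+})\geq 0$ through the tangent-line interpretation of concavity. Everything else reduces to routine calculus given the monotonicity structure.
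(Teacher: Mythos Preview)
Your proposal is correct and follows essentially the same outline as the paper: establish the two monotonicity claims, deduce $\mf(t)\to\infty$, and then apply L'Hopital twice for each limit identity. The paper simply cites a textbook exercise (Borwein--Zhu) for the facts $\mf''+t\mf'''\geq 0$ and $\mf'''\leq 0$, whereas you derive the monotonicity of $t\mf''(t)$ and of $\mf''(t)$ directly from the concavity and positivity of $g=1/\mf''$ via the tangent--line $y$-intercept and the ``non-negative concave $\Rightarrow$ non-decreasing'' observation; this makes your argument self-contained but is otherwise equivalent. The only cosmetic difference in the L'Hopital steps is that the paper replaces $t\ln t$ by $t\ln t - t$ in the first denominator to get $\ln t$ as the derivative, while you use $t\ln t$ directly and get $\ln t+1$; both are fine.
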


\subsubsection{Application}
As an application for supermodular $\mf$-divergence, we extend Sanov's upper bound \cite{csiszar2006simple}. Upper bound part of Sanov's theorem states that for $n$ i.i.d. random variables distributed according to $P_{X}$, we have the following upper bound:
\[
\mathbb{P}\left[\frac{1}{n}\sum_{i=1}^{n}X_{i}\ge \delta\right]\le \exp\left(-n\inf_{Q_{X}:\,\mathbb{E}_{Q}[X]\ge \delta}D(Q_{X}\|P_{X})\right).
\]
Our extension of Sanov's upper bound is two-fold: firstly, we generalize it to the sum of dependent random variables and secondly we replace the KL-divergence with the $\mf$-divergence.

\begin{theorem}\label{thm2nsa}
Take two arbitrary sets $\mathcal{X}$ and $\mathcal{W}$. Let $S=(X_1, \cdots, X_n)$ be a sequence of $n$ i.i.d.\ random variables according to $P_X$ and $W\sim P_{W}$ be independent of $S$. Let $\ell(\cdot,\cdot):\mathcal{X}\times \mathcal{W}\to \mathbb{R}$ be a measurable function. Let 
\begin{align}
\gamma\triangleq\mathbb{P}\left[\frac{1}{n}\sum_{i=1}^{n}\ell(X_{i},W)\ge \delta\right].\label{eqnGammaDef}
\end{align}
Then  for any $\mf\in\mathcal{F}$,
\begin{align}\label{gen_sanov_f_div}
    \gamma\mf\left(\frac{1}{\gamma}\right)+(1-\gamma)\mf(0)\ge n\cdot\inf_{\substack{Q:\\\mathbb{E}_{Q_{X,W}}[\ell(X,W)]\ge \delta}}\left\{D_{\mf}(Q_{X,W}\|P_{X}Q_{W})-\frac{n-1}{n}D_{\mf}(Q_{W}\|P_{W})\right\}.
\end{align}
\end{theorem}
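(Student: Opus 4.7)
The plan is to analyze the tilted distribution $R := P_{X^n,W}(\cdot \mid E)$, where $E := \{\tfrac{1}{n}\sum_{i=1}^n \ell(X_i,W) \ge \delta\}$, so that $dR/d(P_X^n P_W) = \mathbf{1}_E/\gamma$. A direct computation identifies the left-hand side of \eqref{gen_sanov_f_div} with $D_\mf(R \,\|\, P_X^n P_W) = \gamma\mf(1/\gamma)+(1-\gamma)\mf(0)$. Because $E$ is symmetric in $X_1,\ldots,X_n$, the distribution $R$ is exchangeable in the $X_i$'s; writing $R_{X,W}$ for the common single-coordinate marginal, the constraint $\mathbb{E}_R[\tfrac{1}{n}\sum_i \ell(X_i,W)] \ge \delta$ yields $\mathbb{E}_{R_{X,W}}[\ell(X,W)] \ge \delta$, so $R_{X,W}$ is feasible in the infimum of \eqref{gen_sanov_f_div}.

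The super-modularity ingredient enters via Corollary~\ref{corr:def-sub-mod22} applied with the product reference $P_X^n R_W$, where $R_W$ is the $W$-marginal of $R$. Using exchangeability together with $D_\mf(R_W\|R_W)=0$, the corollary collapses to
\begin{equation*}
 D_\mf(R \,\|\, P_X^n R_W) \;\ge\; n\, D_\mf(R_{X,W} \,\|\, P_X R_W).
\end{equation*}
It then remains to bound $D_\mf(R\|P_X^n R_W)$ above by the LHS, which I view as the crux of the argument. Writing $\alpha(w) := \mathbb{P}_{P_X^n}[E \mid W=w]$, we have $\gamma = \mathbb{E}_{P_W}[\alpha(W)]$, $R_W = (\alpha/\gamma)P_W$, and $dR/d(P_X^n R_W) = \mathbf{1}_E/\alpha(W)$. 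A direct computation then gives
\begin{equation*}
 D_\mf(R \,\|\, P_X^n R_W) = \tfrac{1}{\gamma}\,\mathbb{E}_{P_W}\!\left[h(\alpha(W))\right], \qquad D_\mf(R \,\|\, P_X^n P_W) = \tfrac{h(\gamma)}{\gamma},
\end{equation*}
where $h(\alpha) := \alpha^2 \mf(1/\alpha) + \alpha(1-\alpha)\mf(0)$. The desired inequality $D_\mf(R\|P_X^n R_W) \le D_\mf(R\|P_X^n P_W)$ thus reduces to Jensen's inequality $\mathbb{E}_{P_W}[h(\alpha(W))] \le h(\mathbb{E}_{P_W}[\alpha(W)]) = h(\gamma)$, which requires $h$ to be concave on $(0,1]$.

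Concavity of $h$ is where the assumption $\mf\in\mathcal{F}$ is used. I will verify it by computing $h''(1/t) = \phi(t)$, where $\phi(t) := 2\mf(t) - 2t\mf'(t) + t^2\mf''(t) - 2\mf(0)$; one checks $\phi(0)=0$ and, after integration by parts, $\phi(t) = t^2\mf''(t) - 2\int_0^t s\,\mf''(s)\,ds$. Since Lemma~\ref{newlmma3} asserts that $\mf''$ is non-negative and non-increasing on $(0,\infty)$ for $\mf\in\mathcal{F}$, the bound $s\,\mf''(s)\ge s\,\mf''(t)$ on $[0,t]$ integrates to $\phi(t)\le 0$, and hence $h''\le 0$. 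Chaining all the bounds yields
\begin{equation*}
 \gamma\mf(1/\gamma)+(1-\gamma)\mf(0) \;\ge\; n\, D_\mf(R_{X,W}\|P_X R_W) \;\ge\; n\!\left[D_\mf(R_{X,W}\|P_X R_W) - \tfrac{n-1}{n} D_\mf(R_W\|P_W)\right],
\end{equation*}
and the feasibility of $R_{X,W}$ (from paragraph one) completes the proof.
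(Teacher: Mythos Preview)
Your proof is correct. It shares the same opening move as the paper---identify the left-hand side with $D_\mf(R\|P_X^nP_W)$ for the conditioned law $R=P_{(S,W)\mid E}$---but then diverges in how super-modularity is invoked.

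The paper applies Corollary~\ref{corr:def-sub-mod22} with the product reference $P_X^nP_W$, obtaining
\[
D_\mf(R\|P_X^nP_W)\ \ge\ \sum_i D_\mf(R_{X_i,W}\|P_XP_W)-(n-1)D_\mf(R_W\|P_W),
\]
and then averages the $n$ summands via joint convexity of $D_\mf$ to reach a single-letter $\bar R_{X,W}$. You instead apply the same corollary with the reference $P_X^nR_W$; because $D_\mf(R_W\|R_W)=0$ and $R$ is exchangeable, this collapses immediately to $D_\mf(R\|P_X^nR_W)\ge nD_\mf(R_{X,W}\|P_XR_W)$, with no convexity step needed. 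The price you pay is the extra inequality $D_\mf(R\|P_X^nP_W)\ge D_\mf(R\|P_X^nR_W)$, which you reduce to concavity of $h(\alpha)=\alpha^2\mf(1/\alpha)+\alpha(1-\alpha)\mf(0)$. That concavity is precisely the computation carried out in the paper's proof of Theorem~\ref{property_f_entropy}(i) (concavity of $H^{CKZ}_\mf$), and your verification via $\phi(t)=t^2\mf''(t)-2\int_0^t s\mf''(s)\,ds\le 0$ using the monotonicity of $\mf''$ from Lemma~\ref{newlmma3} is sound (for $\mf\in\mathcal F$ one has $\lim_{s\to 0}s\mf'(s)=0$ and $\lim_{s\to 0}s^2\mf''(s)=0$, so the boundary term in the integration by parts vanishes).

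What your route buys: the single-letter bound comes out directly with second argument $P_XR_W$, which matches the form $D_\mf(Q_{X,W}\|P_XQ_W)$ in the theorem statement; the trivial drop of $-(n-1)D_\mf(R_W\|P_W)\le 0$ then places $R_{X,W}$ as a feasible point in the infimum. The paper's route is shorter (no concavity detour) but its super-modularity step literally produces $D_\mf(\bar R_{X,W}\|P_XP_W)$, so matching the $P_XQ_W$ form in \eqref{gen_sanov_f_div} is less immediate there. Both arguments rely on $\mf\in\mathcal F$ in two places: once for super-modularity (Proposition~\ref{prop_super-modular_D_f}) and once for the single-letterization step (convexity of $D_\mf$ in the paper, concavity of $h$ in yours).
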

The proof is given in Appendix \ref{proof_of_thm2nsa}. 

\begin{remark} Note that random variables $\ell(X_i,W), i=1,2,\cdots, n$ are dependent in general, since they all depend on $W$. If we set $\mf(x)=x\ln(x)$ and $W$ to be constant,  the above theorem reduces to Sanov's theorem for the sum of independent random variables. If $W$ is not a constant and $\mf(x)=x\ln(x)$, the above bound reduces to 
\begin{align}
    \ln\left(\frac{1}{\gamma}\right)\ge n\cdot\inf_{\substack{Q:\\\mathbb{E}_{Q_{X,W}}[\ell(X,W)]\ge \delta}}\left\{D(Q_{X,W}\|P_{X}Q_{W})-\frac{n-1}{n}D(Q_{W}\|P_{W})\right\}.\label{eqnfSv23}
\end{align}
On the other hand, Chernoff's bound is
\begin{align}
    \mathbb{P}\left[\frac{1}{n}\sum_{i=1}^{n}\ell(X_{i},W)\ge \delta\right]&=\mathbb{E}_{W}\left[\mathbb{P}\left[\frac{1}{n}\sum_{i=1}^{n}\ell(X_{i},W)\ge \delta\mid W\right]\right]\nonumber\\
    &\le e^{-n\alpha\delta}\mathbb{E}_{W}\left[\left(\mathbb{E}[e^{\alpha \ell(X,W)}|W]\right)^{n}\right].
\end{align}
Hence, Chernoff's bound yields that $\mathbb{P}[\frac{1}{n}\sum_{i=1}^{n}\ell(X_{i},W)\ge \delta]\le e^{-nE_{CH}}$ where
\begin{align}
E_{CH}=\sup_{\alpha>0}\left\{\alpha\delta-\frac{1}{n}\ln\mathbb{E}_{W}\left[\left(\mathbb{E}[e^{\alpha \ell(X,W)}|W]\right)^{n}\right]\right\}.\label{E_CH}
\end{align}
We show in Appendix \ref{proof_remark 004} that the bound in \eqref{eqnfSv23} matches Chernoff's bound when $X$ and $W$ are discrete random variables on the finite alphabet sets.
\label{remark 004}
\end{remark}

\begin{example} We next show the benefit of using a function $\mf(x)\neq x\ln(x)$ in a hypothesis testing problem. Consider the following hypothesis testing problem for some $\eta>0$: 
\[
\begin{cases}
  X\sim Bern(1/2+\eta)  & \emph{under hypothesis } H_0, \\
  X\sim Bern(1/2-\eta) &\emph{under hypothesis }  H_1.
\end{cases}
\]
Note that under $H_0$, we have
\begin{align*}
\mathbb{E}_{H_0}\left[\sum_i X_i\right]&=\frac{n}{2}+n\eta, \qquad \mathsf{Var}_{H_0}\left[\sum_i X_i\right]=n\left(\frac14-\eta^2\right).
\end{align*}
Under $H_0$ we expect $\sum_i X_i$ to be around its mean plus a constant times its standard deviation with high probability. On the other hand, 
\begin{align*}
\mathbb{E}_{H_1}\left[\sum_i X_i\right]&=\frac{n}{2}-n\eta.
\end{align*}
Our goal is to choose $\eta$ such that the probability of error does not vanish as $n$ tends to infinity. If we choose $\eta=\dfrac{c}{2\sqrt{n}}$ for some constant $c$, then under both $H_0$ and $H_1$, $\sum_i X_i$ will be $n/2\pm\mathcal O(\sqrt{n})$ and we will have overlap between the ranges of $\sum_i X_i$  under the two hypotheses. Hence we expect that the error probability does not vanish even if $n$ tends to infinity. Note that for $\eta=\mathcal{O}(\frac{1}{\sqrt{n}})$, the distribution on $X$ under $H_0$ and $H_1$ are very close to each other. From a practical perspective, this hypothesis testing problem is relevant when we want to detect a very weak background signal (as in the low probability of detection communication, satellite communication, or underwater communication).

To study our hypothesis testing problem, we consider uniform prior on $H_0$ and $H_1$. Therefore, the optimal decision rule is to compare $\frac{1}{n}\sum_{i}X_i$ with $1/2$ and the error probability can be written as
\[\gamma=\mathbb{P}_{H_1}\left[\frac{1}{n}\sum_{i=1}^{n}X_{i}\ge \frac{1}{2}\right].\]
Next, we show that Sanov's bound is insufficient in this scenario.

Let
$\mf(x)=x^{2}-x$. Then,
\[D_\mf(Q\|P)=\chi^2(Q\|P).
\]
Let $W$ be a constant random variable and $\ell(x,w)=x$. Then, from \eqref{gen_sanov_f_div} we get

\begin{align}
    \frac{1}{\gamma}-1&\ge n\cdot\inf_{\substack{Q:\\\mathbb{E}_{Q}[X]\ge 1/2}}\chi^2(Q_{X}\|P_{X})=n\left(\frac{1}{4(1/2-\eta)}+\frac{1}{4(1/2+\eta)}-1\right)
    \\&=\frac{4n\eta^2}{1-4\eta^2}
\end{align}
or
\[\gamma\leq U_{x^{2}-x}:=\frac{1}{1+\frac{4n\eta^2}{1-4\eta^2}}.\]
From Sanov's theorem we get
\[\gamma\leq U_{x\ln x}:=\exp\left(-n\left(\frac12\ln(\frac{1}{1-4\eta^2})\right)\right)=(1-4\eta^2)^{n/2}.\]
We will show that for a finite number of samples $n\ge1$ and $\eta\in(0,\frac{1}{2\sqrt{n}})$, we have $U_{x^{2}-x}< U_{x\ln x}$. Assume that $\eta=\frac{c}{2\sqrt{n}}$ for some $c<1$. Then, we get $ U_{x\ln x}=(1-\frac{c^{2}}{n})^{n/2}\ge 1-\frac{c^{2}}{2}$ for $n\ge 2$ since $(1-x)^n\geq 1-nx$ for $x\in[0,1]$ and $n\geq 1$. On the other hand, we have
\[
U_{x^{2}-x}=\frac{1}{1+\frac{c^{2}}{1-\frac{c^{2}}{n}}}=\frac{n-c^{2}}{(1+c^{2})n-c^{2}}=\frac{1}{1+c^{2}}-\frac{c^{2}-\frac{c^{2}}{1+c^{2}}}{(1+c^{2})n-c^{2}}\le \frac{1}{1+c^{2}}.
\]
As $c\in(0,1)$, we get $\frac{1}{1+c^{2}}<1-c^{2}/2$ and the claim is established.

\end{example}

\section{$\mf$-information}\label{sec: mutual-f-information}

The following two proposals for defining a mutual $\mathsf{f}$-information in terms of the $\mathsf f$-divergence are known:
The first 
 is (see 
\cite{ziv1973functionals}
~\cite[Eq. 3.10.1]{cohen1998comparisons})\footnote{A further generalization is given in \cite{zakai1975generalization}.}
\begin{align}
    I^{CKZ}_{\mathsf f}(A;B):= D_{\mathsf f}(P_{AB}\|P_A\times P_B),\label{IfCKZ}
\end{align}
and has been studied in the literature (e.g. see
\cite{merhav2011data,tridenski2015ziv,hsu2018generalizing,cohen1993relative}). 
Another definition is given in  \cite[Eq. 79]{polyanskiy2010arimoto}:
\begin{align}I^{PV}_{\mathsf f}(A;B):= \min_{Q_B}D_{\mathsf f}(P_{AB}\|P_A\times Q_B),\label{IfPV}
\end{align}
where the minimum is over all $Q_B$ such that $P_B\ll Q_B$.
Note that $I_{\mathsf f}^{CKZ}(A;B)=I_{\mathsf f}^{CKZ}(B;A)$ is symmetric but $I_{\mathsf f}^{PV}(A;B)$ is not symmetric in general. Moreover, when $D_{\mathsf f}(\cdot\|\cdot)$ is the KL divergence, both of these $\mathsf f$-informations reduce to Shannon's mutual information. 
\begin{example}
Let $\mf(x)=x^{\alpha}-1$ for $\alpha\in[1,2]$. Then, for random variables $A$ and $B$, we have
\begin{align}
    I^{PV}_{\mf}(A;B)&=\min_{Q_{B}}e^{(\alpha-1)D_{\alpha}(P_{AB}\|P_{A}Q_{B})}-1=e^{(\alpha-1)\min_{Q_{B}}D_{\alpha}(P_{AB}\|P_{A}Q_{B})}-1\\
&=e^{(\alpha-1)I_{\alpha}(A;B)}-1=\left(\mathbb{E}_{P_{B}}\left[\mathbb{E}_{P_{A}}\left(\frac{dP_{B|A}}{dP_{B}}\right)^{\alpha}\right]^{1/\alpha}\right)^{\alpha}-1
\end{align}
where $I_{\alpha}$ is the $\alpha$-mutual information  according to Sibson's proposal  \cite{verdu2015alpha}.
\end{example}

There is yet another definition for mutual $\mathsf f$-information in \cite[Appendix B]{mojahedian2019correlation} as follows:
\begin{align}
    I^{\text{MBGYA}}_{\mf}(A;B)=\min_{Q_B}\left\{D_{\mathsf f}(P_{AB}\|P_A\times Q_B)-D_{\mathsf f}(P_{B}\|Q_{B})\right\}\label{rmkne1}
\end{align}
where the minimum is over all $Q_B$ such that $P_B\ll Q_B$. It is clear from the definitions above that 
\[I_\mf^{CKZ}(A;B)\geq I_\mf^{PV}(A;B)\geq I_\mf^{MGBYA}(A;B).\]
\begin{example}
Let $\mathsf{f}_\alpha(t)=t^{\alpha}-1$ for $1\le \alpha\le 2$. Then it is shown in \cite[Theorem 33]{mojahedian2019correlation} that
\begin{align*}
I^{\text{MBGYA}}_{\mathsf f_\alpha}(A;B)=\frac{1}{\alpha-1}\left(\mathbb{E}_{P_{B}}\left[\mathbb{E}_{P_{A}}\left[\left(\frac{dP_{B|A}}{dP_{B}}\right)^{\alpha}-1\right]\right]^{\frac{1}{\alpha}}\right)^{\alpha}.
\end{align*}
When $\alpha=2$, interestingly this definition  coincides with \cite[Definition 1]{DBLP:conf/isit/IssaG18} (derived independently) in the context of exploration bias.
\end{example}

\subsection{Properties of mutual f-information}\label{f-information-appendix}
 \begin{definition}
 Let $\rho(X;Y)$ be a mapping that assigns a non-negative real number to arbitrary random variables $X$ and $Y$. We say that $\rho$ is a measure of dependence if it satisfies the faithfulness and data processing properties defined as follows:  
  we say that  $\rho(\cdot;\cdot)$  satisfies the \emph{faithfulness property} if $\rho(X;Y)=0$ if and only if $X$ and $Y$ are independent. We say that $\rho(\cdot,\cdot)$ satisfies the \emph{data processing property} if $\rho(X;Y)\ge \rho(W;Z)$ whenever $W-X-Y-Z$ forms a Markov chain.\label{defMC}
 \end{definition}
 
\begin{theorem}\label{dpi_f_information}\cite{polyanskiy2010arimoto, mojahedian2019correlation}
For any convex function $\mf$, $I^{PV}_{\mf}$, $I^{CKZ}_{\mf}$ are measures of dependence (see Definition \ref{defMC}). For any function $\mathsf f\in\mathcal{F}$ as defined in Definition \ref{def:class-C}, the mutual $\mathsf f$-information $I^{\text{MGBYA}}_{\mf}$ is a measure of dependence. 
\end{theorem}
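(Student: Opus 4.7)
The plan is to verify the two required properties (faithfulness and the data processing inequality) separately for each of the three mutual $\mf$-informations. The arguments for $I^{CKZ}_\mf$ and $I^{PV}_\mf$ rely only on the properties of $D_\mf$ collected in Theorem \ref{theorem_properties_D_f}, while the case of $I^{\text{MGBYA}}_\mf$ requires the additional structure of the class $\mathcal F$ through Proposition \ref{prop_super-modular_D_f}.

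\textbf{The $I^{CKZ}$ and $I^{PV}$ cases.} For $I^{CKZ}_\mf(A;B)=D_\mf(P_{AB}\|P_AP_B)$, faithfulness is immediate from the ``iff'' part of $D_\mf(P\|Q)=0\iff P=Q$. For DPI, given a Markov chain $W-X-Y-Z$, the product channel $P_{W|X}\otimes P_{Z|Y}$ maps $P_{XY}\mapsto P_{WZ}$ and, using the Markov structure, maps $P_XP_Y\mapsto P_WP_Z$; DPI of $\mf$-divergence then finishes. For $I^{PV}_\mf$, faithfulness is similar (take $Q_B=P_B$ when $A\perp B$; conversely any zero-value minimizer $Q_B^*$ yields $P_{AB}=P_AQ_B^*$ and hence $A\perp B$). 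I would obtain DPI by letting $Q_Y^*$ attain the minimum in $I^{PV}_\mf(X;Y)$, defining $Q_Z$ as the push-forward of $Q_Y^*$ through $P_{Z|Y}$, applying DPI of $D_\mf$ under the same product channel to get $D_\mf(P_{WZ}\|P_WQ_Z)\leq D_\mf(P_{XY}\|P_XQ_Y^*)$, and then infimizing on the left over the reference measure.

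\textbf{Easier parts of the $I^{\text{MGBYA}}$ case.} For $\mf\in\mathcal F$, non-negativity follows from DPI of $D_\mf$ under marginalization, $D_\mf(P_{AB}\|P_AQ_B)\geq D_\mf(P_B\|Q_B)$ for every $Q_B$. Faithfulness in the non-trivial direction uses the equality case of Jensen's inequality in this DPI: strict convexity of $\mf$ (from $\mf''>0$ for $\mf\in\mathcal F$) forces $P_{B|A}(b|a)/Q_B^*(b)$ to be independent of $a$ at a vanishing minimizer, giving $A\perp B$. DPI on the left factor of a Markov chain $W-X-Y-Z$ is straightforward because the minimization is only over the right-hand reference measure: for any fixed $Q_Y$, DPI of $D_\mf$ for the channel $(X,Y)\to(W,Y)$ gives $D_\mf(P_{WY}\|P_WQ_Y)\leq D_\mf(P_{XY}\|P_XQ_Y)$; subtracting the common $D_\mf(P_Y\|Q_Y)$ and minimizing over $Q_Y$ yields $I^{\text{MGBYA}}_\mf(W;Y)\leq I^{\text{MGBYA}}_\mf(X;Y)$.

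\textbf{The main obstacle: DPI on the right factor.} To prove $I^{\text{MGBYA}}_\mf(X;Z)\leq I^{\text{MGBYA}}_\mf(X;Y)$ for a Markov chain $X-Y-Z$, let $Q_Y^*$ be optimal for $I^{\text{MGBYA}}_\mf(X;Y)$, let $Q_Z$ be its push-forward through $P_{Z|Y}$, and let $Q_{Y|Z}(y|z):=Q_Y^*(y)P_{Z|Y}(z|y)/Q_Z(z)$ be the associated reverse kernel. Setting $r(x,y):=P_{Y|X}(y|x)/Q_Y^*(y)$ and $\Phi_\mf(T):=\mathbb{E}_X[\mf(T(X))]-\mf(\mathbb{E}_XT(X))$, a direct calculation recasts both quantities as expected $\mf$-entropies of the function $x\mapsto r(x,y)$:
\als{I^{\text{MGBYA}}_\mf(X;Y)&=\mathbb{E}_{Y\sim Q_Y^*}\bigl[\Phi_\mf(r(\cdot,Y))\bigr],\\ D_\mf(P_{XZ}\|P_XQ_Z)-D_\mf(P_Z\|Q_Z)&=\mathbb{E}_{Z\sim Q_Z}\bigl[\Phi_\mf\bigl(\mathbb{E}_{Y\sim Q_{Y|Z}(\cdot|Z)}r(\cdot,Y)\bigr)\bigr].}
For $\mf\in\mathcal F$, the functional $\Phi_\mf$ is convex on non-negative random variables; this convexity is exactly the content of (equivalently, an immediate consequence of) the super-modularity statement in Proposition \ref{prop_super-modular_D_f}. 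Applying Jensen's inequality to the inner expectation over $Q_{Y|Z}(\cdot|Z)$ and then integrating against $Z\sim Q_Z$ (using $\mathbb{E}_{Z\sim Q_Z}\mathbb{E}_{Y\sim Q_{Y|Z}}=\mathbb{E}_{Y\sim Q_Y^*}$) yields $D_\mf(P_{XZ}\|P_XQ_Z)-D_\mf(P_Z\|Q_Z)\leq I^{\text{MGBYA}}_\mf(X;Y)$; infimizing the left-hand side over reference distributions completes DPI on the right, and combining with DPI on the left finishes the proof. The delicate step is establishing the convexity of $\Phi_\mf$ from super-modularity of $D_\mf$; all remaining bookkeeping is routine.
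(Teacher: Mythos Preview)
The paper does not give its own proof of this theorem; it is simply attributed to the cited references \cite{polyanskiy2010arimoto, mojahedian2019correlation}. Your argument is correct and is essentially the one used in those references: the $I^{CKZ}$ and $I^{PV}$ cases follow directly from Theorem~\ref{theorem_properties_D_f}, and the $I^{\text{MGBYA}}$ case reduces DPI on the right factor to the convexity of the Jensen-gap functional $T\mapsto\Phi_\mf(T)=\mathbb{E}_X[\mf(T(X))]-\mf(\mathbb{E}_XT(X))$ for $\mf\in\mathcal F$.

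One small imprecision worth fixing: you describe the convexity of $\Phi_\mf$ as ``exactly the content of (equivalently, an immediate consequence of)'' the super-modularity in Proposition~\ref{prop_super-modular_D_f}. That is not quite right. Super-modularity of $D_\mf$ (subadditivity of $\Phi$-entropy over independent coordinates) and convexity of $\Phi_\mf$ are two distinct consequences of $\mf\in\mathcal F$; neither is a one-line corollary of the other. The paper in fact invokes the convexity of $\Phi_\mf$ separately, in the proof of Theorem~\ref{new properties_I_f}(ii), citing \cite[Exercise~14.2]{boucheron2013concentration} (joint convexity of $(z_1,\ldots,z_n)\mapsto\sum_i\lambda_i\mf(z_i)-\mf(\sum_i\lambda_iz_i)$). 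You should cite that fact directly rather than route through super-modularity.
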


\begin{theorem}\label{new properties_I_f}
Assume that $dP_{Y|X}(y,x):=\frac{dP_{YX}}{dP_{X}}(y,x)$ exists and we call it $p(y|x)$.
\begin{enumerate}[(i).]
\item For every convex function $\mathsf f$, $I^{PV}_{\mf}(X;Y)$ and $I^{\text{MBGYA}}_{\mf}(X;Y)$ are concave in $P_{X}$ when $p(y|x)$ is fixed.
Equivalently,
    $I^{PV}_{\mathsf f}(X;Y)\geq I^{PV}_{\mathsf f}(X;Y|Q)$ and $I^{MBGYA}_{\mathsf f}(X;Y)\geq I^{MBGYA}_{\mathsf f}(X;Y|Q)$ for any $p_{X,Q}p_{Y|X}$.

    \item  For any convex function $\mf$, $I^{CKZ}_{\mathsf f}(X;Y)$ and $I^{PV}_{\mathsf f}(X;Y)$ are convex functions of $p(y|x)$ when $P_{X}$ is fixed. Equivalently, $I^{CKZ}_{\mathsf f}(X;Y)\leq I_{\mathsf f}^{CKZ}(X;Y|Q)$ and
    $I^{PV}_{\mathsf f}(X;Y)\leq I_{\mathsf f}^{PV}(X;Y|Q)$ for any $p_{Q}p_{X}p_{Y|X,Q}$. The same convexity statement holds for $I^{MGBYA}_{\mathsf f}(X;Y)$ with the further assumption that $\mf\in\mathcal{F}$.
    \item Assume that $\mathsf f\in \mathcal{F}$. Let $X^{n}=(X_1, \cdots, X_n)$ be a sequence of $n$ independent random variables. Assume that $U$ and $X^{n}$ are arbitrarily distributed. Then,
\begin{align}
    I^{CKZ}_{\mathsf f}(X^n;U)&\geq \sum_i I^{CKZ}_{\mathsf f}(X_i;U),\\
    I^{MBGYA}_{\mathsf f}(X^n;U)&\geq \sum_i I^{MBGYA}_{\mathsf f}(X_i;U).
\end{align}
\item 
For $\mf\in\mathcal{F}$ and every $p(x,y)$, we have
\[I(X;Y)\cdot \left[\lim_{t\to \infty}t\mathsf f''(t)\right]\geq
I_\mf^{CKZ}(X;Y)\geq I_\mf^{PV}(X;Y)\geq I_\mf^{\text{MGBYA}}(X;Y).\]
\end{enumerate}

\end{theorem}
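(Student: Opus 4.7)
The plan proceeds by handling the four parts in order, using in turn a linearity/optimality argument, joint convexity augmented by super-modularity, Corollary \ref{corr:def-sub-mod22}, and a pointwise bound on $\mf$. I start from the identity $D_\mf(P_{XY}\|P_X Q_Y) = \mathbb{E}_{P_X}[D_\mf(P_{Y|X}\|Q_Y)]$, which holds because the Radon--Nikodym derivative collapses to $P_{Y|X}/Q_Y$. For \emph{part (i)}, this representation writes $I^{PV}_\mf(X;Y)$ as the pointwise infimum over $Q_Y$ of functions that are linear in $P_X$ when $P_{Y|X}$ is fixed, and the infimum of a family of linear functions is concave. For $I^{MBGYA}_\mf$, let $Q_Y^{\star,\lambda}$ be the minimizer at $P_X^{(\lambda)} = \lambda P_X^{(0)} + (1-\lambda)P_X^{(1)}$; the first term $\mathbb{E}_{P_X^{(\lambda)}}[D_\mf(P_{Y|X}\|Q_Y^{\star,\lambda})]$ is linear in $P_X^{(\lambda)}$, while $D_\mf(P_Y^{(\lambda)}\|Q_Y^{\star,\lambda})$ is convex in $P_X^{(\lambda)}$ by convexity of $D_\mf(\cdot\|Q_Y^{\star,\lambda})$. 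Splitting the linear part exactly, upper bounding the convex piece by its chord, and invoking suboptimality of $Q_Y^{\star,\lambda}$ against $Q_Y^{\star,0}$ and $Q_Y^{\star,1}$ yields the claimed concavity.

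For \emph{part (ii)}, convexity of $I^{CKZ}_\mf$ in $P_{Y|X}$ is immediate from joint convexity of $D_\mf$ since both $P_{XY}$ and $P_X P_Y$ are affine in $P_{Y|X}$ with $P_X$ fixed. For $I^{PV}_\mf$, I would test the minimum at $Q_Y^{(\lambda)}:=\lambda Q_Y^{\star,0} + (1-\lambda) Q_Y^{\star,1}$ and apply joint convexity of $D_\mf$ directly to the pair $(P_{XY}, P_X Q_Y)$. The $I^{MBGYA}_\mf$ case is the main obstacle because the subtracted $D_\mf(P_Y\|Q_Y)$ term points the wrong way under joint convexity alone; this is precisely what super-modularity (Proposition \ref{prop_super-modular_D_f}) is designed to compensate for. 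Applied to $(Q,X,Y)$ with $Q\perp X$ and product reference $P_Q P_X Q_Y$, super-modularity yields, for every $Q_Y$,
\[
D_\mf(P_{XY}\|P_X Q_Y) - D_\mf(P_Y\|Q_Y) \leq \mathbb{E}_Q\!\left[D_\mf(P_{XY|Q}\|P_X Q_Y) - D_\mf(P_{Y|Q}\|Q_Y)\right],
\]
and a careful minimization over $Q_Y$ (together with the min-swap relation between common and per-$q$ optimizations) produces the equivalent form $I^{MBGYA}_\mf(X;Y)\leq I^{MBGYA}_\mf(X;Y|Q)$, which is the convexity in $P_{Y|X}$.

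For \emph{part (iii)}, the inequalities are a direct application of Corollary \ref{corr:def-sub-mod22} with $W = U$. For $I^{CKZ}_\mf$, choose $Q_{X_i} = P_{X_i}$ and $Q_U = P_U$; all the $D_\mf(P_U\|P_U)=0$ terms cancel and the bound becomes exactly the claim. For $I^{MBGYA}_\mf$, let $Q_U^\star$ be the minimizer in $I^{MBGYA}_\mf(X^n;U)$ and apply the corollary with this $Q_U^\star$: each summand $D_\mf(P_{X_i U}\|P_{X_i}Q_U^\star) - D_\mf(P_U\|Q_U^\star)$ on the right is at least $I^{MBGYA}_\mf(X_i; U)$, because the latter is the minimum of exactly that expression over $Q_U$.

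For \emph{part (iv)}, the chain $I^{MBGYA}_\mf\leq I^{PV}_\mf \leq I^{CKZ}_\mf$ is immediate: $P_Y$ is a feasible choice in the minimization defining $I^{PV}_\mf$, and the subtracted $-D_\mf(P_Y\|Q_Y)\leq 0$ piece in $I^{MBGYA}_\mf$ only decreases it. The remaining $I^{CKZ}_\mf(X;Y)\leq c\cdot I(X;Y)$ with $c = \lim_{t\to\infty} t\mf''(t)$ follows from a pointwise bound on $\mf$. By Lemma \ref{newlmma3}, $t\mapsto t\mf''(t)$ is non-decreasing to $c$, hence $\mf''(t)\leq c/t$ for all $t>0$. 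Using $\mf(1)=0$ and the integral representation $\mf(t)=\mf'(1)(t-1)+\int_1^t (t-s)\mf''(s)\,ds$ (with the usual signed convention for $t<1$) yields
\[
\mf(t)\leq \mf'(1)(t-1) + c(t\ln t - t + 1),\qquad t>0.
\]
Substituting $t=dP/dQ$ and integrating against $Q$, the two linear-in-$t$ contributions vanish because $\mathbb{E}_Q[dP/dQ]=1$, leaving $D_\mf(P\|Q)\leq c\cdot D(P\|Q)$. Applying this to $P=P_{XY}$ and $Q=P_XP_Y$ delivers the desired bound.
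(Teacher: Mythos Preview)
Your proofs of parts (i), (iii), and (iv) are correct and essentially match the paper's approach; the same goes for the CKZ and PV cases of part (ii). The only genuine issue is the $I^{MBGYA}_\mf$ case of part (ii).

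Your super-modularity argument gives, for every fixed $Q_Y$,
\[
D_\mf(P_{XY}\|P_X Q_Y) - D_\mf(P_Y\|Q_Y) \;\leq\; \mathbb{E}_Q\bigl[D_\mf(P_{XY|Q}\|P_X Q_Y) - D_\mf(P_{Y|Q}\|Q_Y)\bigr].
\]
Minimizing over the common $Q_Y$ yields $I^{MBGYA}_\mf(X;Y)\leq \min_{Q_Y}\mathbb{E}_Q[\,\cdot\,]$. But $I^{MBGYA}_\mf(X;Y|Q)=\mathbb{E}_Q[\min_{Q_Y}(\,\cdot\,)]$ involves a \emph{per-$q$} minimization, and the min-swap relation you invoke goes the wrong way: $\min_{Q_Y}\mathbb{E}_Q[\,\cdot\,]\geq \mathbb{E}_Q[\min_{Q_Y}(\,\cdot\,)]$. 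So you end up with $I^{MBGYA}_\mf(X;Y)$ and $I^{MBGYA}_\mf(X;Y|Q)$ both bounded above by the same middle quantity, which does not compare them. Super-modularity requires a product reference $P_Q P_X Q_Y$, so you cannot simply slip in a $q$-dependent $Q_Y$ to fix this.

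The paper's route is different: it uses directly the fact (an equivalent characterization of $\mf\in\mathcal{F}$, from Boucheron--Lugosi--Massart) that $(z_1,\ldots,z_m)\mapsto \sum_i\lambda_i\mf(z_i)-\mf(\sum_i\lambda_i z_i)$ is jointly convex. Writing $J(v)=\sum_x p_X(x)\mf(v_x)-\mf(\sum_x p_X(x) v_x)$ and applying the perspective construction shows $\sum_y q_Y(y)\,J(p_{Y|X}(\cdot,y)/q_Y(y))$ is jointly convex in $(q_Y,p_{Y|X})$; then $I^{MBGYA}_\mf(X;Y)$ is the partial minimum over the convex set of $q_Y$'s and hence convex in $p_{Y|X}$. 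This joint convexity statement is actually what underlies super-modularity (via the subadditivity of $\Phi$-entropy), so your attempt uses a downstream consequence that is too coarse to recover the needed bound.
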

The proof of Theorem \ref{new properties_I_f} is given in Appendix \ref{proof_new properties_I_f}.


\begin{definition}\label{def_f_entropy} Assume that random variable $X$ takes value in a discrete set $\mathcal{X}$. The $\mathsf f$-entropy of $X$ is defined as follows:
 \begin{align}
    H^{CKZ}_{\mathsf f}(X)\triangleq I^{CKZ}_{\mathsf f}(X;X)=\mathsf f(0)\left(1-\sum_{x\in\mathcal{X}}P^{2}_{X}(x)\right)+\sum_{x\in\mathcal{X}}P^{2}_{X}(x)\mathsf f\left(\frac{1}{P_{X}(x)}\right)
\end{align}
\begin{align}
    H^{PV}_{\mathsf f}(X)\triangleq I^{PV}_{\mathsf f}(X;X)=\min_{Q_{X}}\left\lbrace \mathsf f(0)\left(1-\sum_{x\in\mathcal{X}}P_{X}(x)Q_{X}(x)\right)+\sum_{x\in\mathcal{X}}P_{X}(x)Q_{X}(x)\mathsf f\left(\frac{1}{Q_{X}(x)}\right)\right\rbrace
\end{align}
\begin{align}
    &H^{\text{MBGYA}}_{\mathsf f}(X)\triangleq I^{\text{MBGYA}}_{\mf}(X;X)\nonumber\\
    &=\min_{Q_{X}}\left\lbrace \mathsf f(0)\left(1-\sum_{x\in\mathcal{X}}P_{X}(x)Q_{X}(x)\right)+\sum_{x\in\mathcal{X}}P_{X}(x)Q_{X}(x)\mathsf f\left(\frac{1}{Q_{X}(x)}\right)-\sum_{x\in\mathcal{X}}Q_{X}(x)\mf\left(\frac{P_{X}(x)}{Q_{X}(x)}\right)\right\rbrace
\end{align}
\end{definition}

We will prove some properties of $\mathsf f$-entropy. We use $H_{\mathsf f}$ if the statement is true for both $H^{PV}_{\mathsf f}(X)$ and $H^{CKZ}_{\mathsf f}(X)$.

\begin{theorem}\label{property_f_entropy}[Properties of $\mathsf f$-entropy]
\begin{enumerate}[(i).]
    \item For every convex function $\mathsf f$ defined on $I=[0,+\infty)$, $H^{\text{MBGYA}}_{\mathsf f}(X)$ is a concave function of $p(x)$.  With the extra condition $\mathsf f^{'''}(x)\le 0$ for $x>0$, $H^{CKZ}_{\mathsf f}(X)$ is a concave function of $p(x)$. Moreover, $\mathsf{f}\in \mathcal{F}$ implies that $\mathsf f^{'''}(x)\le 0$ for $x>0$.
    \item The function $p(x)\mapsto H^{\text{MBGYA}}_{\mathsf f}(X)$ is maximized at the uniform distribution for every convex function $\mf(x)$. A similar statement holds $H^{CKZ}_{\mathsf f}(X)$ if $\mathsf f^{'''}(x)\le 0$ for $x>0$.
    \item
    Let $Y$ take value in a finite set $\mathcal{Y}$, then for convex $\mathsf f:\mathbb{R}_{+}\to \mathbb{R}$, 
\begin{align}\label{I_f_H_f}
    I^{CKZ}_{\mathsf f}(X;Y)&\le H^{CKZ}_{\mathsf f}(Y),\\
    I^{PV}_{\mathsf f}(X;Y)&\le H^{PV}_{\mathsf f}(Y),\\
    I^{\text{MBGYA}}_{\mathsf f}(X;Y)&\le H^{\text{MBGYA}}_{\mathsf f}(Y).
\end{align}
\end{enumerate}
\end{theorem}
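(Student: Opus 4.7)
My plan is to attack the three parts in order, reusing the concavity of (i) inside (ii) and reserving a single data-processing coupling for (iii). For concavity of $H^{\text{MBGYA}}_{\mf}(X)$, I would read off from the variational form $H^{\text{MBGYA}}_\mf(X)=\min_{Q_X}\bigl[D_\mf(P_{XX}\|P_XQ_X)-D_\mf(P_X\|Q_X)\bigr]$ that, for fixed $Q_X$, the diagonal support of $P_{XX}$ collapses the first divergence to $\sum_x P_X(x)Q_X(x)\mf(1/Q_X(x))+\mf(0)(1-\sum_xP_X(x)Q_X(x))$, which is \emph{linear} in $P_X$, while $D_\mf(P_X\|Q_X)$ is convex in $P_X$; the bracketed quantity is thus concave in $P_X$ for every $Q_X$, and the pointwise minimum inherits concavity. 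For $H^{CKZ}_\mf(X)$ I would instead compute directly: write $H^{CKZ}_\mf=\mf(0)+\sum_x\phi(P_X(x))$ with $\phi(p)=p^2\mf(1/p)-\mf(0)p^2$. The substitution $t=1/p$ converts the condition $\phi''\le 0$ into $\psi(t)\le 0$ where $\psi(t):=2\mf(t)-2\mf(0)-2t\mf'(t)+t^2\mf''(t)$, and a short computation gives $\psi(0)=0$ and $\psi'(t)=t^2\mf'''(t)$, so $\mf'''\le 0$ suffices. To close the last loop, I need $\mf\in\mathcal F\Rightarrow\mf'''\le 0$: setting $g:=1/\mf''$, which is strictly positive and concave on $[0,\infty)$ by definition of $\mathcal F$, any concave function that remains strictly positive on a half-line must have $g'\ge 0$ everywhere (otherwise $g'(t_0)<0$ together with non-increasing $g'$ would drive $g$ to $-\infty$), and $g'=-\mf'''/(\mf'')^2\ge 0$ then forces $\mf'''\le 0$.

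Part (ii) follows almost for free from (i). Each $\mf$-entropy is a symmetric function of $\{P_X(x)\}_{x\in\mathcal X}$ (any permutation of the atoms of $\mathcal X$ leaves it invariant, either by manifest symmetry in the CKZ case or because one can simultaneously permute the optimisation variable $Q_X$ in the MBGYA case), so averaging any maximiser over the symmetric group produces the uniform distribution, and concavity guarantees that this symmetrisation cannot decrease the entropy. This is unconditional for $H^{\text{MBGYA}}_\mf$ and requires $\mf'''\le 0$ for $H^{CKZ}_\mf$, matching the statement.

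For (iii) I would reduce all three inequalities to a single DPI step. Define the randomised channel $T:\mathcal Y\times\mathcal Y\to\mathcal X\times\mathcal Y$ by $T(y_1,y_2)=(\widetilde X,y_2)$ with $\widetilde X\sim P_{X|Y=y_1}$ (drawn independently of $y_2$). Under the diagonal distribution $P_{YY}$ one has $y_1=y_2$ almost surely, so $T$ pushes $P_{YY}$ forward to $P_{XY}$; under any product $P_Y\times Q_Y$, integrating out $y_1$ sends $\widetilde X$ to $P_X$ independently of $y_2\sim Q_Y$, so $T$ pushes $P_Y\times Q_Y$ forward to $P_X\times Q_Y$. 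The DPI for $\mf$-divergence (Theorem~\ref{theorem_properties_D_f}) then gives $D_\mf(P_{XY}\|P_XQ_Y)\le D_\mf(P_{YY}\|P_YQ_Y)$ for every $Q_Y$. Specialising to $Q_Y=P_Y$ yields $I^{CKZ}_\mf(X;Y)\le H^{CKZ}_\mf(Y)$; minimising both sides over $Q_Y$ yields $I^{PV}_\mf(X;Y)\le H^{PV}_\mf(Y)$; and subtracting the common term $D_\mf(P_Y\|Q_Y)$ before minimising over $Q_Y$ yields $I^{\text{MBGYA}}_\mf(X;Y)\le H^{\text{MBGYA}}_\mf(Y)$. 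The only genuinely delicate ingredient in the whole plan is the implication $\mf\in\mathcal F\Rightarrow\mf'''\le 0$ in part (i); everything else is a short computation, a symmetrisation over the permutation group, or a single data-processing step.
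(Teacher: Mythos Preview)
Your proof is correct and follows the same overall architecture as the paper: concavity from the variational/additive structure, symmetrisation for the maximum, and a Jensen-type step for (iii). There are two spots where your route is genuinely cleaner than the paper's.

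For the implication $\mf\in\mathcal F\Rightarrow\mf'''\le 0$, the paper quotes an equivalent characterisation of $\mathcal F$ via joint convexity of the Bregman map and then an exercise from a convex-analysis textbook. Your argument---$g=1/\mf''$ is strictly positive and concave on $[0,\infty)$, and any such function must be non-decreasing, whence $g'=-\mf'''/(\mf'')^2\ge0$---is self-contained and shorter.

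For part (iii), the paper works directly: from $0\le \frac{P_{XY}(x,y)}{P_X(x)Q_Y(y)}\le\frac{1}{Q_Y(y)}$ it applies Jensen's inequality on the segment $[0,1/Q_Y(y)]$ to get $D_\mf(P_{XY}\|P_XQ_Y)\le D_\mf(P_{YY}\|P_YQ_Y)$, then specialises $Q_Y$ as you do. Your channel $T(y_1,y_2)=(\widetilde X,y_2)$ with $\widetilde X\sim P_{X|Y=y_1}$ is exactly the coupling that makes this Jensen step a one-line data-processing inequality; it also handles non-discrete $X$ without the paper's appeal to the Gel'fand--Yaglom--Perez approximation.

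One small point to tighten in (i) for CKZ: you assert $\psi(0)=0$, but this needs $t\mf'(t)\to0$ and $t^2\mf''(t)\to0$ as $t\to0^+$. The first follows from convexity and $\mf(0^+)<\infty$ alone (sandwich $t\mf'(t)$ between $2(\mf(t)-\mf(t/2))$ and $\mf(2t)-\mf(t)$); the second then follows from $\mf'''\le0$ via $t^2\mf''(t)\le 2t(\mf'(t)-\mf'(t/2))$. The paper sidesteps this by writing the same inequality $\psi(t)\le0$ as a second-order Taylor expansion of $\mf$ at $t$ evaluated at $0$ with non-negative integral remainder---equivalent content, but it avoids naming $\psi(0)$.
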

The proof of Theorem \ref{property_f_entropy} is given in Appendix \ref{proof_property_f_entropy}.

\section{Lossy compression with mutual $\mf$-information}\label{sec: lossy compression}

\tikzset{every picture/.style={line width=0.75pt}} 
\begin{figure}
\centering
\begin{tikzpicture}[x=0.75pt,y=0.75pt,yscale=-1.3,xscale=1.3]

\draw   (238,28) .. controls (238,23.58) and (241.58,20) .. (246,20) -- (300,20) .. controls (304.42,20) and (308,23.58) .. (308,28) -- (308,52) .. controls (308,56.42) and (304.42,60) .. (300,60) -- (246,60) .. controls (241.58,60) and (238,56.42) .. (238,52) -- cycle ;
\draw   (362,28) .. controls (362,23.58) and (365.58,20) .. (370,20) -- (424,20) .. controls (428.42,20) and (432,23.58) .. (432,28) -- (432,52) .. controls (432,56.42) and (428.42,60) .. (424,60) -- (370,60) .. controls (365.58,60) and (362,56.42) .. (362,52) -- cycle ;
\draw    (182,40.2) -- (236,40.2) ;
\draw [shift={(238,40.2)}, rotate = 180] [color={rgb, 255:red, 0; green, 0; blue, 0 }  ][line width=0.75]    (10.93,-3.29) .. controls (6.95,-1.4) and (3.31,-0.3) .. (0,0) .. controls (3.31,0.3) and (6.95,1.4) .. (10.93,3.29)   ;
\draw    (307,40.2) -- (359,40.2) ;
\draw [shift={(361,40.2)}, rotate = 180] [color={rgb, 255:red, 0; green, 0; blue, 0 }  ][line width=0.75]    (10.93,-3.29) .. controls (6.95,-1.4) and (3.31,-0.3) .. (0,0) .. controls (3.31,0.3) and (6.95,1.4) .. (10.93,3.29)   ;
\draw    (432,39.2) -- (484,39.2) ;
\draw [shift={(486,39.2)}, rotate = 180] [color={rgb, 255:red, 0; green, 0; blue, 0 }  ][line width=0.75]    (10.93,-3.29) .. controls (6.95,-1.4) and (3.31,-0.3) .. (0,0) .. controls (3.31,0.3) and (6.95,1.4) .. (10.93,3.29)   ;

\draw (159,32) node [anchor=north west][inner sep=0.75pt]    {$X^{n}$};
\draw (489,28.4) node [anchor=north west][inner sep=0.75pt]    {$\hat{X}^{n}$};
\draw (249,34) node [anchor=north west][inner sep=0.75pt]   [align=center] {Encoder};
\draw (373,34) node [anchor=north west][inner sep=0.75pt]   [align=center] {Decoder};
\draw (323,21.4) node [anchor=north west][inner sep=0.75pt]    {$M$};

\end{tikzpicture}

\caption{The setup for the lossy compression problem} \label{fig:M1}
\end{figure}
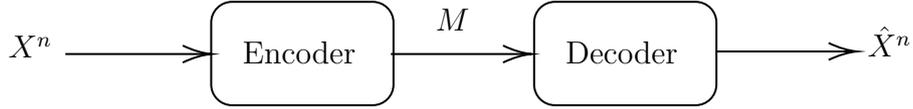

Consider a memoryless source $X\sim \mu_X$ and a distortion function
$d(x,\hat x)$ for $x\in\mathcal{X}$ and $\hat x\in\hat{\mathcal{X}}$ where $\mathcal{X}$ and $\hat{\mathcal{X}}$ are the source and reconstruction alphabet sets. While the literature commonly assumes that the reconstruction alphabet set is identical with the source alphabet set $\hat{\mathcal{X}}=\mathcal{X}$,  we do not make this assumption here. Moreover, the distortion function $d(x,\hat{x})$ is generally assumed to be non-negative. Here, unlike the source-coding literature, we allow $d(x,\hat{x})$ to take negative values. The  same standard proofs (of the rate-distortion theory) go through when $\hat{\mathcal{X}}\neq \mathcal{X}$ or when $d(x,\hat{x})$ becomes negative.

An $(n,R,D)$ \emph{lossy source code} consists of an encoder $\mathcal{E}:\mathcal{X}^n\mapsto \{1,2,\cdots, 2^{nR}\}$ and a decoder $\mathcal{D}:\{1,2,\cdots, 2^{nR}\}\mapsto\hat{\mathcal{X}}^n$ such that the reconstruction sequence \[\hat{X}^n=\mathcal{D}(\mathcal{E}(X^n))\] satisfies the expected requirement
\[\frac1n\sum_{i=1}^{n}\mathbb{E}[d(X_i, \hat X_i)]\leq D.
\]
A rate-distortion pair $(R,D)$ is said to be achievable if for every $\epsilon>0$, one can find an $(n,R,D+\epsilon)$ lossy source code for some blocklength $n$. Given a rate $R$, let $\D(R)$ be the minimum $D$ such that the rate-distortion pair
$(R,D)$ is achievable. Similarly, given a distortion $D$, let $\RR(D)$ be the minimum $R$ such that the rate-distortion pair
$(R,D)$ is achievable.
The following characterization of the rate-distortion function $\D(R)$ is known \cite[Theorem 3.5]{el2011network}:\footnote{Some technical conditions are needed for \eqref{eqnRDnew2} and \eqref{eqnRDcurve1} when $d(x,\hat{x})$ is unbounded. For instance, a sufficient condition is existence of $\hat{x}_{0}\in \hat{\mathcal{X}}$ such that $\mathbb{E}[d(X,\hat{x}_{0})]<\infty$ \cite[Theorems 7.2.4
\& 7.2.5]{berger2003rate}.}
\begin{align}
\RR(D)&=
\inf_{P_{\hat X|X}:~
\mathbb{E}[d(X,\hat{X})]\leq D
}I(X;\hat{X}).\label{eqnRDnew2}\\
\D(R)&=\inf_{P_{\hat X|X}:~I(X;\hat X)\le R}\mathbb{E}[d(X,\hat X)],\label{eqnRDcurve1}
 \end{align}

One can also formally define a variant of the rate-distortion function by replacing Shannon's mutual information with other measures of correlation. For instance, using mutual $\mathsf f$-information defined in \eqref{IfCKZ} or \eqref{IfPV}, we define:
\begin{align}\RR^{CKZ}_{\mathsf f}(D)&=
\inf_{P_{\hat X|X}:~
\mathbb{E}[d(X,\hat{X})]\leq D
}I^{CKZ}_{\mathsf f}(X;\hat{X}),\label{eqnRDfCKZ}
 \\
 \D^{CKZ}_{{\mathsf{f}}}(R)&=\inf_{P_{\hat X|X}:~I^{CKZ}_{\mathsf f}(X;\hat{X})\le R}\mathbb{E}[d(X,\hat{X})],\label{eqnDfv1a}
  \end{align}
  and similarly,
   \begin{align}
 \RR^{\text{MBGYA}}_{\mathsf f}(D)&=
\inf_{P_{\hat X|X}:~
\mathbb{E}[d(X,\hat{X})]\leq D
}I^{\text{MBGYA}}_{\mathsf f}(X;\hat{X}),\label{eqnRDfPV}
\\
 \D^{\text{MBGYA}}_{{\mathsf{f}}}(R)&=\inf_{P_{\hat X|X}:~I^{\text{MBGYA}}_{\mathsf f}(X;\hat{X})\le R}\mathbb{E}[d(X,\hat{X})].\label{eqnDfv2a}
 \end{align}
 We call \eqref{eqnRDfCKZ} and \eqref{eqnRDfPV}  $\mf$-rate-distortion functions. Note that for the special case of $\mf(t)=t\ln(t)$, the $\mf$-rate-distortion functions in \eqref{eqnRDfCKZ} and \eqref{eqnRDfPV}  reduce to the ordinary rate-distortion functions in 
\eqref{eqnRDnew2}. 
\begin{remark}
 From Theorem \ref{new properties_I_f} Part (iv), we obtain the following inequalities for any $\mf\in\mathcal{F}$:
\begin{align}
   &\RR(D)\cdot\left[\lim_{t\to \infty}t\mathsf f''(t)\right] \ge \RR^{CKZ}_{\mathsf f}(D)\ge \RR^{PV}_{\mathsf f}(D)\ge \RR^{\text{MGBYA}}_{\mathsf f}(D),\\
   &\D(R)\cdot\left[\lim_{t\to \infty}t\mathsf f''(t)\right]\ge \D^{CKZ}_{{\mathsf{f}}}(R)\ge \D^{PV}_{{\mathsf{f}}}(R)\ge \D^{\text{MGBYA}}_{{\mathsf{f}}}(R).
\end{align}
\end{remark}
The ordinary rate-distortion function in \eqref{eqnRDnew2} with Shannon's mutual information is a meaningful quantity with an operational interpretation as the solution to the lossy compression problem. What about  $\RR^{CKZ}_{\mathsf f}(D)$ or $\RR^{PV}_{\mathsf f}(D)$?\footnote{
We are only interested in the $\mf$-rate-distortion function in the context of the lossy source coding problem. The $\mf$-rate-distortion function is also of interest in other applications such as privacy or security; interested readers can refer to \cite[Section IV.A, Section IV.C]{du2017principal} for such applications of $\mf$-rate-distortion functions.
} Ziv and Zakai in \cite{ziv1973functionals}  consider the classical proof for the lossy source coding and attempt to mimic the proof by replacing Shannon's mutual information with the mutual $\mf$-information in the proof steps. They show that the function $\RR^{CKZ}_{\mathsf f}(D)$ is useful in obtaining new infeasibility results for the lossy source coding problem when blocklength $n=1$. In fact, the bound obtained using $\mf$-rate-distortion functions can strictly improve over the ordinary rate-distortion function \eqref{eqnRDnew2}
 \emph{when we restrict to codes with blocklength $n=1$.} Authors in \cite{ziv1973functionals} simply use the fact that mutual $\mf$-information (as defined in
\eqref{IfCKZ} or \eqref{IfPV}) satisfies the data-processing property. 
However, we would like to highlight that the argument in \cite{ziv1973functionals} does not yield a computationally tractable bound for arbitrary $n>1$; this is due to the fact that \cite{ziv1973functionals} only yields a bound in the multi-letter form for $n>1$ (not a single-letter bound). This limitation stems from the fact that $\mf$-mutual-information does not enjoy the chain-rule property of Shannon's mutual information.  

Even though mutual $\mf$-information does not have the chain rule property of Shannon's mutual information, in this paper we state new properties for mutual $\mf$-information which could be utilized in lieu of the chain rule to obtain a \emph{single-letter} bound. Using these new properties, we relate $\RR^{CKZ}_{\mathsf f}(D)$ and $\RR^{MBGYA}_{\mathsf f}(D)$ to the lossy source coding problem when blocklength $n$ is arbitrary. More specifically, our result allows us to generalizes the result in \cite{ziv1973functionals} to arbitrary blocklength $n$.

\begin{theorem}
\label{thmrd3}
For any $(n,R,D)$ lossy source code, there is some $P_{\hat X|X}$ such that 
\begin{itemize}
    \item $\mathbb{E}_{P_{\hat{X}|X}P_{X}}[d(X,\hat{X})]\leq D$
    \item For any arbitrary function $\mf\in \mathcal{F}$ as defined in Definition \ref{def:class-C}, we have
\begin{align}\frac1nH^{CKZ}_{\mathsf f}(K)&\geq 
I^{CKZ}_{\mathsf f}(X;\hat X),\label{nnn1}
\\
\frac1n H^{\text{MBGYA}}_{\mathsf f}(K)&\geq 
I^{\text{MBGYA}}_{\mathsf f}(X;\hat X),
\end{align}
where $K$ is a uniform random variable over $\{1,2,\cdots, 2^{nR}\}$. In particular, equation \eqref{nnn1} can be equivalently written as
\begin{align}\label{eq_RD_f_lossy}
    \frac1n\left\{
\mathsf f(0)\left(1-2^{-nR}\right)+
2^{-nR} \mf\left(2^{nR}\right)\right\}\geq 
I^{CKZ}_{\mathsf f}(X;\hat X).
\end{align}
\end{itemize}
 \end{theorem}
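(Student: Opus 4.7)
The plan is to mimic the classical converse of the rate-distortion theorem, substituting the chain rule of Shannon's mutual information (which is unavailable here) with the super-modularity property of Corollary \ref{corr:def-sub-mod22}, and closing the single-letterization step with convexity of $I_\mf$ in the channel. I would treat $I^{CKZ}_\mf$ and $I^{MBGYA}_\mf$ in parallel, since the four ingredients I need — data processing (Theorem \ref{dpi_f_information}), the entropy bound $I_\mf(X;Y)\leq H_\mf(Y)$ (Theorem \ref{property_f_entropy}(iii)), super-modularity (Proposition \ref{prop_super-modular_D_f}), and convexity in the channel when the input marginal is fixed (Theorem \ref{new properties_I_f}(ii)) — are all available for both definitions once $\mf\in\mathcal{F}$.

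Set $K=\mathcal{E}(X^{n})$ and $\hat{X}^{n}=\mathcal{D}(K)$, so that $X^{n}-K-\hat{X}^{n}$ is a Markov chain. Data processing gives $I_\mf(X^{n};\hat{X}^{n})\leq I_\mf(X^{n};K)$, and Theorem \ref{property_f_entropy}(iii) gives $I_\mf(X^{n};K)\leq H_\mf(K)$, for both the $CKZ$ and $MBGYA$ variants; together these yield $H_\mf(K)\geq I_\mf(X^{n};\hat{X}^{n})$. For the lower half of the sandwich, I would exploit the product structure $P_{X^{n}}=\prod_i P_X$ of the memoryless source: applying Corollary \ref{corr:def-sub-mod22} with $W=\hat{X}^{n}$ and reference $\prod_i P_X\cdot Q_{\hat{X}^{n}}$, specializing $Q_{\hat{X}^{n}}$ to $P_{\hat{X}^{n}}$ for the $CKZ$ case and to the $MBGYA$ optimizer for that variant, gives
\[
I_\mf(X^{n};\hat{X}^{n}) \;\geq\; \sum_{i=1}^{n} I_\mf(X_i;\hat{X}^{n}).
\]
A further data-processing step (since $\hat{X}_i$ is a deterministic function of $\hat{X}^{n}$) then yields $\sum_i I_\mf(X_i;\hat{X}^{n})\geq \sum_i I_\mf(X_i;\hat{X}_i)$.

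To collapse this sum into a single-letter quantity, I would introduce a time-sharing variable $J$ uniform on $\{1,\ldots,n\}$ and independent of everything else, and set $(X,\hat{X})=(X_J,\hat{X}_J)$; because the source is i.i.d., $X$ is distributed as $P_X$ and is independent of $J$, while the channel $P_{\hat{X}|X,J=i}=P_{\hat{X}_i|X_i}$ varies with $i$. Linearity of expectation gives $\mathbb{E}[d(X,\hat{X})]\leq D$, and the convexity of $I^{CKZ}_\mf$ and of $I^{MBGYA}_\mf$ in the channel with the input marginal held fixed (Theorem \ref{new properties_I_f}(ii), which for $MBGYA$ invokes $\mf\in\mathcal{F}$) yields $I_\mf(X;\hat{X})\leq \frac{1}{n}\sum_i I_\mf(X_i;\hat{X}_i)$. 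Chaining this with the earlier inequalities produces $\frac{1}{n}H_\mf(K)\geq I_\mf(X;\hat{X})$ in both the $CKZ$ and $MBGYA$ forms. The explicit identity \eqref{eq_RD_f_lossy} is then a direct computation: substituting $P_K(k)=2^{-nR}$ on $2^{nR}$ atoms into $H^{CKZ}_\mf$ from Definition \ref{def_f_entropy} collapses the two sums to $\mf(0)(1-2^{-nR})+2^{-nR}\mf(2^{nR})$.

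The main obstacle, as foreseen, is that the chain rule is unavailable; what unlocks the argument is the interplay between super-modularity of $D_\mf$ on independent coordinates (which replaces the multi-letter-to-per-letter step of the classical proof) and convexity of $I_\mf$ in the channel with fixed input (which replaces the per-letter-to-single-letter step). These two ingredients are precisely why the hypothesis $\mf\in\mathcal{F}$ is the natural one for this theorem.
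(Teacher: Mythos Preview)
Your argument is essentially the paper's own proof, with the same sequence of ingredients: data processing, the entropy bound $I_\mf\leq H_\mf$, the super-modularity inequality $I_\mf(X^n;\hat X^n)\geq\sum_i I_\mf(X_i;\hat X^n)$ (this is exactly Theorem~\ref{new properties_I_f}(iii)), another data-processing step, and then the time-sharing/convexity step to single-letterize. In fact your write-up is slightly more careful than the paper's at one point: you explicitly invoke Theorem~\ref{new properties_I_f}(ii) for the final inequality $\sum_i I_\mf(X_i;\hat X_i)\geq n\, I_\mf(X_J;\hat X_J)$, which the paper leaves unjustified.

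There is one notational slip that creates a small gap. You write ``Set $K=\mathcal{E}(X^n)$'', but in the theorem $K$ is \emph{defined} to be uniform on $\{1,\ldots,2^{nR}\}$, whereas the encoder output $M=\mathcal{E}(X^n)$ need not be uniform. Your chain therefore really proves $\frac{1}{n}H_\mf(M)\geq I_\mf(X;\hat X)$, and your final computation of $H^{CKZ}_\mf$ at the uniform distribution is not the same object. The missing link is precisely the paper's step \eqref{jf0}: $H_\mf(K)\geq H_\mf(M)$, which follows from Theorem~\ref{property_f_entropy}(ii) (uniform maximizes $H^{CKZ}_\mf$ and $H^{MBGYA}_\mf$; for the $CKZ$ case this uses $\mf'''\leq 0$, which is implied by $\mf\in\mathcal{F}$). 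Once you insert that one line, your proof is complete and matches the paper's.
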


 \begin{proof}[Proof of Theorem \ref{thmrd3}]
Take an encoder $\mathcal{E}:\mathcal{X}^n\mapsto \{1,2,\cdots, 2^{nR}\}$ and a decoder $\mathcal{D}:\{1,2,\cdots, 2^{nR}\}\mapsto\hat{\mathcal{X}}^n$ such that the reconstruction sequence \[\hat{X}^n=\mathcal{D}(\mathcal{E}(X^n))\] satisfies
\[\frac1n\sum_{i=1}^{n}\mathbb{E}[d(X_i, \hat X_i)]\leq D.
\]
Let $M=\mathcal{E}(X^n)\in\{1,2,\cdots, 2^{nR}\}$ be the compression of $X^n$. That is, $M$ and $K$ are defined on the same alphabet set, and $K$ is uniformly distributed. Take a time-sharing random variable $Q$ uniform over $\{1,2,\cdots, n\}$ and independent of previously defined variables. Then,
\[\mathbb{E}_{P_{Q}P_{X^{n}\hat{X}^{n}}}[d(X_Q, \hat X_Q)]=\frac1n\sum_{i=1}^n\mathbb{E}_{P_{X_{i}\hat{X}_{i}}}[d(X_i, \hat X_i)]\leq D,
\]
where $(X_Q, \hat X_Q)$ is a function of the sequence $(X^n,\hat X^n)$ and the random time index $Q$.
Let $(X,\hat X)=(X_Q, \hat{X}_Q)$. Then, the joint distribution of $(X,\hat X)$ will be the average of the joint distributions of $(X_i, \hat X_i)$ for $i=1,2,\cdots,n$. Since $X^n$ is i.i.d., the marginal distribution of $X=X_Q$ will be the same as the marginal distribution of the $X_i$'s. Moreover,  $P_{\hat X|X}$ can be written explicitly as
\begin{align}
    P(\hat{X}\in \mathcal A|X=x)=\sum_{i=1}^n\frac1n P(\hat X_{i}\in \mathcal A|X_i=x).
\end{align}
Then 
\[
\mathbb{E}_{P_{X,\hat{X}}}[d(X, \hat X)]=\mathbb{E}_{P_{Q}P_{X^{n},\hat{X}^{n}}}[d(X_Q, \hat X_Q)]=\frac1n\sum_{i=1}^n\mathbb{E}_{P_{X_{i}\hat{X}_{i}}}[d(X_i, \hat X_i)]\leq D.
\]

For either the CKZ or MBGYA notions of the mutual $\mf$-information, we have
\begin{align}
H_\mf(K)&\geq 
    H_\mf(M)\label{jf0}
    \\&=I_\mf(M;M)\label{jf1}
    \\&\geq
    I_\mf(X^n;\hat X^n)\label{jf2}
    \\&\geq
    \sum_{i=1}^n I_\mf(X_i;\hat X^n)\label{jf3}
    \\&\geq
    \sum_{i=1}^n I_\mf(X_i;\hat X_i)\label{jf4}
    \\&\geq
    n\cdot I_\mf(X_Q;\hat X_Q),\label{jf5}
\end{align}
where \eqref{jf0} follows from property (ii) of Theorem \ref{property_f_entropy}, \eqref{jf1} follows from the definition of $\mf$-entropy, \eqref{jf2}, \eqref{jf4} from data processing property of $\mf$-information (Theorem \ref{dpi_f_information}), and finally \eqref{jf3} follows from property (iii) of Theorem \ref{new properties_I_f}.

\end{proof}

 \begin{corollary}\label{newcorcard} We obtain
 \begin{align}
 \frac1n H^{\text{CKZ}}_{\mathsf f}(K)&\geq
\RR^{\text{CKZ}}_{\mathsf f}(D),\label{eqneqn1}
\\
 \frac1n H^{\text{MBGYA}}_{\mathsf f}(K)&\geq
\RR^{\text{MBGYA}}_{\mathsf f}(D),\label{eqneqn2}
\end{align}
where $K$ is a uniform random variable over $\{1,2,\cdots, 2^{nR}\}$.
Furthermore, for the CKZ notion to compute the minimum in $\RR^{\text{CKZ}}_{\mathsf f}(D)$, it suffices to compute the minimum over all conditional distributions $P_{\hat X|X}$ for $\hat{X}\in \hat{\mathcal{X}}$ such that the support of $\hat X$ has size at most $|\mathcal{X}|+1$.
 
 \end{corollary}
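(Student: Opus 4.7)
The corollary has two parts, and both reduce to material already in place. The inequalities \eqref{eqneqn1}--\eqref{eqneqn2} are essentially a restatement of Theorem \ref{thmrd3} combined with the definition of $\RR_\mf(D)$ as an infimum. The cardinality bound is a Carath\'eodory/support-lemma argument, with the only delicate point being the precise counting of moments.

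For \eqref{eqneqn1} and \eqref{eqneqn2}, I would fix any $(n,R,D)$ lossy source code. Theorem \ref{thmrd3} produces a single-letter conditional distribution $P_{\hat X|X}$ satisfying $\mathbb{E}[d(X,\hat X)] \leq D$ and $\tfrac{1}{n} H_\mf(K) \geq I_\mf(X;\hat X)$, in both the CKZ and the MBGYA variants. Since this $P_{\hat X|X}$ is feasible for the optimization defining $\RR_\mf(D)$ (it satisfies the distortion constraint), the infimum-definition in \eqref{eqnRDfCKZ}/\eqref{eqnRDfPV} gives $I_\mf(X;\hat X) \geq \RR_\mf(D)$. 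Chaining the two inequalities yields the claimed bounds.

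For the cardinality bound, the plan is to apply the standard support lemma to the CKZ objective. Write $\nu_{\hat x}:=P_{X|\hat X=\hat x}\in\Delta(\mathcal X)$ and observe that
\[
I^{CKZ}_\mf(X;\hat X) = \sum_{\hat x} P_{\hat X}(\hat x)\, g_0(\nu_{\hat x}), \qquad g_0(\nu) := \sum_x P_X(x)\, \mf\!\left(\frac{\nu(x)}{P_X(x)}\right),
\]
so that the objective, the marginal constraint on $P_X$, and the expected distortion are all expectations over $\hat X$ of real-valued functions of the pair $(\hat x,\nu_{\hat x})$. To each $(\hat x,\nu)\in\hat{\mathcal X}\times\Delta(\mathcal X)$, attach the moment vector
\[
T(\hat x,\nu) = \Bigl(\nu(x_1),\ldots,\nu(x_{|\mathcal X|-1}),\ \textstyle\sum_x \nu(x) d(x,\hat x),\ g_0(\nu)\Bigr)\in\mathbb R^{|\mathcal X|+1}.
\]
Any feasible $P_{X\hat X}$ induces a distribution on the pairs $(\hat x,\nu_{\hat x})$ whose expectation of $T$ encodes the tuple $(P_X(x_1),\ldots,P_X(x_{|\mathcal X|-1}),\mathbb{E}[d(X,\hat X)],I^{CKZ}_\mf(X;\hat X))$.

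Applying Carath\'eodory to the convex hull of the image of $T$ in $\mathbb{R}^{|\mathcal X|+1}$, one may replace the distribution by one supported on at most $|\mathcal X|+1$ such pairs while preserving all coordinates; the resulting conditional has $|\mathrm{supp}(\hat X)|\leq|\mathcal X|+1$, satisfies the distortion constraint, and realizes the same value of $I^{CKZ}_\mf$, so the infimum in \eqref{eqnRDfCKZ} is unchanged. The main obstacle will be the sharp counting needed to reach $|\mathcal X|+1$ rather than the naive $|\mathcal X|+2$: one must use that only $|\mathcal X|-1$ of the marginal coordinates are independent (because the masses sum to one) together with the fact that the distortion appears as an inequality, so one need not match it exactly but only remain below $D$, shaving off the extra dimension.
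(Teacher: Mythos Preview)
Your proof of \eqref{eqneqn1}--\eqref{eqneqn2} is correct and identical to the paper's: apply Theorem~\ref{thmrd3} to obtain a feasible $P_{\hat X|X}$ and then invoke the infimum defining $\RR_\mf(D)$.

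For the cardinality bound, your setup matches the paper's---both rely on the key observation that $I^{CKZ}_\mf(X;\hat X)$ is linear in $P_{\hat X}$ once $P_{X|\hat X}$ is fixed, and then perform a Carath\'eodory-type reduction---but the two diverge on how to get $|\mathcal X|+1$ rather than $|\mathcal X|+2$. The paper cites the Carath\'eodory--Bunt (equivalently Fenchel--Eggleston) refinement, which for a connected set in $\mathbb R^{d}$ gives $d$ points rather than $d{+}1$; applied with $d=|\mathcal X|+1$ this yields the claimed bound directly. You instead propose to exploit that distortion enters only as an inequality. That is a legitimate alternative route---it is really the LP/extreme-point argument: with the $|\mathcal X|$ marginal equalities (which already encode normalization) and one distortion inequality, a basic feasible optimum of the linear objective has at most $|\mathcal X|+1$ positive weights---but as written you do not carry it out. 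Simply dropping the distortion coordinate and applying plain Carath\'eodory in $\mathbb R^{|\mathcal X|}$ would preserve the marginal and the objective but gives no control on the distortion, so the ``shaving off'' step needs the LP formulation to be made rigorous. There is also a small self-contradiction in your write-up: you first assert Carath\'eodory already yields $|\mathcal X|+1$ points while preserving all coordinates, and only afterwards concede that the naive count is $|\mathcal X|+2$.
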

 \begin{proof}
 Inequalities \eqref{eqneqn1} and \eqref{eqneqn2} follow immediately from Theorem \ref{thmrd3}. It remains to show the bound on the size of the support of $\hat{X}$ when computing $\RR^{\text{CKZ}}_{\mathsf f}(D)$. The cardinality bounds on the auxiliary random variable $\hat{W}$  comes from the standard Caratheodory-Bunt \cite{bun34} arguments and is omitted. We just point out that similar to the ordinary mutual information,  $\mf$-information $I^{CKZ}_{\mathsf f}(X;\hat{X})$ has the  following property: fix a conditional distribution $P_{X|\hat X}$ and consider the set of marginal distributions $P_{\hat X}$ that induce a given marginal distribution on $P_X$. Then, 
 \[
 I^{CKZ}_{\mathsf f}(X;\tilde{X})=\mathbb{E}_{P_XP_{\tilde{X}}}\mf\left(\frac{dP_{X|\tilde{X}}}{dP_X}(X,\hat X)\right)
 \]
 is linear in the marginal distribution $P_{\hat X}$. 

 \end{proof}
\begin{remark} In the asymptotic regime when the blocklength $n$ tends to infinity, a full characterization of the achievable rate-distortion pairs $(R,D)$ is known and given in \eqref{eqnRDnew2}. As a sanity check, let us restrict the above theorem to the case of $n$ tending to infinity. Assume that a rate-distortion pair $(R,D)$ is achievable as $n$ tends to infinity. Letting $n$ converge to infinity we obtain
\[\lim_{n\rightarrow\infty}
\frac1n\left\{
\mathsf f(0)\left(1-2^{-nR}\right)+
2^{-nR}\mathsf f\left(2^{nR}\right)\right\}\geq 
\RR^{CKZ}_{\mathsf f}(D)\]
or equivalently,
\[R\lim_{t\rightarrow\infty}
\frac{\mathsf{f}(0)(t-1)+\mathsf f\left(t\right)}{t\ln(t)}\geq 
\RR^{CKZ}_{\mathsf f}(D).\]
If $R=\RR(D)$ lies on the rate-distortion curve,  from Lemma \ref{newlmma3} in Section \ref{class_conv_func_F} we obtain
\[\RR(D)\left[\lim_{t\to \infty}t\mathsf f''(t)\right]\geq
\RR^{CKZ}_{\mathsf f}(D).\]
 The correctness of the above inequality can be verified using part (iv) of Theorem 
 \ref{new properties_I_f}.
\end{remark}

\begin{example}
Let $X\sim B(1/2)$ be a uniform binary source (i.e., $X\in\{0,1\}$ is uniform). Let $\mathcal{\hat X}=\mathcal{X}=\{0,1\}$ and $d(x,\hat{x})=\boldsymbol{1}(x\neq\hat{x})$ be the Hamming distance. So, the average distortion is in fact the probability of mismatch: $\mathbb{E}[d(X,\hat{X})]=\mathbb{P}[X\neq \hat{X}]$. The classical rate-distortion function, in this case, is given by \cite[Example 11.1]{el2011network}:
\[
\RR(D)=1-H_2(D),\qquad \forall~D\in[0,0.5].
\]
where $H_2(x)=-x\log_{2}(x)-(1-x)\log_{2}(1-x)$ is the binary entropy function. The choice of $\mf(x)=x\log_{2}(x)$ in  \eqref{eq_RD_f_lossy} yields that $R\ge \RR(D)$. This lower bound holds for universally any arbitrary $n$. We next show that better lower bounds can be found for finite values of $n$.

Let $\mf(x)=x^2-1$ which belongs to $\mathcal{F}$. Then,  $\RR_{\mf}^{CKZ}(D)$ is achieved by a binary symmetric channel $P_{\hat{X}|X}(1|0)=P_{\hat{X}|X}(0|1)=D$ and
\[
\RR_{\mf}^{CKZ}(D)=(2D-1)^2,\qquad \forall~D\in[0,0.5].
\]
For this choice of $\mf$, Theorem \ref{thmrd3} implies that for any $(n,R,D)$ lossy source code, we have
\[R\geq LB_{\mf}(D,n):=\frac{1}{n}\log_{2}(n\RR^{CKZ}_{\mf}(D)+1).\]

From the rate-distortion theory, we know that $R\geq \RR(D)$ (which holds for arbitrary values of $n$). Figure \ref{fig:new_LB_Rf} plots the lower bounds $LB_{\mf}(D,n)$ and $\RR(D)$ for different values of $n$.
 As the figure indicates, the lower bound $LB_{\mf}(D,n)$ is a non-trivial lower bound for a finite blocklength lossy source code. 
 \end{example}

 \begin{claim}\label{claim1}
 We have that $LB_{\mf}(D,n)\ge \RR(D)$ for $\frac{1}{2}-\frac{1}{2\sqrt{n}}\le D\le\frac{1}{2}$ and arbitrary block length $n\ge 1$. 
 \end{claim}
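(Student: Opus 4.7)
Setting $\epsilon := \tfrac{1}{2} - D$ and $u := 4n\epsilon^2$, the hypothesis $D \in [\tfrac{1}{2} - \tfrac{1}{2\sqrt n},\, \tfrac{1}{2}]$ becomes $u \in [0,1]$, and the inequality $LB_\mf(D,n) \geq \RR(D)$ takes the equivalent form
\[
\log_2(1+u) \;\geq\; n\,\phi\!\bigl(\sqrt{u/(4n)}\bigr), \qquad u \in [0,1],\ n \geq 1,
\]
where $\phi(z) := 1 - H_2(\tfrac{1}{2}+z) = 1 - H_2(\tfrac{1}{2}-z)$ on $[0,\tfrac{1}{2}]$. My plan is (i) to prove that the right-hand side is nonincreasing in $n$ for fixed $u$, reducing the problem to $n=1$, and then (ii) to verify the resulting one-variable inequality directly.

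For step (i), write the right-hand side as $(u/4)\,\psi\!\bigl(\sqrt{u/(4n)}\bigr)$ with $\psi(z) := \phi(z)/z^2$. Since $\sqrt{u/(4n)}$ is nonincreasing in $n$, it suffices to show $\psi$ is nondecreasing on $(0,\tfrac{1}{2}]$, i.e.\ $z\phi'(z) - 2\phi(z) \geq 0$. Using $\phi(0) = \phi'(0) = 0$, two successive integrations give
\[
z\phi'(z) - 2\phi(z) \;=\; \int_0^z \bigl[s\phi''(s) - \phi'(s)\bigr]\,ds \;=\; \int_0^z\!\int_0^s t\,\phi'''(t)\,dt\,ds,
\]
which reduces the claim to $\phi'''(t) \geq 0$ on $(0,\tfrac{1}{2})$. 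But the explicit formula $\phi''(z) = 4/[(1-4z^2)\ln 2]$ is manifestly nondecreasing on $[0,\tfrac{1}{2})$, so $\phi''' \geq 0$ there, and the monotonicity of $\psi$ follows.

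For step (ii), set $q := \sqrt{u} \in [0,1]$ and use the identity $1 - H_2((1-q)/2) = \tfrac{1}{2}\bigl[(1-q)\log_2(1-q) + (1+q)\log_2(1+q)\bigr]$ to rewrite the $n=1$ inequality in the multiplicative form
\[
(1+q^2)^2 \;\geq\; (1-q)^{1-q}\,(1+q)^{1+q},\qquad q \in [0,1]. \qquad (\star)
\]
Let $f(q) := 2\ln(1+q^2) - (1-q)\ln(1-q) - (1+q)\ln(1+q)$, so $f(0) = f(1) = 0$ (the latter using $\lim_{q\to 1^-}(1-q)\ln(1-q) = 0$). A direct calculation yields
\[
f''(q) \;=\; \frac{2\,(1-6q^2+q^4)}{(1+q^2)^2(1-q^2)},
\]
whose numerator has a unique root on $(0,1)$ at $q = \sqrt{2}-1$. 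Hence $f'$, which satisfies $f'(0)=0$ and $f'(q) \to -\infty$ as $q \to 1^-$, first increases and then decreases, crossing zero at a single point $q_0 \in (\sqrt{2}-1,\,1)$. It follows that $f$ is increasing on $[0,q_0]$ and decreasing on $[q_0,1]$, and since $f$ vanishes at both endpoints, $f \geq 0$ on $[0,1]$, establishing $(\star)$.

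The main obstacle is the monotonicity of $\psi$ in step (i): it rests essentially on the vanishing of $\phi$ and $\phi'$ at the origin, which allows two successive integrations to transfer the positivity of $\phi'''$ all the way to $\psi'$. Step (ii) is then a clean, if somewhat involved, single-variable derivative analysis.
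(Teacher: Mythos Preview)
Your proof is correct but follows a genuinely different route from the paper. Your Step~(i)---the reduction to $n=1$ via monotonicity of $n\,\phi(\sqrt{u/(4n)})$, established through $\phi'''\geq 0$---is considerably cleaner than the paper's argument for $n\geq 2$, which expands both $1-H_2(D)$ and $\tfrac{1}{n}\log_2(1+n(2D-1)^2)$ as Taylor series about $D=\tfrac12$ and then compares blocks of terms by hand. Your monotonicity reduction avoids all of that bookkeeping. On the other hand, your Step~(ii) is more laborious than necessary: the paper dispatches the $n=1$ case in one line via Jensen's inequality for the concave function $\log_2$, namely
\[
D\log_2 D + (1-D)\log_2(1-D)\;\leq\;\log_2\bigl(D^2+(1-D)^2\bigr),
\]
which after adding $1$ to both sides gives exactly $1-H_2(D)\leq \log_2\bigl(1+(2D-1)^2\bigr)=LB_\mf(D,1)$. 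Combining your Step~(i) with the paper's Jensen argument for Step~(ii) would give the shortest overall proof.
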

 The proof is given in Appendix \ref{proof of claim1}.

\begin{figure}
	   \centering
	\includegraphics[scale=1,width=0.8\linewidth]{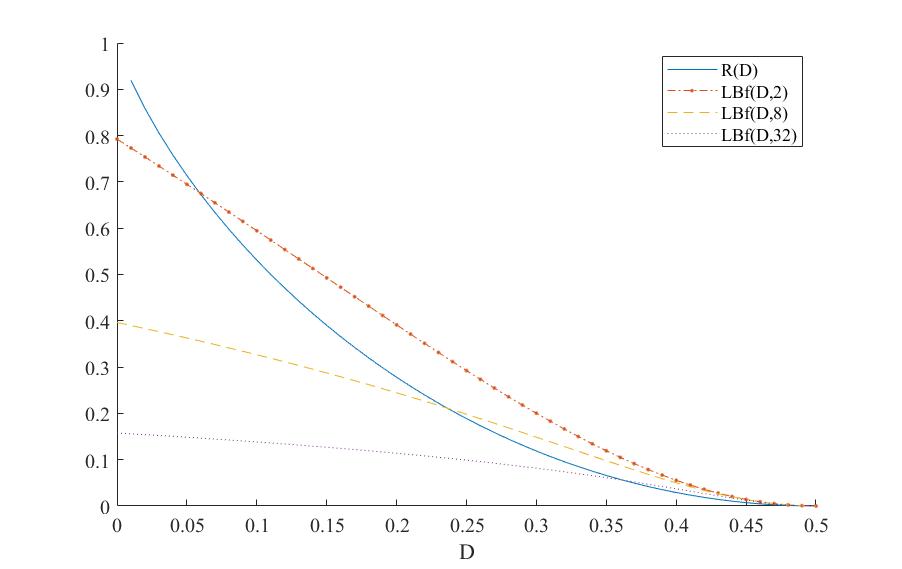}
	\caption{ Lower bounds on achievable rate $R$ versus $D$ for a code of length $n$ for a binary source $X\sim Bernoulli(1/2)$ and Hamming distortion. 
	The new lower bound $LB_{\mf}(D,n)=\frac{1}{n}\log_{2}(n\RR^{CKZ}_{\mf}(D)+1)$ for $\mf(x)=x^2-1$ is plotted along with $\RR(D)$  for $n=2,8,32$.}
	\label{fig:new_LB_Rf}
\end{figure}

 The following theorem shows single-letterization of the $\mf$-rate distortion function when $\mf\in\mathcal{F}$. 

\begin{theorem}\label{thmNewRD}
    Take some arbitrary $\mf\in\mathcal{F}$. Let $X^n$ is i.i.d.\ according to some $P_X$ for some arbitrary natural number $n$. Then, we have
\begin{align}
 \inf_{P_{\hat X^n|X^n}:~I^{CKZ}_{\mathsf f}(X^n;\hat{X}^n)\le nR}\frac1n\sum_{i=1}^n\mathbb{E}[d(X_i,\hat{X}_i)]\geq \D^{CKZ}_{{\mathsf{f}}}(R).
  \end{align}
  Similarly,
  \begin{align}
 \inf_{P_{\hat X^n|X^n}:~I^{\text{MBGYA}}_{\mathsf f}(X^n;\hat{X}^n)\ge nR}\frac1n\sum_{i=1}^n\mathbb{E}[d(X_i,\hat{X}_i)]\geq \D^{\text{MBGYA}}_{{\mathsf{f}}}(R).
  \end{align}
\end{theorem}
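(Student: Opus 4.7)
The plan is to mimic the classical single-letter converse for the Shannon rate-distortion theorem, but with Shannon's mutual information replaced by the $\mf$-mutual information and with the chain rule replaced by the superadditivity (AB-)property from part (iii) of Theorem \ref{new properties_I_f}, which is available precisely because $\mf\in\mathcal{F}$.

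First, fix any $P_{\hat{X}^n|X^n}$ with $I^{CKZ}_{\mf}(X^n;\hat{X}^n)\le nR$. Since $X_1,\dots,X_n$ are independent, applying part (iii) of Theorem \ref{new properties_I_f} with $U=\hat{X}^n$, followed by the data-processing inequality (Theorem \ref{dpi_f_information}) coordinate by coordinate, yields
\begin{align*}
nR \;\ge\; I^{CKZ}_{\mf}(X^n;\hat{X}^n) \;\ge\; \sum_{i=1}^n I^{CKZ}_{\mf}(X_i;\hat{X}^n) \;\ge\; \sum_{i=1}^n I^{CKZ}_{\mf}(X_i;\hat{X}_i).
\end{align*}
Set $R_i := I^{CKZ}_{\mf}(X_i;\hat{X}_i)$ and $D_i := \mathbb{E}[d(X_i,\hat{X}_i)]$. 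Since each $X_i$ has marginal $P_X$, the definition \eqref{eqnDfv1a} directly gives $D_i \ge \D^{CKZ}_{\mf}(R_i)$ for every $i$.

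To close the argument I would show that $R \mapsto \D^{CKZ}_{\mf}(R)$ is non-increasing and convex. The former is obvious from the definition. The latter follows from a standard mixing argument: given near-optimal conditionals $P^{(1)}_{\hat{X}|X}, P^{(2)}_{\hat{X}|X}$ for rates $R_1, R_2$, the mixture $\lambda P^{(1)}_{\hat{X}|X} + (1-\lambda) P^{(2)}_{\hat{X}|X}$ is feasible for rate $\lambda R_1+(1-\lambda)R_2$ by the convexity of $I^{CKZ}_{\mf}(X;\hat{X})$ in $P_{\hat{X}|X}$ at fixed $P_X$ (part (ii) of Theorem \ref{new properties_I_f}), and it achieves the convex combination of the two distortions by linearity of expectation. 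Jensen's inequality together with monotonicity then produces
\begin{align*}
\frac{1}{n}\sum_{i=1}^n D_i \;\ge\; \frac{1}{n}\sum_{i=1}^n \D^{CKZ}_{\mf}(R_i) \;\ge\; \D^{CKZ}_{\mf}\!\left(\frac{1}{n}\sum_{i=1}^n R_i\right) \;\ge\; \D^{CKZ}_{\mf}(R).
\end{align*}

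The $I^{MBGYA}_{\mf}$ case follows from the same template after interpreting the constraint in the natural way: the superadditivity (part (iii)), the data-processing property (Theorem \ref{dpi_f_information}), and the convexity in $P_{\hat{X}|X}$ (part (ii)) all hold for $I^{MBGYA}_{\mf}$ once $\mf\in\mathcal{F}$ is assumed, which is exactly the hypothesis of the theorem. The main technical point is therefore the convexity of $\D^{CKZ}_{\mf}$ and $\D^{MBGYA}_{\mf}$ in $R$; once that is in hand, everything reduces to Jensen plus monotonicity, and the key reason the argument runs at all (in the absence of a chain rule) is the superadditivity enjoyed by $\mf$-information under $\mf\in\mathcal{F}$.
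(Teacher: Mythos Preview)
Your proof is correct and follows essentially the same approach as the paper's: both use the superadditivity of $I_\mf$ (part (iii) of Theorem \ref{new properties_I_f}) together with data processing to single-letterize the constraint, then invoke the definition of $\D_\mf$, its convexity (via part (ii) of Theorem \ref{new properties_I_f}), and its monotonicity to finish. Your mixing argument for convexity is a slightly more explicit version of what the paper sketches, and your reading of the MBGYA constraint as $\le nR$ is indeed the intended one.
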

The proof is given in Appendix \ref{proof_thmNewRD}.


\section{Lower bounds on the ($\mf$-)rate-distortion function}\label{sec:lower bound on rate-dist}

The ($\mf$-)rate-distortion function is expressed in terms of an optimization problem. It does not have an explicit expression. However, there are some explicit lower bounds on
the rate-distortion function such as Shannon's lower bound. Such lower bounds are of independent interest and have been previously studied in the literature. A new motivation for studying such lower bounds stems from a connection that we provide between lower bounds on the rate-distortion function and upper bounds on the generalization error of learning algorithms. This connection is discussed later in Section \ref{subsec: lossy_comp_to_gen}.

In this section we discuss explicit lower bounds on the ($\mf$-)rate-distortion function. We  provide new bounds that outperform the existing bounds in some cases. We begin our discussion with the ordinary rate-distortion function and extend the result to ($\mf$-)rate-distortion function afterwards.

\subsection{Review of the existing bounds and ideas in the literature}
The following explicit lower bound on the rate-distortion is known:

\begin{theorem}\cite[Theorem 55]{koliander2016entropy},\cite[Lemma 2]{riegler2018rate}\label{th5_generalized_SLB}
Consider a random variable $X\sim P_X$ distributed on the measure space $(\mathcal{X},\A,\mu)$, a measurable space $(\mathcal{Y},\B)$, and a
distortion function $d:\mathcal{X}\times \mathcal{Y}\to [0,\infty]$ satisfying  1) $\inf_{y\in\mathcal{Y}}d(x,y)=0$ for all $x\in\mathcal{X}$. 2) there exists a finite set $\mathcal{B}\subseteq\mathcal{Y}$ such that $\mathbb{E}[\min_{y\in\mathcal{B}}d(X,y)]=0$. Suppose that $\mu$ is a reference measure for random variable $X$ (i.e. $P_{X}\ll \mu$), then
    \begin{align}
        \RR(D)\ge h_{\mu}(X)-\inf_{\lambda\ge 0}\{\lambda D+\ln\nu(\lambda)\}
    \end{align}
    where
    \[
   h_{\mu}(X)=-\mathbb{E}\left[\ln\left(\frac{dP_{X}}{d\mu}(X)\right)\right],\quad \nu(\lambda)=\sup_{y\in{\mathcal{Y}}}\int e^{-\lambda d(x,y)}d\mu(x), \quad \forall \lambda\in [0,\infty).
    \]
\end{theorem}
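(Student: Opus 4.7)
The plan is to adapt the classical Shannon lower bound argument to this abstract setting. Starting from the rate-distortion characterization in \eqref{eqnRDnew2}, I would fix an arbitrary admissible $P_{\hat X|X}$ satisfying $\mathbb{E}[d(X,\hat X)]\le D$ and show
\[
I(X;\hat X)\ge h_\mu(X)-\inf_{\lambda\ge 0}\{\lambda D+\ln\nu(\lambda)\},
\]
from which the bound on $\RR(D)$ follows by taking the infimum over admissible conditional distributions. The first step is to decompose Shannon's mutual information using the reference measure $\mu$: since $P_X\ll\mu$, one checks that for every $y\in\mathcal{Y}$ the conditional $P_{X|\hat X=y}$ is also absolutely continuous with respect to $\mu$ (on the support of $P_{\hat X}$), so that $I(X;\hat X)=h_\mu(X)-h_\mu(X|\hat X)$. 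The task therefore reduces to upper bounding $h_\mu(X|\hat X)$ under the distortion constraint.

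The second step is the Gibbs-tilting trick. For each fixed $\lambda\ge 0$ and $y\in\mathcal{Y}$, I would introduce the probability measure $Q_y$ on $\mathcal{X}$ with $\mu$-density
\[
q_y(x)=\frac{e^{-\lambda d(x,y)}}{\nu(\lambda,y)},\qquad \nu(\lambda,y)\triangleq\int e^{-\lambda d(x,y)}\,d\mu(x),
\]
assuming $\nu(\lambda,y)<\infty$ (which must be checked and is where the finite-codebook hypothesis is used — see below). Non-negativity of $D(P_{X|\hat X=y}\|Q_y)$ expands to
\[
0\le -h_\mu(X|\hat X=y)+\lambda\,\mathbb{E}[d(X,y)\mid \hat X=y]+\ln\nu(\lambda,y),
\]
so $h_\mu(X|\hat X=y)\le \lambda\,\mathbb{E}[d(X,y)\mid\hat X=y]+\ln\nu(\lambda,y)$. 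Averaging over $\hat X$, using the distortion constraint, and bounding $\nu(\lambda,\hat X)\le \nu(\lambda)$ pointwise yields $h_\mu(X|\hat X)\le \lambda D+\ln\nu(\lambda)$. Substituting back gives $I(X;\hat X)\ge h_\mu(X)-\lambda D-\ln\nu(\lambda)$, and optimizing over $\lambda\ge 0$ completes the argument.

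The main obstacle is measure-theoretic rather than algebraic: one must ensure $h_\mu(X|\hat X)$ is well defined (not $-\infty$ in a way that invalidates the manipulation), that $\nu(\lambda,y)$ is finite for the $\lambda$'s appearing in the infimum, and that the relative entropy decomposition above holds without sign ambiguities. This is precisely where the two technical hypotheses enter. Condition (1), $\inf_{y}d(x,y)=0$ for all $x$, together with the measurability of $d$, guarantees that the family $\{Q_y\}$ is a reasonable Gibbs family parametrized by $y$; it also prevents pathological situations in which no reconstruction symbol can approximate $x$. Condition (2), existence of a finite $\mathcal{B}\subseteq\mathcal{Y}$ with $\mathbb{E}[\min_{y\in\mathcal{B}}d(X,y)]=0$, implies that $\RR(0)<\infty$ and, more importantly for the proof, secures integrability by furnishing a zero-distortion reference reconstruction to which the bound may be compared when $D$ is small. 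If these hypotheses do not directly arise in the displayed inequalities, they are needed to ensure that the functional $h_\mu(X)-\inf_{\lambda\ge 0}\{\lambda D+\ln\nu(\lambda)\}$ is not vacuously $-\infty$ and that all interchanges of integration used in passing from the pointwise KL bound to its $\hat X$-expectation are valid via Fubini.
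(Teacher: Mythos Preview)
The paper does not provide its own proof of this theorem; it is quoted from \cite[Theorem 55]{koliander2016entropy} and \cite[Lemma 2]{riegler2018rate} as part of the literature review, and the surrounding text only sketches the special case $\mathcal{X}=\hat{\mathcal{X}}=\mathbb{R}^k$, $\mu=$ Lebesgue, $d(x,\hat x)=L(x-\hat x)$ via the decomposition $I(X;\hat X)=h(X)-h(X|\hat X)\ge h(X)-h(X-\hat X)$ followed by a maximum-entropy argument. Your proposal is the natural abstract-measure generalization of exactly that idea: replace differential entropy by $h_\mu$, replace the translation-invariant maximum-entropy family by the Gibbs family $dQ_y/d\mu\propto e^{-\lambda d(\cdot,y)}$, and use $D(P_{X|\hat X=y}\|Q_y)\ge 0$ pointwise in $y$. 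The algebra you wrote is correct and this is precisely the route taken in the cited references, so there is nothing materially different to compare.

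One small sharpening: your discussion of the two hypotheses is a bit vague. Condition (1) is what makes $\nu(\lambda,y)\le \nu(\lambda)$ nontrivial at $\lambda=0$ (it forces $\nu(0)=\mu(\mathcal X)$, possibly infinite, but more importantly prevents $d(\cdot,y)$ from being bounded away from zero uniformly). Condition (2) is what guarantees, via the finite codebook $\mathcal{B}$, that $\RR(D)<\infty$ for every $D\ge 0$, so that the characterization \eqref{eqnRDnew2} is in force and the infimum over admissible $P_{\hat X|X}$ is not vacuous; it is not really about the finiteness of $\nu(\lambda,y)$, which follows simply from $d\ge 0$ and $\lambda\ge 0$ whenever $\mu$ is finite (and otherwise the bound is handled by allowing $\ln\nu(\lambda)=+\infty$, which makes the lower bound trivially true for that $\lambda$).
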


\begin{remark}
Theorem \ref{th5_generalized_SLB} holds for all values of distortion $D\geq 0$. In \cite{marton1971asymptotics}, Marton studied the asymptotic of $\RR(D)$ when $D\to 0$, showing that for discrete sources 
\[
\RR(D)\ge H(P_{X})+c\cdot D\log_{2}{D}+\mathcal O(D)
\]
where $c$ is a constant determined by the source.
\end{remark}

Another known idea (originally developed by Shannon) for finding a lower bound on the rate-distortion function when $\mathcal{X}=\hat{\mathcal{X}}=\mathbb{R}^k$ is as follows:
\begin{align*}
    I(X;\hat X)&=h(X)-h(X|\hat X)
    \\&=h(X)-h(X-\hat X|\hat X)
    \\&\geq h(X)-h(X-\hat X)
    \\&\geq h(X)-\sup_{p_{\hat X|X}:\mathbb{E}[ d(X,\hat X)]\leq D}h(X-\hat X).
\end{align*}

Therefore, for any arbitrary $d(x,\hat x)$, we deduce that
\[
\RR(D)\geq h(X)-\sup_{p_{\hat X|X}:\mathbb{E}[ d(X,\hat X)]\leq D}h(X-\hat X).
\]

Assume that we have a distortion measure $d(\cdot,\cdot)$ of the form \[d(x,\hat x)=L(x-\hat x)\qquad\forall x,\hat x\in \mathbb{R}^{k}\] for some function $L:\mathbb{R}^{k}\mapsto \mathbb{R}$. Then, we obtain Shannon's lower bound on the rate-distortion function:\cite[Lemma 4.6.1]{gray2012source}
\[
\RR(D)\ge h(X)-\sup_{p_{N}:~\mathbb{E}[L(N)]\le D}h(N).
\]
Moreover, the solution to the maximization problem
\begin{align}\label{non-gaussian-max-entropy}
    \max_{p_{X}:~\mathbb{E}[L(X)]\le D}h(X)
\end{align}
is of the form 
\[p^*(x)=\frac{e^{-bL(x)}}{\int e^{-bL(x)}dx}\] where the constant $b$ is chosen such that
\[
\int L(x)p^*(x)dx= D
\]

Note that Theorem \ref{th5_generalized_SLB} recovers this lower bound with the choice of $\mu$ being the Lebesgue measure.
\begin{corollary}
  
\cite[Section 4.8]{gray2012source} 
 Let $X\sim P_{X}$ on $\mathbb{R}^{k}$. Assume that
 $\mathcal{X}=\hat{\mathcal{X}}=\mathbb{R}^k$, and take a distortion function $d(x,\hat{x})=\|x-\hat{x}\|^{r}_{r}$ for all $x,\hat{x}$.
 Then we have
 \begin{align}
     \RR(D)\ge h(X)-\ln\left(\left(\frac{D re}{k}\right)^{\frac{k}{r}}\Gamma(1+k/r)V_{k}\right)
 \end{align}
 where $V_{k}$ is the volume of the unit radius ball in $\mathbb{R}^k$, i.e.,
 \[
 V_{k}\triangleq \mathsf{Vol}\left[\{x\in\mathbb{R}^{k}:\,\|x\|_{r}\le 1\}\right]. \]\label{th_shannon_lower_bound}
  \end{corollary}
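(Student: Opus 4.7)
My plan is to invoke Shannon's lower bound stated in the displayed equation just above the corollary, specialized to the distortion function $L(n)=\|n\|_r^r$, and then to solve the resulting maximum-entropy problem explicitly. Concretely, since $d(x,\hat x)$ is translation-invariant, we obtain
\begin{equation*}
\RR(D)\ge h(X)-\sup_{p_N:\;\mathbb{E}[\|N\|_r^r]\le D} h(N),
\end{equation*}
so the task reduces to evaluating the max-entropy problem in \eqref{non-gaussian-max-entropy} with $L(n)=\|n\|_r^r$. From the statement following \eqref{non-gaussian-max-entropy}, the optimal density has the form $p^\star(n)=Z(b)^{-1}\exp(-b\|n\|_r^r)$ with $b$ chosen so that $\mathbb{E}_{p^\star}[\|N\|_r^r]=D$; the remaining work is to compute $Z(b)$, to determine $b$ in terms of $D$, and to compute $h(p^\star)$.

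The main technical step is the computation of the normalization constant. I would use the layer-cake representation based on the $r$-ball volume: since $\mathsf{Vol}\{x:\|x\|_r\le \rho\}=\rho^k V_k$, the surface area of the $r$-sphere of radius $\rho$ equals $k\rho^{k-1}V_k$, giving
\begin{equation*}
Z(b)=\int_{\mathbb{R}^k}e^{-b\|x\|_r^r}\,dx = kV_k\int_0^\infty \rho^{k-1}e^{-b\rho^r}\,d\rho.
\end{equation*}
The substitution $u=b\rho^r$ turns the inner integral into a Gamma integral and, after simplification using $\tfrac{k}{r}\Gamma(k/r)=\Gamma(1+k/r)$, yields the clean identity $Z(b)=V_k\,\Gamma(1+k/r)\,b^{-k/r}$. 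This step is the one I expect to be the main obstacle, because it requires a careful polar-type change of variables adapted to the $r$-norm; the rest is essentially mechanical.

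With $Z(b)$ in hand, differentiation gives $\mathbb{E}_{p^\star}[\|N\|_r^r]=-\partial_b \ln Z(b)=\tfrac{k}{rb}$, so enforcing the distortion constraint with equality fixes $b=\tfrac{k}{rD}$. The entropy of $p^\star$ then simplifies to
\begin{equation*}
h(p^\star) = \ln Z(b) + b\,\mathbb{E}_{p^\star}[\|N\|_r^r] = \ln Z(b) + \tfrac{k}{r}
= \ln\!\left( V_k\,\Gamma(1+k/r)\left(\tfrac{rDe}{k}\right)^{k/r}\right),
\end{equation*}
where I absorbed the additive $k/r$ into the logarithm via $e^{k/r}$. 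Substituting this into Shannon's lower bound yields exactly the bound claimed in the corollary. The only conditions to check for invoking Theorem \ref{th5_generalized_SLB} / the preceding Shannon bound are that $\inf_{\hat x}d(x,\hat x)=0$ (taking $\hat x=x$) and that a finite reconstruction set achieving zero distortion in expectation exists, both of which hold trivially here since $d(x,x)=0$ pointwise.
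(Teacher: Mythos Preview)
Your argument is correct and is exactly the standard derivation the paper has in mind: the corollary is simply cited from \cite{gray2012source} without a self-contained proof, and the paper only remarks that it follows from the Shannon lower bound displayed just before it (equivalently, from Theorem~\ref{th5_generalized_SLB} with $\mu$ equal to Lebesgue measure). Your computation of $Z(b)$, of $b=k/(rD)$ from the moment constraint, and of $h(p^\star)$ is precisely how one fills in the details.

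One small wrinkle in your closing sentence: condition~2) of Theorem~\ref{th5_generalized_SLB}, as stated, asks for a \emph{finite} reconstruction set $\mathcal B$ with $\mathbb E[\min_{y\in\mathcal B}d(X,y)]=0$, which does \emph{not} hold for a continuous $X$ and $d(x,\hat x)=\|x-\hat x\|_r^r$; the fact that $d(x,x)=0$ pointwise is irrelevant because a single point is not a finite quantizer achieving zero expected distortion. This does not damage your proof, however, since you are actually invoking the entropy-based Shannon bound $\RR(D)\ge h(X)-\sup_{p_N:\mathbb E[L(N)]\le D}h(N)$ derived just above the corollary, whose validity relies only on $h(X|\hat X)=h(X-\hat X|\hat X)\le h(X-\hat X)$ and not on the hypotheses of Theorem~\ref{th5_generalized_SLB}. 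I would simply drop the final sentence referencing those conditions.
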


\subsection{A new lower bound on the rate-distortion function}
In this subsection, we give a new lower bound on the rate-distortion function that can be strictly better than the lower bound of Theorem \ref{th5_generalized_SLB} (taken from \cite[Theorem 55]{koliander2016entropy},\cite[Lemma 2]{riegler2018rate}). Moreover, this bound is useful when we study the generalization error of learning algorithms.
Consider a  random variable $X$. Then, we have
\begin{align}
	\D(R)&=\inf_{P_{\hat{X}|X}: I(X;\hat{X})\le R} \mathbb{E}\left[d(X,\hat{X})\right]
	\\
	&=
	\inf_{P_{\hat{X}|X}}\sup_{\lambda\geq 0}\mathbb{E}\left[d(X,\hat{X})\right]-\lambda R+\lambda D(P_{\hat{X}X}\|P_{\hat{X}}P_{X}) 
	\\
	&\ge
\sup_{\lambda\geq 0}\inf_{P_{\hat{X}|X}}\mathbb{E}\left[d(X,\hat{X})\right]-\lambda R+\lambda D(P_{\hat{X}X}\|P_{\hat{X}}P_{X})\label{eqneqn30}\\
&=
\sup_{\lambda\geq 0}\inf_{P_{\hat{X}|X}}\inf_{Q_{\hat{X}}}\mathbb{E}_{P_{\hat X X}}\left[d(X,\hat{X})\right]-\lambda R+\lambda D(P_{\hat{X}X}\|Q_{\hat{X}}P_{X})\label{dual_char_RDnnn}\\
&=\sup_{\lambda\geq 0}\inf_{Q_{\hat{X}}}-\lambda R-\lambda\mathbb{E}_{X\sim P_{X}}\left[\ln\left\{\mathbb{E}_{
\substack{
\hat X\sim Q_{\hat{X}}
\\\hat X \text{ ind. of } X}
}\left[e^{-\frac{d(X,\hat{X})}{\lambda}}\right]\right\}\right]\label{eqneqnc}
\end{align}
where \eqref{eqneqn30} follows from minimax theorem and \eqref{dual_char_RDnnn} follows from the following equality:
\begin{align*}
D(P_{\hat{X}X}\|Q_{\hat{X}}P_{X})=D(P_{\hat{X}X}\|P_{\hat{X}}P_{X})+D(P_{\hat{X}}\|Q_{\hat{X}}).
\end{align*}
To obtain \eqref{eqneqnc}, note that for every fixed $Q_{\hat X}$, the minimizing $P_{\hat{X}|X}$ in \eqref{dual_char_RDnnn} is the Gibbs measure:
\[
dP^{*}_{\hat{X}|X}(\hat{x}|x):=\frac{dQ_{\hat{X}}(\hat{x})e^{-\frac{d(x,\hat{x})}{\lambda}}}{\mathbb{E}_{Q_{\hat{X}}}\left[e^{-\frac{d(x,\hat{X})}{\lambda}}\right]},
\]
Then \eqref{eqneqnc} follows from substituting $P^{*}_{\hat{X}|X}(\hat{x}|x)$ in \eqref{dual_char_RDnnn}.
Applying Jensen's inequality on  \eqref{eqneqnc}, we get the following theorem:
\begin{theorem}\label{th3} Take an arbitrary source $X\sim P_{X}$ defined on a set $\mathcal{X}$ and a distortion function $d(x,\hat{x})\in\mathbb{R}$ (which may be negative). Then, the following two statements hold:
  We have
    \begin{align}\label{lower_psi_1_ineq}
     \D(R)\ge \sup_{\lambda\ge0}\inf_{Q_{\hat{X}}}\left\lbrace-\frac{R}{\lambda}-\frac{1}{\lambda}\ln\left[\mathbb{E}_{P_{X}Q_{\hat{X}}}\left[e^{-\lambda d(X,\hat{X})}\right]\right]\right\rbrace\ge \sup_{\lambda\ge0}\left\lbrace-\frac{R}{\lambda}-\frac{1}{\lambda}\phi(-\lambda)\right\rbrace
    \end{align}
    where
$\phi(\cdot)$ is a function defined as follows: 
    \[
\phi(\lambda)=\sup_{\hat{x}} \ln\mathbb{E}_{P_{X}}\left[e^{\lambda d(X,\hat{x})}\right].
    \]
Note that if for some $\lambda$,  $\mathbb{E}_{P_{X}}\left[e^{\lambda d(X,\hat{x})}\right]=\infty$ for some $\hat x$, we set $\phi(\lambda)=\infty$.
\end{theorem}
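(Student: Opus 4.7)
The plan is to justify the chain of (in)equalities already sketched in \eqref{eqneqn30}--\eqref{eqneqnc} and then obtain \eqref{lower_psi_1_ineq} by two further simple relaxations. First, I would rewrite
\[\D(R) = \inf_{P_{\hat X|X}}\sup_{\lambda\ge 0}\left\{\mathbb{E}[d(X,\hat X)] - \lambda R + \lambda I(X;\hat X)\right\}\]
using the constraint $I(X;\hat X)\le R$, and then swap the $\inf$ and $\sup$; since we only need a lower bound, weak duality suffices (no saddle-point theorem is required), yielding \eqref{eqneqn30}. Next I would use the variational identity
\[I(X;\hat X) = D(P_{\hat X X}\|P_X P_{\hat X}) = \inf_{Q_{\hat X}} D(P_{\hat X X}\|P_X Q_{\hat X}),\]
which is an immediate consequence of the decomposition $D(P_{\hat X X}\|Q_{\hat X}P_X) = D(P_{\hat X X}\|P_{\hat X}P_X) + D(P_{\hat X}\|Q_{\hat X})$, to arrive at \eqref{dual_char_RDnnn}.

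For each fixed $\lambda>0$ and $Q_{\hat X}$, the inner minimization over $P_{\hat X|X}$ is the standard Gibbs-variational problem, minimized by the tilted kernel $dP^{*}_{\hat X|X}(\hat x|x)\propto dQ_{\hat X}(\hat x)\,e^{-d(x,\hat x)/\lambda}$; substituting this back and using the defining identity of the log-partition function gives the closed form \eqref{eqneqnc}. Up to the natural reparametrization $\lambda \mapsto 1/\lambda$ that converts the exponential rate $1/\lambda$ into a rate $\lambda$, \eqref{eqneqnc} is precisely the middle expression of \eqref{lower_psi_1_ineq} with $\mathbb{E}_{P_X}$ and $\ln$ in the opposite order.

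To close the first inequality of \eqref{lower_psi_1_ineq}, I would apply Jensen's inequality to the concave function $\ln$:
\[\mathbb{E}_{P_X}\!\left[\ln \mathbb{E}_{Q_{\hat X}}\!\left[e^{-\lambda d(X,\hat X)}\right]\right] \le \ln \mathbb{E}_{P_X Q_{\hat X}}\!\left[e^{-\lambda d(X,\hat X)}\right],\]
which, after multiplication by $-1/\lambda$, reverses the inequality in the desired direction. For the second inequality, I would simply upper bound the $Q_{\hat X}$-expectation by the pointwise supremum,
\[\mathbb{E}_{P_X Q_{\hat X}}\!\left[e^{-\lambda d(X,\hat X)}\right] = \mathbb{E}_{Q_{\hat X}}\!\left[\mathbb{E}_{P_X}[e^{-\lambda d(X,\hat X)}]\right] \le \sup_{\hat x}\mathbb{E}_{P_X}\!\left[e^{-\lambda d(X,\hat x)}\right] = e^{\phi(-\lambda)},\]
which is independent of $Q_{\hat X}$ and therefore survives the $\inf_{Q_{\hat X}}$.

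The main technical obstacle is the minimax exchange, which here is sidestepped by using only weak duality. Additional care may be needed when $d$ is unbounded below (which the theorem allows) to ensure the Gibbs kernel is well-defined and $\phi(-\lambda)<\infty$ for at least some $\lambda>0$; when the latter fails the second bound becomes vacuous, but the first may still be informative through an optimized choice of $Q_{\hat X}$.
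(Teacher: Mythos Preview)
Your proposal is correct and follows essentially the same route as the paper: Lagrangian reformulation, the minimax swap, the golden formula $I(X;\hat X)=\inf_{Q_{\hat X}}D(P_{\hat X X}\|P_X Q_{\hat X})$, the Gibbs minimizer, and then Jensen's inequality plus the pointwise supremum bound. Your remark that only weak duality is needed (rather than a full minimax theorem) is a small but valid sharpening of the paper's exposition.
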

\begin{remark}
The lower bound in \eqref{lower_psi_1_ineq} can be used to find bounds on the generalization error of learning algorithms, while {\color{blue}Assumption 2) of} Theorem \ref{th5_generalized_SLB}  precludes the application of lower bound in Theorem \ref{th5_generalized_SLB} in the context of generalization error.
\end{remark}

\begin{corollary}
\label{sub-gaussian-D(R)-lower-bound}
Assume that $d(X,\hat{x})$ satisfies
\[
\phi(\lambda)=\sup_{\hat{x}} \ln\mathbb{E}_{P_{X}}\left[e^{\lambda d(X,\hat{x})}\right]\le\lambda^{2}\sigma^{2}/2.
\]
Then
\[
\D(R)\ge\sup_{\lambda\ge0}\left\lbrace-\frac{R}{\lambda}-\lambda\sigma^{2}/2\right\rbrace=-\sqrt{2\sigma^{2}R}.
\]
\end{corollary}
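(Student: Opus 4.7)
The plan is to simply substitute the sub-Gaussian hypothesis into the lower bound from Theorem \ref{th3} and perform the one-dimensional optimization explicitly. Theorem \ref{th3} gives
\[
\D(R) \ge \sup_{\lambda \ge 0}\left\{-\frac{R}{\lambda} - \frac{1}{\lambda}\phi(-\lambda)\right\},
\]
so it suffices to control $\phi(-\lambda)$ for $\lambda \ge 0$ and then optimize.

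First I would observe that the sub-Gaussianity hypothesis $\phi(\lambda) \le \lambda^2 \sigma^2/2$ is stated as a uniform bound on the MGF, and since it is quadratic in $\lambda$, it immediately yields $\phi(-\lambda) \le \lambda^2 \sigma^2/2$ as well. Therefore
\[
-\frac{R}{\lambda} - \frac{1}{\lambda}\phi(-\lambda) \ge -\frac{R}{\lambda} - \frac{\lambda \sigma^2}{2}, \qquad \lambda > 0,
\]
and taking the supremum over $\lambda > 0$ gives
\[
\D(R) \ge \sup_{\lambda > 0}\left\{-\frac{R}{\lambda} - \frac{\lambda \sigma^2}{2}\right\}.
\]

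Next I would carry out the elementary maximization. Differentiating the right-hand side with respect to $\lambda$ and setting the derivative to zero gives $R/\lambda^2 = \sigma^2/2$, i.e.\ $\lambda^\star = \sqrt{2R}/\sigma$ (assuming $R > 0$; the case $R=0$ is trivial since the right-hand side evaluates to $0 \ge -\sqrt{2\sigma^2 R} = 0$). Plugging $\lambda^\star$ back in, both terms $R/\lambda^\star$ and $\lambda^\star \sigma^2/2$ equal $\sigma\sqrt{R/2}$, so their sum equals $\sigma\sqrt{2R} = \sqrt{2\sigma^2 R}$, and the supremum equals $-\sqrt{2\sigma^2 R}$, which is the claimed bound.

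There is no real obstacle here: the only subtlety worth flagging is that the inequality $\phi(-\lambda) \le \lambda^2\sigma^2/2$ uses sub-Gaussianity on both sides (i.e.\ that the hypothesis applies for all real arguments, not just nonnegative ones), which is implicit in the statement $\phi(\lambda) \le \lambda^2 \sigma^2/2$ holding for all $\lambda$. Given that, the rest is a one-line calculus exercise.
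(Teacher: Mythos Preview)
Your proof is correct and matches exactly what the paper intends: the corollary is stated immediately after Theorem \ref{th3} without a separate proof, so it is meant to follow by substituting the sub-Gaussian hypothesis $\phi(-\lambda)\le \lambda^2\sigma^2/2$ into \eqref{lower_psi_1_ineq} and optimizing over $\lambda$, precisely as you do. Your remark that the hypothesis must hold for all real $\lambda$ (so that $\phi(-\lambda)$ is controlled for $\lambda\ge 0$) is the only point worth noting, and it is indeed implicit in the corollary's statement.
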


\begin{example}\label{remark6}
The lower bound given in part (i) of Theorem \ref{th3} can be stronger than the lower bound of Theorem \ref{th5_generalized_SLB} (taken from \cite[Lemma 2]{riegler2018rate}). For instance, assume that $\mathcal{X}=\{1,2,3\}$ and $P_{X}(1)=0.5, P_{X}(2)=0.01, P_{X}(3)=0.49$.
Let $d(x,\hat{x})=\boldsymbol{1}\{x\neq \hat{x}\}$. Theorem \ref{th5_generalized_SLB} gives a lower bound on $\RR(D)$. It implies the following lower bound on $\D(R)$:
\begin{align}
\text{\rm{LB}}_{1}(R)=\sup_{\lambda\ge 0}\left\{-\frac{R}{\lambda}
-\frac{1}{\lambda}\left(\ln\sup_{\hat{x}}\left(\sum_{x}e^{-{\lambda d(x,\hat{x})}}\right)-H(P_{X})\right)\right\}.
\end{align}
On the other hand, the lower bound of Theorem \ref{th3} evaluates to
\begin{align}
\text{\rm{LB}}_{2}(R)=\sup_{\lambda\ge 0}\left\{-\frac{R}{\lambda}
-\frac{1}{\lambda}\ln\sup_{\hat{x}}\mathbb{E}_{P_{X}}\left[e^{-{\lambda d(X,\hat{x})}}\right]\right\}.
\end{align}
These two lower bounds are plotted in Figure \ref{fig:compare_new_LB_Rf}. This plot indicates that
the lower bound given in part (i) of Theorem \ref{th3} can be stronger than the lower bound of Theorem \ref{th5_generalized_SLB}.

\begin{figure}
	   \centering
	\includegraphics[scale=0.8,width=0.7\linewidth]{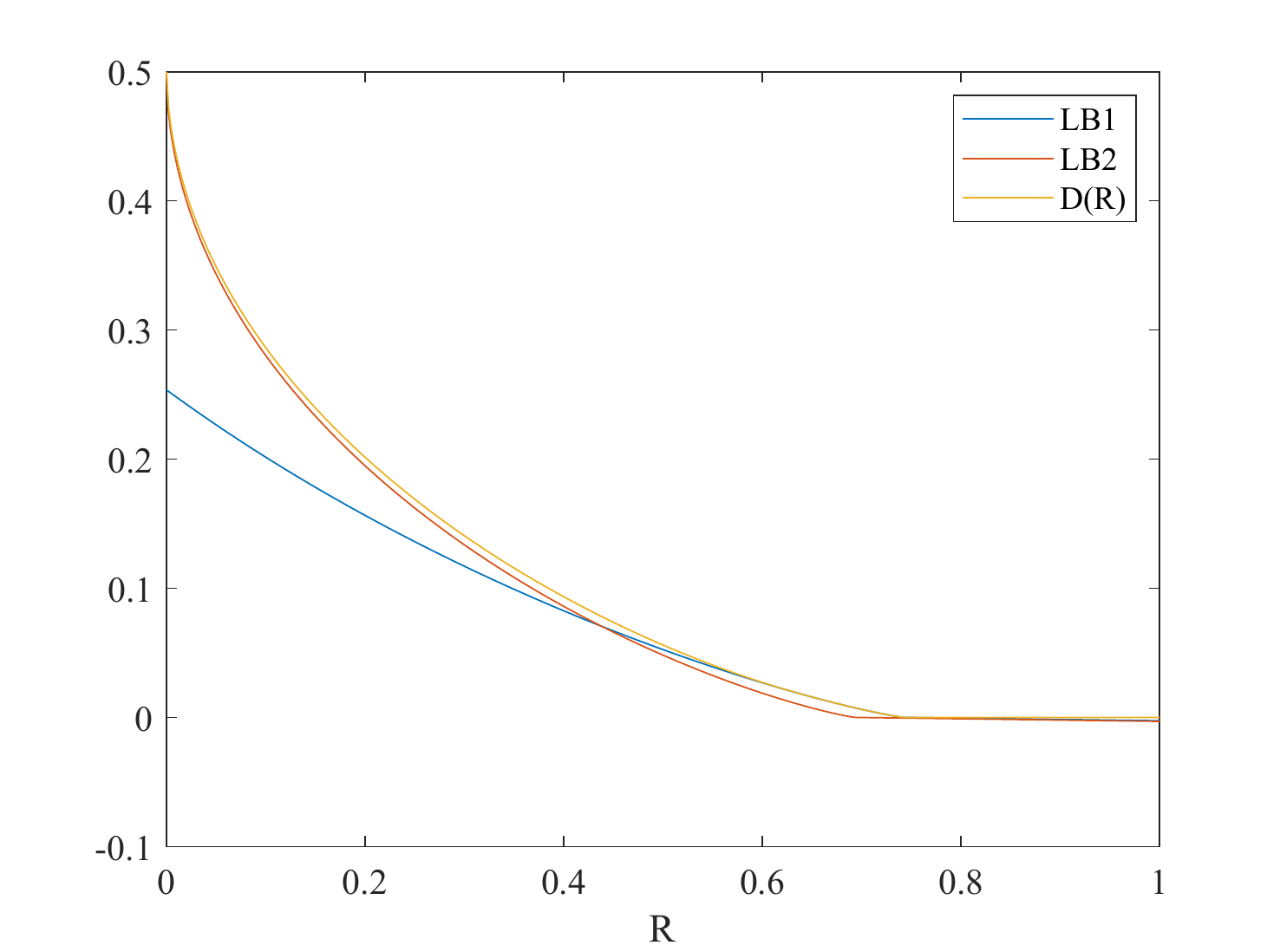}
	\caption{With the setting of Remark \ref{remark6}, $\rm{LB}_{1}(R)$ is the lower bound of distortion-rate function in Theorem \ref{th5_generalized_SLB} (taken from \cite[Lemma 2]{riegler2018rate}) and $\rm{LB}_{2}(R)$ is the lower bound proposed in part (i) of Theorem \ref{th3}.}
	\label{fig:compare_new_LB_Rf}
\end{figure}
\end{example}

\subsection{Lower bound on the $\mf$-rate-distortion function}

The lower bounds of the previous section can be generalized to the $\mf$-rate-distortion functions defined in \eqref{eqnDfv1a} and \eqref{eqnDfv2a} (note that for the special case of $\mf(t)=t\ln(t)$, the $\mf$-rate-distortion functions in \eqref{eqnRDfCKZ} and \eqref{eqnRDfPV}  reduce to the ordinary rate-distortion functions in
\eqref{eqnRDnew2}). Note that the function $\mf:[0,\infty)\to\mathbb{R}\cup\{\infty\}$ is not necessarily in class function $\mathcal{F}$ in this section. We also use $\mf^{*}:\mathbb{R}\to\mathbb{R}\cup\{\infty\}$ to denote the convex conjugate function of $\mf$ defined as follows:
\begin{align}\mf^{*}(x)=\sup_{t\geq0}\{tx-\mf(t)\}, \qquad\forall x\in\mathbb{R}.\label{eqnfs}\end{align}
For instance, if $\mf(t)=t\ln(t)$, we have $\mf^*(t)=e^{t}-t+1$.

Since $I^{CKZ}_{{\mf}}(\hat{X};X)\geq I^{PV}_{{\mf}}(\hat{X};X)$, we have $\D^{CKZ}_{{\mf}}(R)\geq \D^{PV}_{{\mf}}(R)$ for every rate $R$. Therefore, any lower bound on $\D^{PV}_{{\mf}}(R)$ is also a lower bound on $\D^{CKZ}_{{\mf}}(R)$. In what follows, we only consider $\D^{PV}_{\mf}(R)$:
\begin{align}
    -\D^{PV}_{{\mf}}(R) 
    &=\sup_{P_{\hat{X}|X}:~I_{\mf}^{PV}(X;\hat{X})\le R}\mathbb{E}_{P_{\hat{X}X}}[-d(X,\hat{X})]
    \\&= \sup_{Q_{\hat{X}}}\sup_{\substack{P_{\hat{X}|X}:\\~D_{{\mf}}(P_{\hat{X}X}\|Q_{\hat{X}}P_{X})\le R}}\mathbb{E}_{P_{\hat{X}X}}[-d(X,\hat{X})]\nonumber\\
    &\le \sup_{Q_{\hat{X}}}\sup_{\substack{\nu_{\hat{X}X}\ll Q_{\hat{X}}P_{X}:\\~D_{{\mf}}(\nu_{\hat{X}X}\|Q_{\hat{X}}P_{X})\le R}}\mathbb{E}_{\nu_{\hat{X}X}}[-d(X,\hat{X})]
    \\&= \sup_{Q_{\hat{X}}}~\mathbb{F}^{D_{\mf}}_{P_{X} Q_{\hat{X}},R}[-d(X,\hat{X})],\label{LB_f_RD_F}
\end{align}
where for any arbitrary random variable $Z\sim \mu$ and any function $g(\cdot)$ we define
\begin{align}\label{F_r_D_f}
\mathbb{F}^{D_{{\mf}}}_{\mu,R}[g(Z)]\triangleq\sup_{\substack{\nu\ll\mu:\\~D_{\mf}(\nu\|\mu)\le R}}\mathbb{E}_{Z\sim\nu}[g(Z)].
\end{align}
The definition of smoothed expectation \eqref{F_r_D_f} also appears in the context of distributionally robust optimization problems \cite{ben2013robust, bauso2017distributionally, duchi2018learning, birrell2020distributionally}. In \cite{ben2013robust}, the authors give an equivalent characterization of smoothed expectation. Theorem \ref{th_lowerbound_Fr_Df_equiv_Orlicz} in Appendix \ref{f-distortion_norm_append} gives a similar characterization. We state and prove the following theorem which is derived by Theorem \ref{th_lowerbound_Fr_Df_equiv_Orlicz}.

\begin{theorem}\label{cor_f-dist_lower_bound}
Let $\mf:[0,\infty)\to \mathbb{R}\cup\{\infty\}$ be a convex function. Take an arbitrary source $X\sim P_{X}$ defined on a set $\mathcal{X}$ and a distortion function $d(x,\hat{x})\in\mathbb{R}$ (which may be negative). Then, the following two statements hold:

     We have
    \begin{align}\label{lower_general_f_ineq}
     \D^{PV}_{\mf}(R)\ge  \sup_{\lambda\ge 0,~a\in\mathbb{R}}\left\lbrace-\frac{1}{\lambda}\left[a+R\right]-\frac{1}{\lambda}\phi_{\mf}(-\lambda,a)\right\rbrace,
    \end{align}

    where
$\phi(\cdot,\cdot)$ is a function defined on $[0,\infty)\times\mathbb{R}$ as follows:
    \[
   \phi_{\mf}(\lambda,a)=\sup_{\hat{x}} \mathbb{E}_{P_{X}}{\mf}^{*}\left(\lambda{d(X,\hat{x})}-a\right).
    \]
\end{theorem}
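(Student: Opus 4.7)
My plan is to use the Fenchel--Young / Donsker--Varadhan style variational lower bound for the $\mf$-divergence, applied to a test function that encodes the distortion. Starting from the inequality derived just before the statement,
\begin{align*}
-\D^{PV}_{\mf}(R)\;\leq\;\sup_{Q_{\hat X}}\;\sup_{\substack{\nu_{\hat X X}\ll Q_{\hat X}P_X\\ D_{\mf}(\nu_{\hat X X}\|Q_{\hat X}P_X)\le R}}\mathbb{E}_{\nu_{\hat X X}}[-d(X,\hat X)],
\end{align*}
the goal is to control the inner supremum uniformly in $Q_{\hat X}$.

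The first step is to invoke the standard variational inequality $D_{\mf}(\nu\|\mu)\ge \mathbb{E}_{\nu}[g]-\mathbb{E}_{\mu}[\mf^{*}(g)]$, valid for every measurable $g$ with the integrals well defined. This follows from the pointwise Fenchel inequality $\mf(t)\ge gt-\mf^{*}(g)$ applied with $t=d\nu/d\mu$ and integrating against $\mu$, using the definition $\mf^{*}(x)=\sup_{t\ge 0}\{tx-\mf(t)\}$ from \eqref{eqnfs}. I would then specialize to the two-parameter family of affine tilts $g(x,\hat x)=-\lambda d(x,\hat x)-a$, indexed by $\lambda\ge 0$ and $a\in\mathbb{R}$. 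Plugging this into the variational inequality and using $D_{\mf}(\nu_{\hat X X}\|Q_{\hat X}P_X)\le R$ yields, after rearrangement,
\begin{align*}
\mathbb{E}_{\nu_{\hat X X}}[d(X,\hat X)]\;\ge\;-\frac{1}{\lambda}\bigl[a+R\bigr]-\frac{1}{\lambda}\,\mathbb{E}_{Q_{\hat X}P_X}\!\bigl[\mf^{*}(-\lambda d(X,\hat X)-a)\bigr].
\end{align*}

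The second step is to remove the dependence on $Q_{\hat X}$ in the last expectation. Writing it as $\mathbb{E}_{Q_{\hat X}}\bigl[h(\hat X)\bigr]$ with $h(\hat x)=\mathbb{E}_{P_X}[\mf^{*}(-\lambda d(X,\hat x)-a)]$ and bounding the outer expectation by $\sup_{\hat x}h(\hat x)$ produces precisely $\phi_{\mf}(-\lambda,a)$. Since the resulting bound is uniform in $(Q_{\hat X},\nu_{\hat X|X})$, taking the infimum over admissible pairs and then the supremum over $\lambda\ge 0$ and $a\in\mathbb{R}$ delivers \eqref{lower_general_f_ineq}. Alternatively, one could invoke Theorem \ref{th_lowerbound_Fr_Df_equiv_Orlicz} to express the smoothed expectation $\mathbb{F}^{D_{\mf}}_{\mu,R}[-d]$ in its Orlicz-norm form and then specialize the two dual parameters, arriving at the same expression.

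The main obstacle is bookkeeping rather than any deep step: one must verify measurability and finiteness of the relevant integrals when $\mf^{*}$ can take the value $+\infty$, and handle the boundary case $\lambda=0$ (for which the bound becomes vacuous) and the sign of $d$, which may be negative. Assuming $\mf^{*}$ is finite on the range of $-\lambda d-a$ the reasoning is routine, and the suprema can be taken only over those $(\lambda,a)$ for which $\phi_{\mf}(-\lambda,a)<\infty$ without loss of generality.
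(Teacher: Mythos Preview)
Your proposal is correct and follows essentially the same route as the paper. The paper packages the Fenchel--Young step into Part~(i) of Theorem~\ref{th_lowerbound_Fr_Df_equiv_Orlicz} (derived there via Lagrangian duality) and then applies it to $\mathbb{F}^{D_{\mf}}_{P_XQ_{\hat X},R}[-d]$, followed by the minimax swap $\sup_{Q_{\hat X}}\inf_{\lambda,a}\le \inf_{\lambda,a}\sup_{Q_{\hat X}}$ and the replacement of $\mathbb{E}_{Q_{\hat X}}$ by $\sup_{\hat x}$; you short-circuit the auxiliary theorem by applying the pointwise Fenchel inequality directly with the test function $g=-\lambda d-a$, which is slightly more elementary but otherwise the same argument (your uniform-in-$(Q_{\hat X},\nu)$ bound is exactly what the minimax swap accomplishes).
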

The proof is given in Section \ref{proof_cor_f-dist_lower_bound}.
\begin{remark}
The lower bound in \eqref{lower_general_f_ineq} yields the lower bound in \eqref{lower_psi_1_ineq} when specializing to $\mf(x)=x\ln x-x+1$ for $x\ge 0$ ($\mf(0):=0$). For this function $\mf^{*}(y)=e^{y}-1$ for $y\in\mathbb{R}$. The lower bound in Part (a) of Theorem \ref{cor_f-dist_lower_bound} yields:
  \begin{align}
     \D^{PV}_{\mf}(R)&\ge  \sup_{\lambda\ge 0,~a\in\mathbb{R}}\left\lbrace-\frac{1}{\lambda}\left[a+R\right]-\frac{1}{\lambda}\sup_{\hat x} \mathbb{E}_{P_{X}}{\mf}^{*}\left(-\lambda{d(X,\hat{x})}-a\right)\right\rbrace\nonumber\\
     &= \sup_{\lambda\ge 0}\left\lbrace-\frac{1}{\lambda}R-\frac{1}{\lambda}\inf_{a\in\mathbb{R}} \{a+\sup_{\hat x}\{\mathbb{E}_{P_{X}}e^{-\lambda{d(X,\hat{x})}}\}e^{-a}-1\}\right\rbrace\nonumber\\
          &= \sup_{\lambda\ge 0}\left\lbrace-\frac{1}{\lambda}R-\frac{1}{\lambda} \ln\sup_{\hat{x}}\{\mathbb{E}_{P_{X}}e^{-\lambda{d(X,\hat{x})}}\}\right\rbrace\label{eqnError}\\
     &=\sup_{\lambda\ge 0}\left\lbrace-\frac{1}{\lambda}R-\frac{1}{\lambda}\sup_{\hat x}\ln\mathbb{E}_{P_{X}}\left[e^{-\lambda d(X,\hat{x})}\right]\right\rbrace.
    \end{align}
\end{remark}


\section{Generalization error of learning algorithms with bounded input/output information}\label{sec: gen}
Consider a learning problem with an instance space $\mathcal{Z}$, a hypothesis space $\mathcal{W}$ and a loss function $\ell:\mathcal{W}\times \mathcal{Z}\to \mathbb{R}$. Assume that the test and training samples are produced (in an i.i.d. fashion) from an unknown distribution $\mu$ on $\mathcal{Z}$ respectively. A training  dataset of size $n$ is shown by the $n$-tuple, $S=(Z_{1},Z_{2},\cdots,Z_{n})\in\mathcal{Z}^n$ of i.i.d. random elements according to an unknown distribution $\mu$. 
A learning algorithm is characterized by a probabilistic mapping $\A(\cdot)$ (a Markov Kernel) that maps training data $S$ to the  random variable $W=\A(S)\in\mathcal{W}$ as the output hypothesis. The population risk of a hypothesis $w\in \mathcal{W}$ is computed on the test distribution $\mu$ as follows: 
\begin{align}
	L_{\mu}(w)\triangleq\mathbb{E}_{\mu}[\ell(w,Z)]=\int_{\mathcal{Z}}\ell(w,z)\mu(dz), \qquad\forall w\in\mathcal{W}.
\end{align}
The goal of learning is to ensure that under any data-generating distribution $\mu$, the population risk of the output hypothesis $W$ is small, either
in expectation or with high probability. Since $\mu$ is unknown, the learning algorithm cannot directly compute $L_{\mu}(w)$ for any $w\in \mathcal{W}$, but can compute the empirical risk of $w$ on the training dataset $S$ as an approximation, which is defined as
\begin{align}
	L_{S}(w)\triangleq\frac{1}{n}\sum_{i=1}^{n}\ell(w,Z_{i}).
\end{align}
The true objective of the learning algorithm, $L_{\mu}(W)$, is unknown to the learning algorithm while the empirical risk $L_{S}(W)$ is known.  The generalization gap is defined as the difference between these two quantities as
\begin{align}\label{old_notion}
	\mathrm{gen}_{\mu}(W,S)= L_{\mu}(W)-L_{S}(W),
\end{align}
where $W=\A(S)$ is the output of the algorithm $\A$ on the input $S\sim (\mu)^{\otimes n}$. In common algorithms such as empirical risk minimization (ERM) and gradient descent, $L_{S}(W)$ is minimized \cite{shalev2010learnability, hardt2016train}. Therefore, to control $L_{\mu}(W)$ we need to bound $\mathrm{gen}_{\mu}(W,S)$ from above (in expectation or with high probability). Observe that $\mathrm{gen}_{\mu}(W,S)$, as defined in \eqref{old_notion}, is a random variable and a function of $(S,W)$. The generalization error is the expected value of $\mathrm{gen}_{\mu}(W,S)$:
\begin{align}\label{new_notion2}
	\mathrm{gen}\left(\mu,\A\right)= \mathbb{E}\left[L_{\mu}(W)-L_{S}(W)\right].
\end{align}

 Designing algorithms with low generalization error is a key challenge in machine learning. The following upper bound on the generalization error is given in \cite{xu2017information} (see also \cite{russo2019much}):
\begin{theorem}\cite{xu2017information}\label{th1_without_mis}
	Suppose $\ell(w,Z)$ is $\sigma^{2}$-sub-Gaussian under $Z\sim \mu$ for all $w\in\mathcal{W}$. Take an arbitrary algorithm $\A$ that runs on a training dataset $S$. Then the generalization error is bounded as
	\[
	\mathrm{gen}(\mu,\A)\le\sqrt{\frac{2\sigma^{2}}{n}I\big(S;\A(S)\big)}.
	\]
\end{theorem}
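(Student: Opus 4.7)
The plan is to recast the generalization error as the difference between the expectation of a single functional under two different joint laws, and then apply a change-of-measure inequality exploiting sub-Gaussianity. Let $P_{W,S}$ denote the true joint distribution on $(W,S)=(\mathcal{A}(S),S)$, and let $P_W\otimes P_S$ denote the product of its marginals (so $W$ and $S$ are independent). Define the functional $F(w,s)=L_s(w)=\frac{1}{n}\sum_{i=1}^n \ell(w,z_i)$. Under $P_{W,S}$ we have $\mathbb{E}[F(W,S)]=\mathbb{E}[L_S(W)]$, whereas under $P_W\otimes P_S$, the samples $Z_1,\dots,Z_n$ are i.i.d.\ $\mu$ and independent of $W$, so $\mathbb{E}_{P_W\otimes P_S}[F(W,S)]=\mathbb{E}[L_\mu(W)]$. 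Consequently
\[
\mathrm{gen}(\mu,\mathcal{A})=\mathbb{E}_{P_W\otimes P_S}[F(W,S)]-\mathbb{E}_{P_{W,S}}[F(W,S)].
\]

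The second step is to check that $F(W,S)$ is sub-Gaussian under the reference measure $P_W\otimes P_S$ with variance proxy $\sigma^2/n$. Fix $W=w$; under $P_W\otimes P_S$, the variables $\ell(w,Z_1),\dots,\ell(w,Z_n)$ are independent and, by hypothesis, each is $\sigma^2$-sub-Gaussian. Averaging $n$ independent sub-Gaussians yields a $\sigma^2/n$-sub-Gaussian variable, so for every $\lambda\in\mathbb{R}$,
\[
\ln\mathbb{E}_{P_W\otimes P_S}\!\left[e^{\lambda\bigl(F(W,S)-\mathbb{E}_{P_W\otimes P_S}[F(W,S)\mid W]\bigr)}\right]\le \frac{\lambda^2\sigma^2}{2n}.
\]
Since $\mathbb{E}_{P_W\otimes P_S}[F(W,S)\mid W]=L_\mu(W)$, and $L_\mu(W)$ has the same distribution under both $P_{W,S}$ and $P_W\otimes P_S$ (it depends only on $W$), the centering can be pulled outside and does not affect the bound we want.

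The third step invokes the standard Donsker–Varadhan/change-of-measure lemma: for any two probability measures $P\ll Q$ on the same space and any measurable $g$ such that $g$ is $\tau^2$-sub-Gaussian under $Q$, one has $\mathbb{E}_P[g]-\mathbb{E}_Q[g]\le \sqrt{2\tau^2\, D(P\|Q)}$. Apply this with $Q=P_W\otimes P_S$, $P=P_{W,S}$, $g=-F$, and $\tau^2=\sigma^2/n$. Noting $D(P_{W,S}\|P_W\otimes P_S)=I(S;\mathcal{A}(S))$ and the sign flip converts the left-hand side into exactly $\mathrm{gen}(\mu,\mathcal{A})$, yielding
\[
\mathrm{gen}(\mu,\mathcal{A})\le\sqrt{\frac{2\sigma^2}{n}\,I\bigl(S;\mathcal{A}(S)\bigr)}.
\]

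The main obstacle is the sub-Gaussianity step: the hypothesis gives sub-Gaussianity of $\ell(w,Z)$ at each fixed $w$, not joint sub-Gaussianity of $F(W,S)$. The resolution is to verify the bound on the cumulant-generating function conditionally on $W$ (where independence and the per-hypothesis assumption apply), observe that this conditional bound is uniform in $w$, and then use it unconditionally; this is the non-trivial ingredient that lets one replace a per-sample argument (which would give $\frac{1}{n}\sum_i\sqrt{2\sigma^2 I(W;Z_i)}$) by the tighter single-shot bound with $I(W;S)$.
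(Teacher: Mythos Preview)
Your argument is correct and is essentially the original Xu--Raginsky/Russo--Zou proof via the Donsker--Varadhan variational formula. The paper does not give a standalone proof of this theorem (it is cited), but it recovers the bound through its rate-distortion framework: it sets $-\mathcal{U}_1(R)$ equal to a distortion-rate function with distortion $-\mathrm{gen}_\mu(w,s)$, then applies its new lower bound on $\D(R)$ (Theorem~\ref{th3} and Corollary~\ref{sub-gaussian-D(R)-lower-bound}). If you unfold that route, the Lagrangian dual in Theorem~\ref{th3} together with the Gibbs-measure optimizer is exactly the Donsker--Varadhan step you used, and the sub-Gaussian hypothesis enters in the same place. So the two derivations coincide at the mechanical level; the paper's framing simply makes the result a special case of a general $\mf$-divergence machinery (see the example after Corollary~\ref{U1_explicit_upper_bound} where $\mf(x)=x\ln x-x+1$ reproduces your computation verbatim).

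One remark in your final paragraph is inverted. The per-sample bound $\frac{1}{n}\sum_i\sqrt{2\sigma^2 I(W;Z_i)}$ (Theorem~\ref{thm05}, due to Bu et al.) is \emph{tighter} than the $I(W;S)$ bound, not looser: since $\sum_i I(W;Z_i)\le I(W;S)$ and $\sqrt{\,\cdot\,}$ is concave, Jensen gives $\frac{1}{n}\sum_i\sqrt{2\sigma^2 I(W;Z_i)}\le \sqrt{\frac{2\sigma^2}{n}I(W;S)}$. The point of aggregating all $n$ samples is to get the $\sigma^2/n$ variance proxy, not to beat the per-sample bound. This does not affect the validity of your proof of the theorem as stated.
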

In \cite{bu2020tightening}, a strengthened version of Theorem \ref{th1_without_mis} is given as follows:
\begin{theorem}\cite{bu2020tightening}\label{thm05}
    Suppose that the  loss function $\ell(w,Z)$ is $\sigma^{2}$-sub-Gaussian under the distribution $\mu$ on $Z$ for any $w\in \mathcal{W}$. We have:
    \begin{align}
    \mathrm{gen}(\mu,\A)\le\frac{1}{n}\sum_{i=1}^{n}\sqrt{{2\sigma^{2}}I\big(Z_{i};\A(S)\big)}.
    \end{align}
\end{theorem}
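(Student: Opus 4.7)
The plan is to decompose the generalization error as an average of $n$ per-sample contributions and apply a sub-Gaussian change-of-measure inequality to each contribution separately. This localization is exactly what converts the single mutual-information term $I(S;W)$ appearing in Theorem~\ref{th1_without_mis} into a sum $\sum_i I(Z_i;W)$; since $Z_1,\dots,Z_n$ are i.i.d., the chain rule gives $\sum_i I(Z_i;W) \leq I(S;W)$ and concavity of $\sqrt{\cdot}$ then shows the resulting bound never exceeds that of Theorem~\ref{th1_without_mis}.

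Let $W=\A(S)$ and let $\bar Z_1,\dots,\bar Z_n$ be i.i.d.\ copies of $\mu$ independent of $(S,W)$. Writing $L_\mu(W)=\frac1n\sum_i\mathbb{E}[\ell(W,\bar Z_i)\mid W]$ and taking expectations gives
\begin{align*}
\mathrm{gen}(\mu,\A)=\frac1n\sum_{i=1}^n\Bigl(\mathbb{E}_{P_W\otimes\mu}[\ell(W,Z)]-\mathbb{E}_{P_{W,Z_i}}[\ell(W,Z)]\Bigr).
\end{align*}
The key tool is the standard sub-Gaussian change-of-measure lemma: if $X$ is $\sigma^2$-sub-Gaussian under $P$ and $Q\ll P$, then $|\mathbb{E}_QX-\mathbb{E}_PX|\le\sqrt{2\sigma^2 D(Q\|P)}$. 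I would apply this \emph{conditionally on $W=w$}: for each $w\in\mathcal{W}$, the hypothesis says $\ell(w,Z)$ is $\sigma^2$-sub-Gaussian under $Z\sim\mu$, so
\begin{align*}
\bigl|\mathbb{E}_{\mu}[\ell(w,Z)]-\mathbb{E}_{P_{Z_i\mid W=w}}[\ell(w,Z)]\bigr|\le\sqrt{2\sigma^2\,D\bigl(P_{Z_i\mid W=w}\,\|\,\mu\bigr)}.
\end{align*}

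Averaging this bound over $w\sim P_W$ and using Jensen's inequality on the concave function $\sqrt{\cdot}$ yields
\begin{align*}
\bigl|\mathbb{E}_{P_W\otimes\mu}[\ell(W,Z)]-\mathbb{E}_{P_{W,Z_i}}[\ell(W,Z)]\bigr|\le\sqrt{2\sigma^2\,\mathbb{E}_W D\bigl(P_{Z_i\mid W}\|\mu\bigr)}=\sqrt{2\sigma^2\,I(Z_i;W)},
\end{align*}
where the last equality uses that the marginal $P_{Z_i}=\mu$. Summing over $i$ and dividing by $n$ gives the claimed bound. The only mildly technical step is the conditional application of the sub-Gaussian lemma combined with Jensen's inequality; performing the change of measure \emph{before} averaging over $W$ is crucial, since the mixture distribution of $\ell(W,Z)$ under $P_W\otimes\mu$ need not itself be $\sigma^2$-sub-Gaussian, whereas each conditional slice is, by assumption.
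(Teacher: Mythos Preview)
Your proof is correct and is essentially the original argument from \cite{bu2020tightening}: decompose per sample, apply the Donsker--Varadhan/sub-Gaussian change-of-measure bound conditionally on $W=w$, then average using Jensen for $\sqrt{\cdot}$.

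The present paper, however, re-derives the result by a different route. It first observes (around \eqref{bnu2}--\eqref{eqnRDN22}) that each per-sample term is bounded by the rate-distortion quantity $\U_2\bigl(I(Z_i;W)\bigr)$, and only then invokes Corollary~\ref{sub-gaussian-D(R)-lower-bound} (itself obtained from the Lagrangian dual in Theorem~\ref{th3}) to relax $\U_2(R)$ to $\sqrt{2\sigma^2 R}$ under the sub-Gaussian assumption. The endpoint is identical, but the paper's detour through $\U_2$ is deliberate: it isolates $\U_2(R_i)$ as a strictly sharper intermediate bound that does not require sub-Gaussianity, and it is this quantity (and its $\mf$-divergence analogues) that the rest of Section~\ref{sec: gen} studies. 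Your direct argument is more elementary and self-contained; the paper's is designed to exhibit Theorem~\ref{thm05} as a corollary of the general rate-distortion framework.
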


\subsection{From lossy compression to generalization error}\label{subsec: lossy_comp_to_gen}
Theorem \ref{th1_without_mis} provides an upper bound on the  generalization error in terms of  $I(S;\A(S))$.
Let us now write the \emph{sharpest} possible bound on the generalization error given the knowledge of $I(S;\A(S))$. For any $R>0$, we define
\begin{align}
\U_1(R)\triangleq \sup_{P_{W|S}:~I(S;\hat W)\leq R}
\mathbb{E}\left[\mathrm{gen}_{\mu}(\hat W,S)\right]\label{eqnRD1}
\end{align}
where 
\begin{align}\label{gen_gap}
 \mathrm{gen}_{\mu}(w,s)=L_{\mu}(w)-L_{s}(w)
\end{align}
and the supremum in \eqref{eqnRD1} is over all Markov kernels $P_{\hat W|S}$ with a bounded input/output mutual information and $S\sim (\mu)^{\otimes n}$. Given an algorithm $W=\mathcal{A}(S)$, its generalization error is bounded from above by $\U_1(I(S;W))$ since in defining $\U_1(\cdot)$ we take supremum over \emph{all} algorithms with a maximum input/output mutual information.

We claim that $\U_1(R)$ is related to the rate-distortion function. To see this, consider a rate-distortion problem where the input symbol space is $\mathcal{S}$, the reproduction space is $\mathcal{W}$ and the following distortion function between a symbol $w$ and an input symbol $s$ is used:\footnote{
While the literature commonly takes the reproduction space to be the same as the input symbol space, the rate-distortion theory does not formally require that.
}
\begin{align}\label{gen_gap1}
 -\mathrm{gen}_{\mu}(w,s)=L_{s}(w)-L_{\mu}(w).   
\end{align}
 
With this definition, from \eqref{eqnRD1}, we obtain
\begin{align}
-\U_1(R)&
=\inf_{P_{\hat W|S}:~I(S;\hat W)\leq R}
\mathbb{E}\left[-\mathrm{gen}_{\mu}(\hat W,S)\right]\label{eqnRD2}
\end{align}
which is in the rate-distortion form. 
With $\U_1(R)$ defined as in \eqref{eqnRD1}, it follows that for any arbitrary algorithm $\A$ with $I\big(S;\A(S)\big)\leq R$ we have
\[
\mathrm{gen}\left(\mu,\A\right)\le \U_1(R).
\]
This upper bound does not require any sub-Gaussianity assumption on the loss function. From this viewpoint, Theorem \ref{th1_without_mis} 
is just a convenient and explicit lower bound on a rate-distortion function under an extra assumption on the loss function.
In fact, Corollary \ref{sub-gaussian-D(R)-lower-bound}  shows that
\begin{align}
	 \mathcal{U}_1(R)
		\le	\sqrt{\frac{2\sigma^{2}}{n}R},
	\end{align}
yielding Theorem \ref{th1_without_mis}.

Note that by a similar argument, if instead of an upper bound on $I(S;\A(S))\leq R$, we know $I(Z_i;\A(S))\le R_{i}$ for $1\leq i\leq n$, the following upper bound on the generalization error follows:
\begin{align}
\mathrm{gen}\left(\mu,\A\right)=\mathbb{E}[\mathrm{gen}_{\mu}(W,S)]=\frac{1}{n}\sum_{i=1}^{n}\mathbb{E}\left[\mathrm{gen}_{\mu}(W,Z_{i})\right]
\leq \frac1n\sum_{i=1}^n{\U}_{2}(R_{i})\label{bnu2}
\end{align}
where \begin{align}\text{\rm{gen}}_{\mu}(w,z_i)=\mathbb{E}_{Z \sim \mu}\left[\ell(Z,w)\right]-\ell(z_i,w)\label{gen_gap2}\end{align} and
\begin{equation}
\mathcal{U}_2(R)\triangleq \sup_{P_{\hat W|Z}:~I(\hat W;Z)\leq R}
\mathbb{E}\left[\mathrm{gen}_{\mu}(\hat{W},Z)\right]\label{eqnRDN22}
\end{equation}
where $Z\in\mathcal{Z}$ is distributed according to $\mu$. As before, $-\mathcal{U}_2(R)$ is in the rate-distortion form. Then, Corollary \ref{sub-gaussian-D(R)-lower-bound} yields Theorem \ref{thm05} (from \cite{bu2020tightening}). Computation of $\U_2(\cdot)$ is significantly easier than $\U_1(\cdot)$ as the supremum is taken over a smaller set in $\U_2(\cdot)$. Computational aspects of the above bounds are discussed in Section \ref{NewSecComputational}. 

In the rest of this subsection, we discuss several ideas to improve the above bounds based on $\mathcal{U}_1(R)$ and $\mathcal{U}_2(R)$.
We motivate our discussion with the following example: consider the problem of learning the mean of a Gaussian random vector $Z\sim \mathcal{N}(\beta,\sigma^{2}I_{d})$ with loss function $\ell(w,z)=\|w-z\|^{2}$. This example has been considered in \cite{bu2020tightening,sefidgaran2022rate}. The generalization error of the ERM algorithm $W_{ERM}=\frac{1}{n}\sum_{i=1}^{n}Z_{i}$ can be computed exactly as
\begin{align}
    \text{\rm{gen}}(P_{Z},P_{W|S})=\frac{2\sigma^{2}d}{n}.\label{eqnkkkk2}
\end{align}

Note that the output of ERM algorithm $W_{ERM}=\frac{1}{n}\sum_{i=1}^{n}Z_{i}$ and $W_{ERM}\sim \mathcal{N}(\beta,\frac{\sigma^{2}}{n}I_{d})$. Thus $I(W_{ERM};S)=\infty$ and there is no upper bound on generalization error by assuming just the knowledge of $I(W_{ERM};S)$. Next, note that 
\begin{align}\label{mutual_GME}
    I(W_{ERM};Z_{i})=\frac{d}{2}\ln\frac{n}{n-1}
\end{align}
where it is obtained in \cite{bu2020tightening}.  Assuming that we know $I(W;Z_{i})$ and 
$\ell(W,Z)$ is $\sigma^{2}$-sub-Gaussian under $P_{W}P_{Z}$, Bu et al \cite{bu2020tightening} obtained an upper bound on the generalization error of order $1/\sqrt{n}$. Note that the correct order is $1/n$. 
What if we want to derive an upper bound on the generalization error just with the knowledge of $I(W;Z_{i})=\frac{d}{2}\ln\frac{n}{n-1}$?
As we saw above, if we just know the values of $I(\A(S);Z_i)=\frac{d}{2}\ln\frac{n}{n-1}$ for $i=1,2,\cdots, n$, we obtain
\begin{align}
\mathrm{gen}\left(\mu,\A\right)
\leq \frac1n\sum_{i=1}^n{\U}_{2}\left(\frac{d}{2}\ln\frac{n}{n-1}\right)=\U_2\left(\frac{d}{2}\ln\frac{n}{n-1}\right).
\label{eqnkkkk1}\end{align}
 We claim that $\mathcal{U}_2(R)=\infty$ for this example; therefore, \eqref{eqnkkkk1} does not yield any meaningful bound.  To see this, note that 
\[\mathrm{gen}_{\mu}(\hat w,z)=\sigma^2d+\|\beta-\hat w\|^2-\|\hat w-z\|^2=\sigma^2d+\|\beta\|^2-\|z\|^2+2\hat w^T(z-\beta).
\]
Take some constant $c$ and let $\hat W=c(Z-\beta)+T$ where $T\sim\mathcal{N}(0,\tilde{\sigma}^{2}I_{d})$ is independent of $Z$. For every $c$, by letting $\tilde{\sigma}^{2}\rightarrow\infty$ we can ensure that $I(\hat W;Z)\leq R$. Next, by letting $c\rightarrow\infty$ we get that $\mathbb{E}[\mathrm{gen}_\mu(\hat W,Z)]\rightarrow\infty$. Note that in this construction, the variance of $\hat W$ tends to infinity. 


To sum this up, assuming just the knowledge of $I(W;Z_{i})$, the bound in \eqref{bnu2} does not result in any meaningful upper bound on the generalization error. If we additionally know that 
$\ell(W,Z)$ is $\sigma^{2}$-sub-Gaussian under $P_{W}P_{Z}$, Bu et al \cite{bu2020tightening} obtained a meaningful bound, but with the suboptimal order $1/\sqrt{n}$. 

One idea is to assume more knowledge about the algorithm than just $I(W;Z_{i})$. This knowledge can be incorporated as a constraint on the algorithm as follows in order to get a nontrivial bound:
\begin{align}
    \sup_{\substack{P_{\hat W|Z}:~I(\hat W;Z)\leq R,\\{\text{\& another constraint}}}}
\mathbb{E}\left[\text{\rm{gen}}_{\mu}(Z,\hat W)\right]\label{addconstt}
\end{align}
where the additional information is reflected in the constraint that restricts the domain of the optimization problem above.
For the ERM algorithm, we know that \[\mathsf{Var}(W_{ERM})=\mathbb{E}[\|W_{ERM}-\beta\|^2]=\frac{\mathsf{Var}(Z)}{n}=\frac{\sigma^2d}{n}\]
where the variance of a vector is defined as the trace of its covariance matrix.\footnote{Here, we used the fact that $\mathbb{E}[W_{ERM}]=\beta$ as ERM is an unbiased estimate of the mean.}
If we restrict to algorithms whose variance is at most $c\sigma^2d/n$ for some constant $c$, the upper bound
$
\U_2\left(\frac{d}{2}\ln\frac{n}{n-1}\right)$ in \eqref{eqnkkkk1}
can be replaced by 
$
\hat{\U}_2\left(\frac{d}{2}\ln\frac{n}{n-1}\right)$ where
\begin{align}
\hat{\U}_2(R):= \sup_{\substack{P_{\hat W|Z}:~I(\hat W;Z)\leq R,\\ \mathsf{Var}(\hat{W})\leq c\sigma^2d/n}}
\mathbb{E}\left[\text{\rm{gen}}_\mu(Z,\hat W)\right].\label{qn3jkd}
\end{align}
We show in Appendix \ref{rate_of_consistency} that \[
\hat{\U}_2\left(\frac{d}{2}\ln\frac{n}{n-1}\right)=\mathcal{O}\left(\frac1n\right)\] Therefore, the upper bound is order optimal. In fact, 
if we restrict the domain of supremum in \eqref{qn3jkd} by imposing $\mathsf{Var}(\hat{W})\leq \dfrac{\mathsf{Var}(Z)}{n}=\dfrac{\sigma^2d}{n}$ (i.e., setting $c=1$), we get the exact value of the generalization error in \eqref{eqnkkkk2}! 
\begin{remark} In \cite{sefidgaran2022rate}, the authors get the order $1/n$ by assuming that $\ell(\cdot,z)$ is Lipschitz for every $z\in\mathcal{Z}$ i.e., $|\ell(w,z)-\ell(w',z)|\le L\|w-w'\|$ for every $z\in\mathcal{Z}$ and $w,w'\in\mathcal{W}$ (in addition to a $\sigma$-sub-Gaussian assumption on the loss function). However, note that $\ell(w,z)=\|w-z\|^{2}$ in our example is not Lipschitz. For the ERM algorithm, we should note that techniques based on VC-dimension \cite{boucheron2005theory} (for binary loss function) and algorithmic stability \cite[Example 3]{bousquet2002stability} (for the least square loss on bounded hypothesis space) also yield bounds of $\mathcal{O}(1/\sqrt{n})$.
\end{remark}

The above example illustrates that one idea to improve the rate-distortion bound on the generalization error is through restricting the domain of $\U_1(\cdot)$ and $\U_2(\cdot)$ by adding other convex constraints to the set of randomized learning algorithms $P_{W|S}$. In Appendix \ref{further-ideas} we explore two ideas along this line. In particular, an application of supermodular $\mf$-divergences is discussed.

\subsubsection{Related work}
As discussed above, one of our contributions is the novel connection between the generalization error and the rate-distortion theory. This connection is at a formal level and follows from the similarity of the expression of the generalization error and that of the rate-distortion function. 

Note that the lossy source coding problem can be understood as the vector quantization problem which is essentially a clustering task (an unsupervised learning problem). Connections between the rate-distortion function and generalization error in supervised learning algorithms are reported in \cite{bu2021population}, \cite{sefidgaran2022rate} and \cite{xu2022minimum}. However, our approach is different from these works.
In \cite{bu2021population}, the authors utilize the rate-distortion function to find bounds on the generalization error of the algorithm after model compression. This suggests that model compression can be interpreted as a regularization technique to avoid overfitting.
In \cite{sefidgaran2022rate}, the authors derive new upper bounds on the generalization error (in-expectation and tail bound). In particular, by utilizing a different auxiliary algorithm $\hat W$, they derive the following upper bound on the generalization error of $P_{W|S}$:
\begin{align}\label{rate-dist-upper-bound}
    \text{\rm{gen}}(\mu, P_{W|S}) \leq \sqrt{\frac{2\sigma^2\mathcal{R}({D})}{n}}+{D}.
\end{align}
where $\mathcal{R}({D}) = \inf \limits_{P_{\hat{W}|S}:\mathbb{E} \left[d(W,\hat{W};S)\right]\leq {D}}I(S;\hat{W})$ and $d(w,\hat{w};s)=\text{\rm{gen}}_{\mu}(w,s)-\text{\rm{gen}}_{\mu}(w',s)$. While the mutual information upper bound in Theorem \ref{th1_without_mis} is infinite if $W$ is a deterministic function of $S$ (deterministic algorithm) and $W$ is a continuous random variable, the bound in \eqref{rate-dist-upper-bound} is a finite upper bound in this case for $D>0$.
Finally, in \cite{xu2022minimum}, the authors analyze the performance of Bayesian learning under generative models by
defining and upper-bounding the minimum excess risk. Minimum excess risk (MER) can be viewed as a rate-distortion problem with distortion measure between two decision rules \cite{hafez2021rate}.


\subsubsection{Computational aspects of the rate-distortion bound}\label{NewSecComputational}
Assuming that an upper bound $R$ on the $I\big(S;\A(S)\big)$ is known, computing the upper bound $\U_1(R)$ is still a concern if the sample size $n$ is large despite the fact that computing $\U_1(R)$ is a convex optimization problem and there are efficient algorithms for solving it for discrete sample spaces \cite{blahut1972computation}. The following theorem addresses this computational concern by providing an upper bound in a ``single-letter" form:
\begin{theorem}\label{Th2_without_mismatch} For any arbitrary loss function $\ell(w,z)$, and algorithm $\A$ that runs on a training dataset $S$ of size $n$, we have
\begin{align}\mathcal{U}_1\left(R\right)\le \mathcal{U}_2\left(\frac{R}{n}\right)\label{eqnNNN}
\end{align}
where $\mathcal{U}_2(R)$ is defined in \eqref{eqnRDN22}. Consequently,
\[\mathrm{gen}\left(\mu,\A\right)\le \mathcal{U}_2\left(\frac{I(S;\A(S))}{n}\right)
\]
Furthermore, to compute the maximum in \eqref{eqnRDN22}, it suffices to compute the maximum over all conditional distributions $P_{\hat W|Z}$ for $\hat{W}\in \mathcal{W}$ such that the support of $\hat W$ has size at most $|\mathcal{Z}|+1$.
\end{theorem}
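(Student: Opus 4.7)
The plan is to decompose the empirical generalization gap symbol-by-symbol and then apply a time-sharing argument that parallels the proof of Theorem \ref{thmrd3}. First observe that $\mathrm{gen}_\mu(\hat W, S) = L_\mu(\hat W) - L_S(\hat W) = \frac{1}{n}\sum_{i=1}^n \mathrm{gen}_\mu(\hat W, Z_i)$, where $\mathrm{gen}_\mu(\hat W, Z_i) = L_\mu(\hat W) - \ell(\hat W, Z_i)$ is precisely the distortion used in the definition of $\mathcal{U}_2$. I would introduce an auxiliary random variable $Q$ that is uniform over $\{1, 2, \ldots, n\}$ and independent of $(S, \hat W)$, and set $\tilde Z = Z_Q$. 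Since $Z_1, \ldots, Z_n$ are i.i.d.\ $\sim \mu$, one has $\tilde Z \sim \mu$, $\tilde Z$ is independent of $Q$, and $\mathbb{E}[\mathrm{gen}_\mu(\hat W, \tilde Z)] = \frac{1}{n}\sum_i \mathbb{E}[\mathrm{gen}_\mu(\hat W, Z_i)] = \mathbb{E}[\mathrm{gen}_\mu(\hat W, S)]$.

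The key information-theoretic step is to bound $I(\tilde Z; \hat W) \leq R/n$. Using $I(\tilde Z; Q) = 0$, we get
\begin{equation*}
I(\tilde Z; \hat W) \leq I(\tilde Z; \hat W, Q) = I(Z_Q; \hat W \mid Q) = \frac{1}{n}\sum_{i=1}^n I(Z_i; \hat W).
\end{equation*}
The independence of $Z_1, \ldots, Z_n$ together with the standard identity $I(X;Z \mid Y) \geq I(X; Z)$ whenever $X$ is independent of $Y$ (which follows from $I(X;Z\mid Y) - I(X;Z) = I(X;Y\mid Z) - I(X;Y) = I(X;Y\mid Z) \geq 0$) gives $\sum_i I(Z_i; \hat W) \leq I(Z^n; \hat W) = I(S; \hat W) \leq R$. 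Hence $(\tilde Z, \hat W)$ is feasible for the optimization defining $\mathcal{U}_2(R/n)$, and taking the supremum over all Markov kernels $P_{\hat W\mid S}$ with $I(S; \hat W) \leq R$ yields $\mathcal{U}_1(R) \leq \mathcal{U}_2(R/n)$. The claimed bound on $\mathrm{gen}(\mu, \mathcal{A})$ then follows immediately by specializing to $\hat W = \mathcal{A}(S)$ and $R = I(S; \mathcal{A}(S))$.

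For the cardinality bound on the support of $\hat W$ in $\mathcal{U}_2$, I would invoke the Fenchel--Eggleston--Caratheodory--Bunt theorem in the same spirit as Corollary \ref{newcorcard}. For each candidate value $\hat w$, consider the vector in $\mathbb{R}^{|\mathcal{Z}|+1}$ consisting of the $|\mathcal{Z}|-1$ independent coordinates of the conditional distribution $P_{Z \mid \hat W = \hat w}$, together with the conditional expected gap $\mathbb{E}[\mathrm{gen}_\mu(\hat w, Z) \mid \hat W = \hat w]$ and the conditional entropy $H(Z \mid \hat W = \hat w)$. Using $P_{\hat W}$ as a mixing measure, preserving the average of this vector preserves the marginal $P_Z = \mu$, the objective, and $I(Z; \hat W) = H(Z) - H(Z \mid \hat W)$ (since $H(Z)$ is pinned down by the preserved marginal). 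Because this set of vectors is connected as $\hat w$ varies, Fenchel--Eggleston--Bunt guarantees a representation using at most $|\mathcal{Z}|+1$ atoms.

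The main obstacle is extracting the factor of $n$ in the mutual information bound: crude data processing would only yield $I(\tilde Z; \hat W) \leq I(S; \hat W) \leq R$ and hence the weaker $\mathcal{U}_1(R) \leq \mathcal{U}_2(R)$. The sharpening requires combining the i.i.d.\ consequence $I(S; \hat W) \geq \sum_i I(Z_i; \hat W)$ with the averaging inequality $I(\tilde Z; \hat W) \leq \frac{1}{n}\sum_i I(Z_i; \hat W)$ supplied by time-sharing. The cardinality claim is then a comparatively routine Caratheodory-type argument, valid under the implicit assumption that $\mathcal{Z}$ is finite.
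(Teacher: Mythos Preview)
Your proof is correct and matches the paper's own argument: the paper defers Theorem \ref{Th2_without_mismatch} to the more general Theorem \ref{Th_gen_Uf_If}, whose proof (together with the proof of Theorem \ref{thm11} in Appendix \ref{appenEE}) uses exactly your two ingredients---the super-additivity $I(S;\hat W)\ge \sum_i I(Z_i;\hat W)$ from independence of the $Z_i$'s, followed by time-sharing via a uniform index $Q$---and invokes the same Caratheodory--Bunt reasoning (Corollary \ref{newcorcard}) for the cardinality bound. The only cosmetic difference is that the paper's route through Theorem \ref{thmNewRD} relaxes to a vector $\tilde W=(\tilde W_1,\dots,\tilde W_n)$ and then averages using convexity of the distortion-rate function, whereas you keep $\hat W$ scalar and time-share on $Z$; these are equivalent packagings of the same idea, and your version is in fact the one the paper uses verbatim in Appendix \ref{appenEE}.
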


\begin{remark}
In Appendix \ref{further-ideas}, we show that ``auxiliary loss functions" can be utilized to tighten the gap between $\mathcal{U}_{1}(R)$ and $\mathcal{U}_{2}(R/n)$.
\end{remark}

 A generalized version of Theorem \ref{Th2_without_mismatch} is given later in Theorem \ref{Th_gen_Uf_If}. Therefore, a separate proof is not given for the above theorem here.

Note that $\mathcal{U}_2(R)$ is still in terms of a rate-distortion function and does not admit an explicit closed-form expression in general. However, the Blahut-Arimoto algorithm can be used to compute it \cite{blahut1972computation} even when the cardinality of instance space $\mathcal{Z}$ is  infinite (see also \cite{blahut1972computation2}).\footnote{
Rate-distortion theory for continuous or abstract alphabets is discussed at length in the literature, e.g. see 
\cite{csiszar1974extremum, rose1994mapping}. See also  \cite{berger1998lossy} for a survey.} It is shown in \cite{arimoto1972algorithm} that Blahut-Arimoto algorithm in discrete case has only two kinds of convergence speeds depending on the source distribution. One is the exponential convergence, which is a fast convergence, and the other is the convergence of order $O(1/N)$ where $N$ is number of the iterations.

Note that the bound in Theorem \ref{th1_without_mis} is in a very explicit form.
Moreover, the bound in Theorem \ref{th1_without_mis} depends only on mutual information $I\big(S;\A(S)\big)$ while the bounds in Theorem \ref{Th2_without_mismatch}  depends on  $\mu$, and $I\big(S;\A(S)\big)$ (as $\mathcal{U}_1(\cdot)$ and $\mathcal{U}_2(\cdot)$ depend on $\mu$). However, one can obtain a bound from Theorem \ref{Th2_without_mismatch} that does not depend on  $\mu$ by maximizing the bound in Theorem \ref{Th2_without_mismatch} over all distributions $\mu$. We show that even after this maximization, the bound in Theorem \ref{Th2_without_mismatch} is still an improvement over Theorem \ref{th1_without_mis}. To see this, first take some arbitrary $\mu$ such that $\ell(w,Z)$ is $\sigma^{2}$-sub-Gaussian  for every $w\in\mathcal{W}$ under the distribution $\mu$ on $Z$. Then, observe that the bound in Theorem \ref{Th2_without_mismatch} is always less than or equal to the bound in Theorem \ref{th1_without_mis} since $
	 \mathcal{U}_2(R)
		\le	\sqrt{2\sigma^{2}R}$. On the other hand, the following examples show that the maximum of 
		$\mathcal{U}_2\left(R\right)$ over all distributions $\mu$ with the $\sigma^{2}$-sub-Gaussian property  can be strictly less than the bound in Theorem \ref{th1_without_mis}.

\begin{figure}
	   \centering
	\includegraphics[scale=1,width=0.95\linewidth]{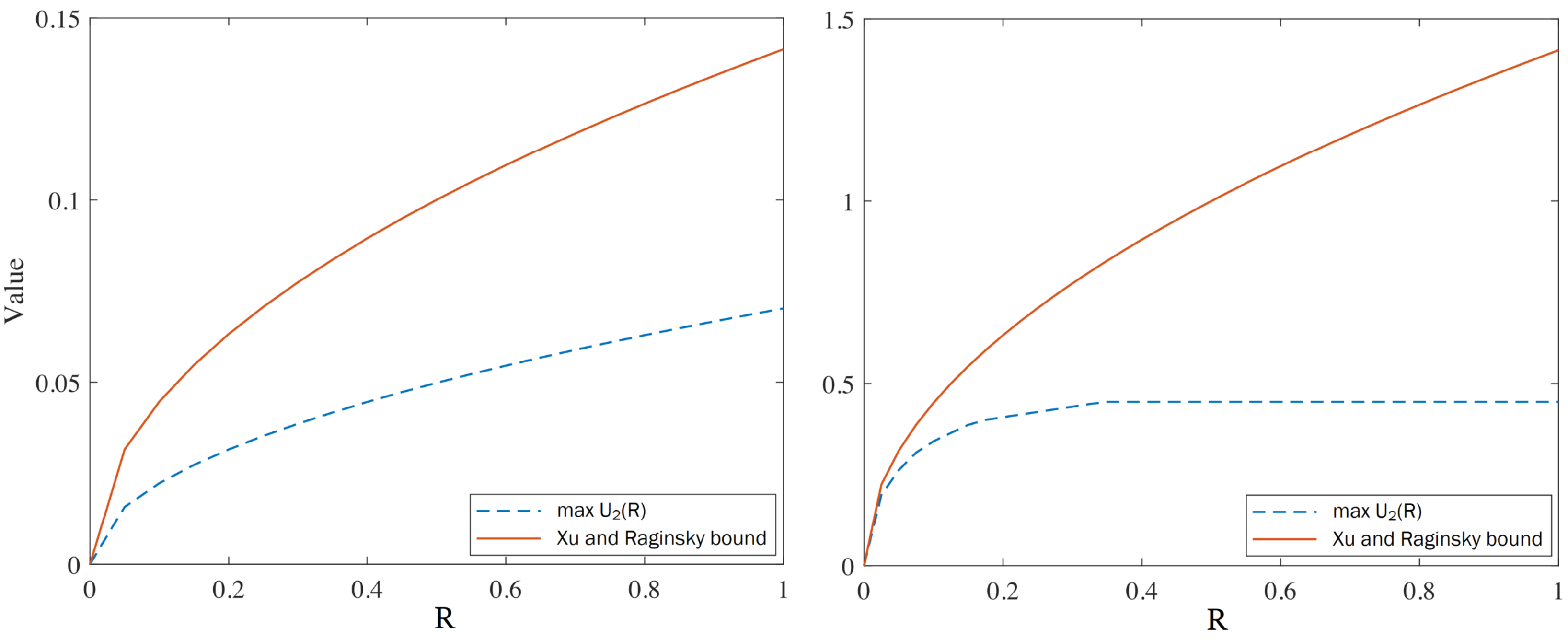}
	\caption{Left picture: The bound in Theorem \ref{th1_without_mis} versus the maximum of the upper bound in Theorem \ref{Th2_without_mismatch} over all distributions $\mu$ when $\mathcal{W}=\mathcal{Z}=\{0,1\}$ and  $\ell(w,z)=w\cdot z$. Right picture: The bound in Theorem \ref{th1_without_mis} versus the maximum of the upper bound in Theorem \ref{Th2_without_mismatch} over all distributions $\mu$ when $\mathcal{W}=[0,1],\,\mathcal{Z}=\{0,1\}$ and  $\ell(w,z)=|w-z|$.}
	\label{fig:screenshot001}
\end{figure}
\begin{example}\label{ex1}
Figure \ref{fig:screenshot001} depicts the bound  in Theorem \ref{th1_without_mis} versus the maximum of the bound in Theorem \ref{Th2_without_mismatch}  over all distributions $\mu$ on $\lbrace0,1\rbrace$ for two particular loss functions. Note that the distortion function itself depends on the choice of $\mu$ and this makes it difficult to find a closed form expression for the maximum of the bound in Theorem \ref{Th2_without_mismatch} over all distributions $\mu$. In the left image, we consider $\mathcal{W}=\mathcal{Z}=\lbrace0,1\rbrace$ and a learning problem on a data set $S$ with the size $n=1$ with loss function $\ell(w,z)=w\cdot z$. In the right picture, we consider $\mathcal{W}=[0,1],\,\mathcal{Z}=\lbrace0,1\rbrace$ and a learning problem on a data set $S$ with the size $n=1$ with loss function $\ell(w,z)=|w-z|$.
\end{example}

\subsection{Generalization error for algorithms with bounded input/output mutual $\mf$-information}
Theorem \ref{th1_without_mis} shows that 
the generalization error of a learning algorithm can be bounded from above in terms of the mutual information between the input and output of the algorithm \cite{russo2019much,xu2017information}. Similar bounds are obtained in \cite{bu2020tightening,lopez2018generalization,wang2019information,hellstrom2020generalization,aminian2020jensen,esposito2020robust,esposito2019generalization,jiao2017dependence,asadi2018chaining} for various generalizations and extensions using other measures of dependence. 

In this paper we are interested in the generalization error using the mutual $\mf$-information instead of Shannon's mutual information. As before, the \emph{sharpest} possible upper bound on the generalization error given an upper bound $R$ on $I_{\mf}(S;\A(S))$ is
 \begin{align}\label{eqnRDf1}
     \U^{I_{\mf}}_{1}(R)\triangleq\sup_{P_{W|S}:~I_{\mf}(S;W)\le R}\mathbb{E}[\text{\rm{gen}}_\mu(S,\hat W)],
 \end{align}
 where $\text{\rm{gen}}_\mu(s,w)$ is defined in \eqref{gen_gap} and 
 the training data $S=(Z_{1},Z_{2},\cdots,Z_{n})\in\mathcal{Z}^n$
 is a sequence of $n$ i.i.d.\ repetitions of samples generated according to a distribution $\mu$. It follows that for any algorithm $\A$ satisfying $I_{\mf}\big(S;\A(S)\big)\leq R$, we have
\[
\mathrm{gen}\left(\mu,\A\right)
\leq {\U}^{I_{\mf}}_{1}(R).\]

Similarly, assuming $I_\mf(Z_i;W)\le R_{i}$ for $1\leq i\leq n$, we obtain the following bound 
\begin{align}
\mathrm{gen}\left(\mu,\A\right)
\leq \frac1n\sum_{i=1}^n{\U}^{I_{\mf}}_{2}(R_{i}),\label{eqnU2fnew}\end{align}
where
\begin{align}\label{eqnRDf2}
      \U^{I_{\mf}}_{2}(R)\triangleq\sup_{P_{\hat W|Z}:~I_{\mf}(Z;\hat W)\le R}\mathbb{E}[\text{\rm{gen}}_\mu(Z,\hat W)]
\end{align}
where  where $\text{\rm{gen}}_\mu(z,w)$ is defined in \eqref{gen_gap2}.

As before, the bound  $\U^{I_{\mf}}_{2}(R)$ is easier to compute than $\U^{I_{\mf}}_{1}(R)$ because the optimization problem in \eqref{eqnRDf2} is for a single symbol $Z$ whereas the  optimization problem in \eqref{eqnRDf1} is for a sequence $S$ of $n$ symbols.

The following bound follows from the characterization in \eqref{eqnRD2} and Theorem \ref{cor_f-dist_lower_bound} for $R=I_\mf(W;S)$.
\begin{corollary}\label{U1_explicit_upper_bound}
We have
\begin{align}
   \U^{I_{\mf}}_{1}(R)\le \Psi^{\mf^{*}}_{L_{S}(\cdot)}(R)
\end{align}
where $\mf^*$ is the convex conjugate function of $\mf$ defined in \eqref{eqnfs}
and
\[
\Psi^{\mf^{*}}_{L_{S}(\cdot)}(x)\triangleq \inf_{\lambda\ge0,\,a\in\mathbb{R}}\left\lbrace\frac{a+x}{\lambda}+\frac{1}{\lambda}\sup_{w\in\mathcal{W}}\mathbb{E}_{P_{S}}\left[\mf^{*}\left(\lambda L_{S}(w)-\lambda L_{\mu}(w)-a\right)\right]\right\rbrace.
\]
In this corollary $I_{\mf}$ can be either the PV or CKZ notions defined in \eqref{IfPV} and \eqref{IfCKZ}.
\end{corollary}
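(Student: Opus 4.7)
The plan is to recognize $\U^{I_\mf}_1(R)$ as the negative of an $\mf$-rate-distortion value so that Theorem~\ref{cor_f-dist_lower_bound} applies directly. Specifically, I would take the source to be $S \sim \mu^{\otimes n}$, the reproduction alphabet to be $\mathcal{W}$, and introduce the distortion
\[
d(s,w) \triangleq L_s(w) - L_\mu(w) = -\mathrm{gen}_\mu(w,s).
\]
Under this identification, the supremum in \eqref{eqnRDf1} for the PV notion rewrites as
\[
\U^{I^{PV}_\mf}_{1}(R) \;=\; -\inf_{P_{\hat W|S}:~ I^{PV}_\mf(S;\hat W) \le R} \mathbb{E}_{P_S P_{\hat W|S}}\bigl[d(S,\hat W)\bigr] \;=\; -\D^{PV}_\mf(R),
\]
which is minus the $\mf$-rate-distortion function of \eqref{eqnDfv2a} for the source-distortion pair $(\mu^{\otimes n}, d)$.

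Given this reformulation, I would invoke Theorem~\ref{cor_f-dist_lower_bound} for this source-distortion pair. Substituting our distortion into the auxiliary function $\phi_\mf(\lambda,a) = \sup_w \mathbb{E}_{P_S}[\mf^*(\lambda d(S,w) - a)]$ produces the quantity $\sup_w \mathbb{E}_{P_S}[\mf^*(\lambda(L_S(w) - L_\mu(w)) - a)]$ appearing inside the definition of $\Psi^{\mf^*}_{L_S(\cdot)}$. Reversing the sign of the resulting lower bound on $\D^{PV}_\mf(R)$ to pass to an upper bound on $\U^{I^{PV}_\mf}_1(R)$, and (if needed) renaming the unconstrained shift parameter $a\in\mathbb{R}$ inside the infimum, yields exactly the stated expression $\Psi^{\mf^*}_{L_S(\cdot)}(R)$.

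For the CKZ version of the corollary, I would invoke the inequality $I^{CKZ}_\mf(S;W) \ge I^{PV}_\mf(S;W)$ recorded right after \eqref{rmkne1}. This means the feasible set in \eqref{eqnRDf1} for the CKZ constraint is a subset of that for the PV constraint, whence
\[
\U^{I^{CKZ}_\mf}_1(R) \;\le\; \U^{I^{PV}_\mf}_1(R) \;\le\; \Psi^{\mf^*}_{L_S(\cdot)}(R),
\]
so the bound transfers to the CKZ setting without additional work.

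The only step that requires thought is the first one, namely recognizing that the generalization-error supremum \eqref{eqnRDf1} is precisely (minus) a genuine $\mf$-rate-distortion function with the distortion $d(s,w) = L_s(w) - L_\mu(w)$. Once that reformulation is in place, the corollary is essentially a relabeling of Theorem~\ref{cor_f-dist_lower_bound}; the only obstacle worth flagging is the careful tracking of the sign conventions (on the distortion and on the dummy shift $a$) so that $\mf^*$ ends up with the correct argument.
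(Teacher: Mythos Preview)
Your approach is correct and is precisely the paper's own argument: the text immediately preceding Corollary~\ref{U1_explicit_upper_bound} says only that the bound ``follows from the characterization in \eqref{eqnRD2} and Theorem~\ref{cor_f-dist_lower_bound},'' which is exactly the identification $\U^{I^{PV}_\mf}_1(R)=-\D^{PV}_\mf(R)$ with distortion $d(s,w)=-\mathrm{gen}_\mu(w,s)$ that you spell out, followed by the observation $I^{CKZ}_\mf\ge I^{PV}_\mf$ for the CKZ case. One small note on the sign bookkeeping you flagged: carrying Theorem~\ref{cor_f-dist_lower_bound} through literally (it involves $\phi_\mf(-\lambda,a)$, not $\phi_\mf(\lambda,a)$) produces $\mf^{*}\bigl(\lambda L_\mu(w)-\lambda L_S(w)-a\bigr)$ rather than the argument written in the corollary's definition of $\Psi$, so the discrepancy you anticipated is real and cannot be removed by renaming $a$ alone---it is a minor sign slip in the paper's stated $\Psi$, not in your reasoning.
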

\begin{example}
Let $\mf(x)=x\ln x-x+1$ and $\ell(w,Z)$ be $\sigma^{2}$-sub-Gaussian under $P_{Z}$ for all $w\in\mathcal{W}$. Then $\mf^{*}(y)=e^{y}-1$ and
\begin{align}
&\Psi^{\mf^{*}}_{L_{S}(\cdot)}(I(S;W))= \inf_{\lambda\ge0,\,a\in\mathbb{R}}\left\lbrace\frac{a+I(S;W)}{\lambda}+\frac{1}{\lambda}\sup_{w\in\mathcal{W}}\mathbb{E}_{P_{Z}}\left[\exp\left(\lambda L_{S}(w)-\lambda L_{\mu}(w)-a\right)-1\right]\right\rbrace\nonumber\\
&=\inf_{\lambda\ge0}\left\lbrace \frac{I(S;W)}{\lambda}+\frac{1}{\lambda}\ln\sup_{w\in\mathcal{W}}\mathbb{E}_{P_{Z}}\left[\exp\left(\lambda L_{S}(w)-\lambda L_{\mu}(w)\right)\right]\right\rbrace\nonumber\\
&\le \inf_{\lambda\ge0}\left\lbrace \frac{I(S;W)}{\lambda}+\frac{\lambda\sigma^{2}}{2n}\right\rbrace=\sqrt{\frac{2\sigma^{2}I(S;W)}{n}},
\end{align}
recovering Theorem \ref{th1_without_mis}.
\end{example}
Thus, from Corollary \ref{U1_explicit_upper_bound} that 
\begin{align}
   \text{\rm{gen}}(\mu,P_{W|S})\le \Psi^{\mf^{*}}_{L_{S}(\cdot)}(I_{\mf}(S;W))\label{eqnpsimupw}
\end{align}

Similarly, from \eqref{eqnU2fnew} and Theorem \ref{cor_f-dist_lower_bound} we obtain
\begin{corollary}\label{th27}
Let $\mf:[0,\infty)\to \mathbb{R}\cup\{\infty\}$ be a convex function.
Then
\begin{align}
\text{\rm{gen}}(\mu,P_{W|S})\le \frac{1}{n}\sum_{i=1}^{n}\U^{I_{\mf}}_{2}(I_{\mf}(Z_{i};W))\le \frac{1}{n}\sum_{i=1}^{n}\Psi^{\mf^{*}}_{\ell(\cdot,Z)}(I_{\mf}(Z_{i};W))\label{f_mutual_upperbound_gen_individual}
\end{align}
where
\begin{align}\label{def_psi}
    \Psi^{\mf^{*}}_{\ell(\cdot,Z)}(x)\triangleq\inf_{\lambda\ge0,\,a\in\mathbb{R}}\left\lbrace\frac{a+x}{\lambda}+\frac{1}{\lambda}\sup_{w\in\mathcal{W}}\mathbb{E}_{P_{Z}}\left[\mf^{*}\left(\lambda\ell(w,Z)-\lambda\mathbb{E}_{P_{Z}}\ell(w,Z)-a\right)\right]\right\rbrace.
\end{align}
In this corollary $I_{\mf}$ can be either the PV or CKZ notions defined in \eqref{IfPV} and \eqref{IfCKZ}.
\end{corollary}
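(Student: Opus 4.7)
The plan is to prove the two inequalities separately, both relying on machinery developed earlier in the paper. The first inequality is essentially a restatement of \eqref{eqnU2fnew}, while the second is obtained by recognizing $\U^{I_\mf}_2(R)$ as (the negative of) an $\mf$-rate-distortion function and invoking Theorem \ref{cor_f-dist_lower_bound}.

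For the first inequality, I would decompose the generalization error using linearity of expectation and the i.i.d.\ structure of $S$:
$$\text{gen}(\mu, P_{W|S}) = \mathbb{E}\!\left[L_\mu(W) - L_S(W)\right] = \frac{1}{n}\sum_{i=1}^n \mathbb{E}\!\left[\text{gen}_\mu(W, Z_i)\right].$$
For each $i$, the pair $(Z_i, W)$ has a joint distribution whose $Z_i$-marginal is $\mu$ and whose conditional $P_{W|Z_i}$ is obtained from $P_{W|S}$ by marginalizing over $Z_j$ for $j\neq i$; by the definition of $\U^{I_\mf}_2$ in \eqref{eqnRDf2}, $\mathbb{E}[\text{gen}_\mu(W,Z_i)] \le \U^{I_\mf}_2(I_\mf(Z_i;W))$. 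Summing gives the first inequality.

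For the second inequality, I would view $\U^{I_\mf}_2(R)$ as the negative of an $\mf$-rate-distortion function. Set source $X := Z \sim \mu$, reproduction space $\hat{\mathcal{X}} := \mathcal{W}$, and the (possibly signed) distortion $d(z,w) := -\text{gen}_\mu(w,z) = \ell(z,w) - L_\mu(w)$, which is allowed by the setup of Section \ref{sec: lossy compression}. Then
$$-\U^{I_\mf,\text{PV}}_2(R) \;=\; \inf_{P_{\hat W|Z}:\, I^{PV}_\mf(Z;\hat W)\le R} \mathbb{E}[d(Z,\hat W)] \;=\; \D^{PV}_\mf(R).$$
Theorem \ref{cor_f-dist_lower_bound} applied to this rate-distortion function yields a lower bound on $\D^{PV}_\mf(R)$, hence an upper bound on $\U^{I_\mf,\text{PV}}_2(R)$. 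Substituting $d(z,w)=\ell(z,w)-L_\mu(w)$ into $\phi_\mf(-\lambda,a)=\sup_{w}\mathbb{E}_{P_Z}[\mf^{*}(-\lambda d(Z,w)-a)]$ and simplifying delivers exactly $\Psi^{\mf^{*}}_{\ell(\cdot,Z)}(R)$. For the CKZ notion, the bound is immediate because $I^{CKZ}_\mf\ge I^{PV}_\mf$ shrinks the feasible set of the supremum in \eqref{eqnRDf2}, so $\U^{I_\mf,\text{CKZ}}_2(R)\le \U^{I_\mf,\text{PV}}_2(R) \le \Psi^{\mf^{*}}_{\ell(\cdot,Z)}(R)$.

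The main obstacle is essentially bookkeeping: tracking the signs of $\lambda$ and $a$ in the substitution into $\phi_\mf(-\lambda,a)$ and verifying that the supremum over reproduction symbols $\hat x$ in Theorem \ref{cor_f-dist_lower_bound} translates into the supremum over hypotheses $w\in\mathcal{W}$ in the definition of $\Psi^{\mf^{*}}_{\ell(\cdot,Z)}$. No genuinely new technical content is required beyond combining the per-sample decomposition of the generalization error, the rate-distortion reformulation of $\U^{I_\mf}_2$, and the already-established lower bound on the $\mf$-rate-distortion function.
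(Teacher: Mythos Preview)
Your proposal is correct and follows exactly the paper's own route: the paper states just before Corollary~\ref{th27} that the result is obtained ``from \eqref{eqnU2fnew} and Theorem~\ref{cor_f-dist_lower_bound},'' which is precisely your per-sample decomposition for the first inequality and the identification of $-\U^{I_\mf}_2(R)$ with $\D^{PV}_\mf(R)$ (for the distortion $d(z,\hat w)=-\mathrm{gen}_\mu(\hat w,z)$) followed by Theorem~\ref{cor_f-dist_lower_bound} for the second. Your observation that the CKZ case follows because $I^{CKZ}_\mf\ge I^{PV}_\mf$ shrinks the feasible set is also how the paper handles both notions simultaneously, and the sign bookkeeping you flag is indeed the only delicate point in matching $\phi_\mf(-\lambda,a)$ to the expression inside $\Psi^{\mf^*}_{\ell(\cdot,Z)}$.
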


Corollary \ref{th27} can be used to derive explicit bounds on the generalization error in the following example.

\begin{example}
\begin{enumerate}[(i).]
    
    \item Assume $\mf(x)=(x-1)^{2}$ for $x\in [0,\infty)$. Assume $\ell(w,Z)$ is a random variable with finite variance $Var(\ell(w,Z))\le \sigma^{2}$ for all $w\in\mathcal{W}$. Then,
\begin{align*}
    \text{\rm{gen}}(\mu,P_{W|S})\le \frac{1}{n}\sum_{i=1}^{n}\sqrt{\sigma^{2}\chi^{2}(P_{WZ_{i}}\|P_{W}P_{Z_{i}})}\le \sqrt{\frac{\sigma^{2}}{n}\chi^{2}(P_{WS}\|P_{W}P_{S})}.
\end{align*}
where $\chi^{2}(\cdot\|\cdot)$ is defined in \eqref{def:chi-squared}.
\item Assume $\ell(w,Z)$ is a $\sigma^{2}$-sub-Gaussian random variable under $P_{Z}$ for all $w\in\mathcal{W}$ and $\mf(x)=(\alpha+\bar{\alpha}x)\ln\left(\alpha+\bar{\alpha}x\right)$ defined on the domain $\mathbb{R}_{+}$ for $\alpha\in[0,1]$ and $\alpha+\bar{\alpha}=1$. Then
\begin{align}
     \text{\rm{gen}}(\mu,P_{W|S})&\le \frac{1}{n}\sum_{i=1}^{n}\sqrt{\frac{2\sigma^{2}}{\bar{\alpha}^{2}}D\left(\bar{\alpha} P_{WZ_{i}}+\alpha P_{W}P_{Z_{i}}\|P_{W}P_{Z_{i}}\right)}\nonumber\\
     &\le \sqrt{\frac{2\sigma^{2}}{\bar{\alpha}^{2}n}D\left(\bar{\alpha} P_{WS}+\alpha P_{W}P_{S}\|P_{W}P_{S}\right)} ,\quad\forall \alpha\in[0,1].
\end{align}
\end{enumerate}
\label{claim_example_main_theorem}
\end{example}
Proof of Example \ref{claim_example_main_theorem} is given in Section \ref{proof_claim_example_main_theorem}. The bound 
\begin{align*}
    \text{\rm{gen}}(\mu,P_{W|S})\le \sqrt{\frac{\sigma^{2}}{n}\chi^{2}(P_{WS}\|P_{W}P_{S})}.
\end{align*}
in part (i) may be compared to the Xu-Raginsky bound \cite{xu2017information} (Theorem 
\ref{th1_without_mis}):
\begin{align*}
    \text{\rm{gen}}(\mu,P_{W|S})\le \sqrt{\frac{\sigma^{2}}{n}D_{KL}(P_{WS}\|P_{W}P_{S})}.
\end{align*}
 The assumption of $\sigma^2$-subgaussian is replaced by a weaker variance assumption. However, the bound we obtain is in terms of $\chi^2$-information rather than Shannon's mutual information. The $\chi^2$-information is always greater than or equal to Shannon's mutual information and is utilized in statistics (e.g. in the chi-square test of independence).

{\color{black}
\subsection{Upper bound on the generalization error using $\mf$ in the function class $\mathcal{F}$}\label{Orlicz_norm_based_bounds}
The following theorem generalizes Theorem \ref{Th2_without_mismatch} and is helpful from a computational perspective.
\begin{theorem}\label{Th_gen_Uf_If}
Let $\mf\in\mathcal{F}$. Then for the CKZ and MBGYA notions of mutual $\mf$-information, we have
\begin{align}\label{single_letter_I_f_eq}
   \U^{I_{\mf}}_{1}(R)\le \U^{I_{\mf}}_{2}(R/n), \qquad\forall R\geq 0.
\end{align}
Consequently, for any algorithm $\A$ we have
\[
\mathrm{gen}\left(\mu,\A\right)
\leq \U^{I_{\mf}}_{2}\left(\frac{I_{\mf}\big(S;\A(S)\big)}{n}\right).\]
Furthermore, to compute the maximum in $\U^{I_{\mf}}_{2}$ for the CKZ notion of the $\mf$-information, it suffices to compute the maximum over all conditional distributions $P_{\hat W|Z}$ for $\hat{W}\in \mathcal{W}$ such that the support of $\hat W$ has size at most $|\mathcal{Z}|+1$.

\end{theorem}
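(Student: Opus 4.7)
The plan is to mimic the time-sharing argument used in the proof of Theorem \ref{thmrd3}, combined with the convexity (property (ii)) and the super-additivity (property (iii)) of $I_\mf$ established in Theorem \ref{new properties_I_f}. Fix an arbitrary conditional $P_{W|S}$ satisfying $I_\mf(S;W)\le R$ where $I_\mf\in\{I_\mf^{CKZ},I_\mf^{\text{MBGYA}}\}$ and where $S=(Z_1,\dots,Z_n)$ is i.i.d.\ $\mu$. Introduce a time-sharing index $Q$ uniform on $\{1,\dots,n\}$ and independent of $(S,W)$, and define the auxiliary pair $(\tilde Z,\tilde W):=(Z_Q,W)$. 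Because all $Z_i$ share the marginal $\mu$, we have $\tilde Z\sim\mu$ (independently of $Q$), and
\begin{align*}
\mathbb{E}\bigl[\text{\rm gen}_\mu(\tilde W,\tilde Z)\bigr]
=\frac1n\sum_{i=1}^n\mathbb{E}\bigl[\text{\rm gen}_\mu(W,Z_i)\bigr]
=\mathbb{E}\bigl[\text{\rm gen}_\mu(W,S)\bigr],
\end{align*}
using the decomposition $\text{\rm gen}_\mu(W,S)=\frac1n\sum_i\text{\rm gen}_\mu(W,Z_i)$.

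Next I would bound $I_\mf(\tilde Z;\tilde W)$. Since $\tilde Z$ has marginal $\mu$ for every realization of $Q$, property (ii) of Theorem \ref{new properties_I_f} (which requires $\mf\in\mathcal{F}$ for the MBGYA case, and is free for CKZ) applies to yield
\begin{align*}
I_\mf(\tilde Z;\tilde W)\;\le\;I_\mf(\tilde Z;\tilde W\mid Q)\;=\;\frac1n\sum_{i=1}^n I_\mf(Z_i;W).
\end{align*}
Invoking property (iii) of Theorem \ref{new properties_I_f}, which holds for both CKZ and MBGYA whenever $\mf\in\mathcal{F}$, gives the super-additivity inequality
\begin{align*}
\sum_{i=1}^n I_\mf(Z_i;W)\;\le\;I_\mf(Z^n;W)\;=\;I_\mf(S;W)\;\le\;R.
\end{align*}
Combining these two displays yields $I_\mf(\tilde Z;\tilde W)\le R/n$, so the pair $(\tilde Z,\tilde W)$ is feasible in the definition of $\U_2^{I_\mf}(R/n)$, giving $\mathbb{E}[\text{\rm gen}_\mu(W,S)]\le\U_2^{I_\mf}(R/n)$. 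Taking supremum over all $P_{W|S}$ with $I_\mf(S;W)\le R$ produces \eqref{single_letter_I_f_eq}, and applying it to $\mathcal A$ with $R=I_\mf(S;\mathcal A(S))$ yields the stated bound on $\text{\rm gen}(\mu,\mathcal A)$.

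For the cardinality claim I would reuse the Caratheodory--Bunt argument already employed in Corollary \ref{newcorcard}. Fix a candidate conditional $P_{\hat W|Z}$ attaining a value close to $\U_2^{I_\mf^{CKZ}}(R/n)$, compute the reverse kernel $P_{Z|\hat W}$, and then search over marginals $P_{\hat W}$ consistent with $P_Z=\mu$: this imposes $|\mathcal{Z}|$ linear equality constraints, while both $\mathbb E[\text{\rm gen}_\mu(\hat W,Z)]$ and $I_\mf^{CKZ}(Z;\hat W)=\mathbb E_{P_Z P_{\hat W}}\mf(dP_{Z|\hat W}/dP_Z)$ are linear in $P_{\hat W}$. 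Hence by Caratheodory--Bunt the optimum is attained by a $P_{\hat W}$ supported on at most $|\mathcal Z|+1$ points. The main subtlety I anticipate is verifying that the convexity step above is applicable in the MBGYA case, which is exactly the place where the hypothesis $\mf\in\mathcal{F}$ enters; the PV notion is excluded from the statement precisely because part (iii) of Theorem \ref{new properties_I_f} is not available for $I_\mf^{PV}$.
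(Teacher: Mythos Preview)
Your proof is correct and follows essentially the same approach as the paper: both rely on the super-additivity of $I_\mf$ (property (iii) of Theorem \ref{new properties_I_f}) together with its convexity in the channel (property (ii)), and both invoke the Caratheodory--Bunt argument from Corollary \ref{newcorcard} for the cardinality bound. The only cosmetic difference is that the paper first relaxes $\U_1^{I_\mf}(R)$ to an $n$-output version $\bar{\U}_1^{I_\mf}(R)$ and then appeals to the already-proved Theorem \ref{thmNewRD}, whereas you carry out the time-sharing argument directly on the single output $W$; your route is slightly more streamlined for this particular statement, while the paper's is more modular.
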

The proof is given in Section \ref{Proof_Th_gen_Uf_If}. 

For a finite hypothesis space $\mathcal{W}$, one can find an explicit upper bound on the generalization error that holds \emph{for any arbitrary learning algorithm} and depends only on the sample size $n$ as follows:
\begin{corollary}\label{Proposition: decay rate K_alpha} Assume that $\mf\in \mathcal{F}$. Then
\[
\mathrm{gen}\left(\mu,\A\right)\le \Psi^{\mf^{*}}_{\ell(\cdot,Z)}\left(\frac{H_{\mf}(W_{unif})}{n}\right)
\]
where $H_{\mf}$ is defined in Definition \ref{def_f_entropy} and $\Psi^{\mf^{*}}_{\ell(\cdot,Z)}(x)$ is defined in \eqref{def_psi}, and $W_{unif}$ is a random variable with uniform distribution on $\mathcal{W}$.
\end{corollary}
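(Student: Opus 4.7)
The plan is a short chain built from three earlier ingredients: the single-letterization bound of Theorem \ref{Th_gen_Uf_If}, the explicit upper bound on $\U^{I_\mf}_{2}(R)$ extracted from Corollary \ref{th27}, and the two structural facts about $\mf$-entropy recorded in Theorem \ref{property_f_entropy} --- namely $I_\mf(X;Y)\le H_\mf(Y)$ (part (iii)) and that $H_\mf(\cdot)$ is maximized at the uniform distribution (part (ii)). Since $\mf\in\mathcal{F}$ implies $\mf'''\le 0$, the concavity hypothesis required for part (ii) holds for the CKZ version of $H_\mf$; and for the MBGYA version no extra hypothesis is needed. In either case, $W_{unif}$ maximizes $H_\mf$ over distributions on the finite set $\mathcal{W}$.

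Concretely, let $W=\A(S)$. Theorem \ref{Th_gen_Uf_If} gives
\[
\mathrm{gen}(\mu,\A)\le \U^{I_\mf}_{2}\!\left(\frac{I_\mf(S;W)}{n}\right).
\]
The map $R\mapsto \U^{I_\mf}_{2}(R)$ is nondecreasing because the feasible set in its defining supremum \eqref{eqnRDf2} grows with $R$, and the map $x\mapsto \Psi^{\mf^{*}}_{\ell(\cdot,Z)}(x)$ is nondecreasing because for every fixed admissible $(\lambda\ge 0,a)$ the expression inside the infimum in \eqref{def_psi} is affine in $x$ with nonnegative slope $1/\lambda$, so the infimum inherits monotonicity. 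Combining $I_\mf(S;W)\le H_\mf(W)\le H_\mf(W_{unif})$ with the inequality $\U^{I_\mf}_{2}(R)\le \Psi^{\mf^{*}}_{\ell(\cdot,Z)}(R)$ then produces
\[
\mathrm{gen}(\mu,\A)\le \U^{I_\mf}_{2}\!\left(\frac{H_\mf(W_{unif})}{n}\right)\le \Psi^{\mf^{*}}_{\ell(\cdot,Z)}\!\left(\frac{H_\mf(W_{unif})}{n}\right),
\]
which is the claim.

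The one point worth verifying is that Corollary \ref{th27} actually supplies the uniform inequality $\U^{I_\mf}_{2}(R)\le \Psi^{\mf^{*}}_{\ell(\cdot,Z)}(R)$ for every $R\ge 0$, and not merely at the specific values $R=I_\mf(Z_i;W)$ at which it is written. Inspecting its proof, this is immediate: the bound is obtained by applying the generic lower bound on the $\mf$-rate-distortion function (Theorem \ref{cor_f-dist_lower_bound}) to the distortion $-\mathrm{gen}_\mu$, and this application uses no property of the algorithm and is valid at an arbitrary rate. This is the only mildly non-routine observation; once it is in place, the resulting bound depends on the learning algorithm only through $|\mathcal{W}|$ (via $H_\mf(W_{unif})$), making it genuinely algorithm-independent.
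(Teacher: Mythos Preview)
Your proof is correct and follows essentially the same chain as the paper: Theorem \ref{Th_gen_Uf_If}, then Theorem \ref{property_f_entropy} (iii) and (ii), then the passage from $\U^{I_\mf}_{2}$ to $\Psi^{\mf^{*}}_{\ell(\cdot,Z)}$ via Corollary \ref{th27}/Theorem \ref{cor_f-dist_lower_bound}. You are in fact more careful than the paper on two points the paper leaves tacit --- the monotonicity of $\U^{I_\mf}_{2}$ and of $\Psi^{\mf^{*}}_{\ell(\cdot,Z)}$, and the observation that the bound $\U^{I_\mf}_{2}(R)\le \Psi^{\mf^{*}}_{\ell(\cdot,Z)}(R)$ holds at every $R$ rather than only at $R=I_\mf(Z_i;W)$ --- and you also spell out the final step to $\Psi^{\mf^{*}}$, which the paper's displayed chain stops just short of.
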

\begin{proof}
From Theorem \ref{Th_gen_Uf_If}, we get
\begin{align}
    \mathrm{gen}\left(\mu,\A\right)&\le \U^{I_{\mf}}_{1}(I_{\mf}(S;\A(S)))\le \U^{I_{\mf}}_{2}(I_{\mf}(S;\A(S))/n)\nonumber\\
    &\overset{(a)}{\le} \U^{I_{\mf}}_{2}(H_{\mf}(\A(S))/n)\overset{(b)}{\le}\U^{I_{\mf}}_{2}(H_{\mf}(W_{unif})/n)
\end{align}
(a) comes from Part (iii) of Theorem \ref{property_f_entropy} and (b) comes from Part (ii) of Theorem \ref{property_f_entropy}.
\end{proof}
\begin{example}
Let $\mf(x)=(\alpha+(1-\alpha)x)\ln(\alpha+(1-\alpha)x)$ defined on the domain $\mathbb{R}_{+}$ for $\alpha\in[0,1]$. We know that $\mf\in\mathcal{F}$. Let $\ell(w,Z)$ be $\sigma^{2}$-sub-Gaussian under $P_{Z}$ for $w\in\mathcal{W}$. So from Proposition \ref{Proposition: decay rate K_alpha}
\[
\mathrm{gen}\left(\mu,\A\right)\le \Psi^{\mf^{*}}_{\ell(\cdot,Z)}\left(\frac{H_{\mf}(W_{unif})}{n}\right)=\sqrt{\frac{2\sigma^{2}}{(1-\alpha)^{2}}\frac{H_{\mf}(W_{unif})}{n}}
\]
where in this example
\[
H_{\mf}(W_{unif})=\left(1-\frac{1}{|\mathcal{W}|}\right)\alpha\ln\alpha+\left(\frac{\alpha}{|\mathcal{W}|}+1-\alpha\right)\ln(\alpha+(1-\alpha)|\mathcal{W}|).
\]
In particular, when $\alpha=0$, we get $\mathrm{gen}\left(\mu,\A\right)\le \sqrt{2\sigma^{2}\frac{\ln|\mathcal{W}|}{n}}$. However, one can minimize 
\[\frac{H_{\mf}(W_{unif})}{(1-\alpha)^{2}}
\]
over $\alpha$ and get strictly better bounds. The minimum of the above expression occurs at $\alpha^{*}=1/(|\mathcal{W}|-1)$ and substituting $\alpha^{*}$, we get
\[
\mathrm{gen}\left(\mu,\A\right)\le \sqrt{2\sigma^{2}\frac{(|\mathcal{W}|-1)^{2}}{|\mathcal{W}|\cdot(|\mathcal{W}|-2)}\frac{\ln(|\mathcal{W}|-1)}{n}}
\]
slightly improving over $\sqrt{2\sigma^{2}\frac{\ln|\mathcal{W}|}{n}}$.

\end{example}

\section{Future work}
The following questions are left for future work:
\begin{itemize}
    \item Proposition \ref{prop_super-modular_D_f} shows that
$D_\mf$ is super-modular for any function $\mf\in\mathcal{F}$. Is the converse true? One can show that $D_\mf$ is super-modular if and only if for any product distribution $q_{X_1}q_{X_2}$ and any function $Z(x_1,x_2)$ satisfying $\mathbb{E}_{q}[Z]=1$ we have
\begin{align}
&\mathbb{E}_{X_1,X_2}\left[\mathsf f\left(Z\right)\right]
\geq
\mathbb{E}_{X_1}\left[\mathsf f\left(\mathbb{E}_{X_2}Z\right)\right]+\mathbb{E}_{X_2}\left[\mathsf f\left(\mathbb{E}_{X_1}Z\right)\right].\label{erjklnn}
\end{align}
\item It is known that  $I_\mf^{PV}$ has applications in hypothesis testing and channel coding  \cite[Theorem 8]{polyanskiy2010arimoto}. Can we find similar applications for $I_\mf^{MBGYA}$, perhaps to bound the exponent of certain conditional events?

\item In Theorem \ref{new properties_I_f}, it is shown that $I^{CKZ}$ and $I^{\text{MBGYA}}$ satisfy $AB$-property mentioned in the introduction. Does $I^{PV}_\mf(X;Y)$ satisfy the $AB$-property? The proof does not seem to go through.

\item To improve the rate-distortion bound on the generalization error, we discuss a number of ideas in Appendix \ref{further-ideas}. Further exploration of these ideas is left as future work.
\end{itemize}




}


\bibliographystyle{IEEEtran}
\bibliography{reference}

\section{Proofs of the results}
\label{sec:proofs}
In the following sections, we present the proofs of the results stated in the previous sections in their order of appearance. 
\subsection{Proof of Proposition \ref{prop_super-modular_D_f}}\label{proof_prop_super_modular_D_f}

It is known that $\mf$-divergence is related to $\Phi$-entropy: given two distributions $P$ and $Q$, let $g(x)=\frac{dP}{dQ}(x)$. Then,
\begin{align}
D_\mf(P\|Q)&=\mathbb{E}_{Q}\left[\mf\left(\frac{dP}{dQ}(X)\right)\right]
\\&=\int \mf\left(g(x)\right)dQ(x)-\mf(1)
\\&=\int \mf\left(g(x)\right)dQ(x)-\mf\left(\int g(x)dQ(x)\right):=\mathbb{H}_{\mf}(g(X)).
\end{align}
Let $\Phi(x)=\mf(x)$. Then, $D_\mf(P\|Q)$ is the $\Phi$-entropy of the function $g(x)$. 

In \cite[Theorem 14.1]{boucheron2013concentration}, the authors show that for $\Phi\in\mathcal{F}$ and mutually independent random variables $X_1, X_2, \cdots, X_n$, the $\Phi$-entropy becomes subadditive, i.e. for any arbitrary function $r(x_1,\cdots, x_n)$ we have
\begin{align}\label{phi-ent_tensor}
    \mathbb{H}_{\Phi}(r(X_{1},\ldots,X_{n}))\le \mathbb{E}\sum_{i=1}^{n}\mathbb{H}^{(i)}_{\Phi}(r(X^{\setminus i}))
\end{align}
where $\mathbb{H}^{(i)}_{\Phi}(r(X^{\setminus i}))=\mathbb{E}^{(i)}[\Phi(Z)]-\Phi(\mathbb{E}^{(i)}[Z])$ and $Z=r(X_{1},\ldots,X_{n})$ and $\mathbb{E}^{(i)}$ denotes conditional expectation conditioned on the $(n-1)$-vector $X^{\setminus i}=(X_{1},\ldots, X_{i-1},X_{i+1},\ldots,X_{n})$. 
To obtain \eqref{def-sub-mod},  we proceed as follows: we write \eqref{phi-ent_tensor} for $\Phi(x)=\mf(x)$, $n=2$, and $r_{x_{3}}(x_{1},x_{2})=\frac{dP_{X_{1},X_{2},X_{3}}}{dQ_{X_{1}}dQ_{X_{2}}dQ_{X_{3}}}(x_{1},x_{2},x_{3})$. By averaging over $x_3$, we obtain:
\begin{align}\mathbb{E}_{Q_{X_{3}}}[\mathbb{H}_{\Phi}(r_{X_{3}}(X_{1},X_2))]\le \mathbb{E}_{Q_{X_{1}}Q_{X_{3}}}[\mathbb{H}^{(1)}_{\Phi}(r_{X_{3}}(X^{\setminus 1}))]+\mathbb{E}_{Q_{X_{2}}Q_{X_{3}}}[\mathbb{H}^{(2)}_{\Phi}(r_{X_{3}}(X^{\setminus 2}))].\label{eqnendg}\end{align}
Next, note that
\begin{align}
   &\mathbb{E}_{Q_{X_{3}}}[ \mathbb{H}_{\Phi}(r_{X_{3}}(X_{1},X_2))]\nonumber\\
   &=\int \Phi\left(\frac{dP_{X_{1},X_{2},X_{3}}}{dQ_{X_{1}}dQ_{X_{2}}dQ_{X_{3}}}(x_{1},x_{2},x_{3})\right)dQ_{X_{1}}dQ_{X_{2}}dQ_{X_{3}}-\int\Phi\left(\frac{dP_{X_{3}}}{dQ_{X_{3}}}(x_{3})\right)dQ_{X_{3}}\nonumber\\
    &= D_{\Phi}(P_{X_{1},X_{2},X_{3}}\|Q_{X_{1}}Q_{X_{2}}Q_{X_{3}})-D_{\Phi}(P_{X_{3}}\|Q_{X_{3}}).
  \end{align} 
  Next,
\begin{align}&\mathbb{E}_{Q_{X_{1}}Q_{X_{3}}}[\mathbb{H}^{(1)}_{\Phi}(r_{X_{3}}(X^{\setminus 1}))]=\int \Phi\left(\frac{dP_{X_{1},X_{2},X_{3}}}{dQ_{X_{1}}dQ_{X_{2}}dQ_{X_{3}}}(x_{1},x_{2},x_{3})\right)dQ_{X_{1}}(x_{1})dQ_{X_{2}}(x_{2})dQ_{X_{3}}(x_{3})\nonumber\\
   &\hspace{5cm}-\int \Phi\left(\frac{dP_{X_{1}X_{3}}}{dQ_{X_{1}}dQ_{X_{3}}}(x_{1},x_{3})\right)  dQ_{X_{1}}dQ_{X_{2}}\nonumber\\
    &=D_{\Phi}(P_{X_{1},X_{2},X_{3}}\|Q_{X_{1}}Q_{X_{2}}Q_{X_{3}})- D_{\Phi}(P_{X_{1}X_{3}}\|Q_{X_{1}X_{3}})
\end{align}
and similarly $\mathbb{E}_{Q_{X_{2}}Q_{X_{3}}}[\mathbb{H}^{(2)}_{\Phi}(r_{X_{3}}(X^{\setminus 2}))]=D_{\Phi}(P_{X_{1},X_{2},X_{3}}\|Q_{X_{1}}Q_{X_{2}}Q_{X_{3}})- D_{\Phi}(P_{X_{2}X_{3}}\|Q_{X_{2}X_{3}})$. Therefore, \eqref{eqnendg} reduces to  \eqref{def-sub-mod}.

\subsection{Proof of Lemma \ref{newlmma3}}\label{section_class_F_proofs}

For every $\mathsf f\in\mathcal{F}$ defined on $[0,\infty)$ we have $\mathsf f^{''}(t)+t\mathsf f^{'''}(t)\ge 0$ for any $t\geq 0$ \cite[Exercise 2.3.29]{borwein2010convex}. Hence $t\to t\mathsf f^{''}(t)$ is a non-decreasing function.
From $\mf\in\mathcal{F}$, $\mf''(t)> 0$. Moreover, since $\mf$ is not a linear function,  $t\mathsf f''(t)$ is not zero for all $t$. Thus, $\lim_{t\to \infty}t\mathsf f^{''}(t)>0$. Also, from \cite[Exercise 2.3.29]{borwein2010convex},  we obtain that $\mathsf f^{'''}(x)\le 0$. Hence $\mathsf f^{''}$ is a non-increasing function.

Since $t\to t\mathsf f^{''}(t)$ is a non-decreasing function, we obtain
$t\mathsf f^{''}(t)\geq \mf''(1)$. 
 Therefore,
    \[\mf''(t)\ge\frac{\mf''(1)}{t}\qquad\forall t\geq 1.\]
    Hence,
    \[\mf'(t)\ge\mf'(1)+\int_{1}^{t}\frac{\mf''(1)}{\tau}d\tau.\]
This shows that $\lim_{t\rightarrow\infty}\mf'(t)=\infty$. This, in turn, shows that $\lim_{t\rightarrow\infty}\mf(t)=\infty$.

To show \eqref{const1tlogt} using L'Hospital's rule we have
\[
\lim_{t\to \infty}\frac{\mathsf f(t)}{t\ln t}=
\lim_{t\to \infty}\frac{\mathsf f(t)}{t\ln t-t}=\lim_{t\to \infty}\frac{\mathsf f^{'}(t)}{\ln t}=\lim_{t\to \infty}t\mathsf f''(t).
\]

To show \eqref{const2t2}, using L'Hospital's rule
\[\lim_{t\to \infty}\frac{t^2}{\mathsf f(t)}=
\lim_{t\to \infty}\frac{2t}{\mathsf f'(t)}=
\lim_{t\to \infty}\frac{2}{\mathsf f''(t)}.
\]
\subsection{Proof of the claim in Remark \ref{remark 004}}\label{proof_remark 004}
The exponent of Sanov's bound can be simplified as follows:
\begin{align}
    E_{SA}&=\frac{1}{n}\inf_{\substack{Q_{X,W}:\\\mathbb{E}_{Q_{X,W}}[\ell(X,W)]\ge \delta}}\{nD(Q_{X,W}\|P_{X}Q_{W})-(n-1)D(Q_{W}\|P_{W})\}\nonumber\\
    &=\frac{1}{n}\inf_{Q_{X,W}}\sup_{\lambda\ge 0}\lambda(\delta-\mathbb{E}_{Q_{X,W}}[\ell(X,W)])+nD(Q_{X,W}\|P_{X}Q_{W})-(n-1)D(Q_{W}\|P_{W})\nonumber\\
    &=\frac{1}{n}\sup_{\lambda\ge 0}\inf_{Q_{X,W}}\lambda(\delta-\mathbb{E}_{Q_{X,W}}[\ell(X,W)])+nD(Q_{X,W}\|P_{X}Q_{W})-(n-1)D(Q_{W}\|P_{W})\label{mn21}\\
    &=\sup_{\lambda\ge 0}\lambda\delta+\inf_{Q_{X,W}}\mathbb{E}\left[\ln\frac{Q_{W}\otimes Q^{\otimes n}_{X|W}}{P_{W}\otimes P^{\otimes n}_{X}\exp(\lambda \ell(X,W))}\right]\nonumber\\
    &=\frac{1}{n}\sup_{\lambda\ge 0}\lambda\delta+\inf_{Q_{W},Q_{X|W}}\mathbb{E}_{Q_{W}}\left[\mathbb{E}_{Q_{X|W}}\left[\ln\frac{Q_{W}}{P_{W}}+n\ln\frac{Q_{X|W}}{P_{X}\exp(\frac{\lambda}{n} \ell(X,W))}\Bigg{|}W\right]\right]\nonumber\\
    &=\frac{1}{n}\sup_{\lambda\ge 0}\lambda\delta+\inf_{Q_{W}}\mathbb{E}_{Q_{W}}\left[\ln\frac{Q_{W}}{P_{W}}+n\ln\frac{1}{\mathbb{E}[\exp(\frac{\lambda}{n} \ell(X,W))|W]}\right]\label{mn22}\\
    \label{mn23}
    &=\frac{1}{n}\sup_{\lambda\ge 0}\lambda\delta+\ln\frac{1}{\mathbb{E}_{P_{W}}\left[\mathbb{E}^{n}[\exp(\frac{\lambda}{n} \ell(X,W))|W]\right]}\\
    &=\sup_{\tilde{\lambda}\ge 0}\tilde{\lambda}\delta-\frac{1}{n}\ln\mathbb{E}_{{W}}\left[\left(\mathbb{E}[e^{\tilde{\lambda}\ell(X,W)}|W]\right)^{n}\right].\label{E_SA}
\end{align}
Note that 
\begin{align*}
    &nD(Q_{X,W}\|P_{X}Q_{W})-(n-1)D(Q_{W}\|P_{W})=D(Q_{X,W}\|P_{X}Q_{W})+(n-1)D(Q_{X|W}P_{W}\|P_{X}Q_{W})
\end{align*}
 is a convex function of $Q_{X,W}$. Equation \eqref{mn21} follows from Sion's minimax theorem \cite[Theorem 1]{hartung1982extension} as the objective function in \eqref{mn21} is a (concave) linear function of $\lambda$ and a convex function of $Q_{X,W}$. $\{Q_{X,W}:\sum_{w.x}Q_{X,W}(x,w)=1\}$ is also a compact and convex set (compactness comes from finite alphabet sets of $X$ and $W$) and $\{0\le\lambda\}$ is a convex set. Equations \eqref{mn22} and \eqref{mn23} come from solving the optimization problems over $Q_{X|W}$ and $Q_{W}$ respectively and substituting the optimizers $Q_{X|W}=P_{X}\frac{e^{\frac{\lambda}{n}\ell(X,W)}}{\mathbb{E}_{P_{X}}[e^{\frac{\lambda}{n}\ell(X,W)}]}$ and $Q_{W}=P_{W}\frac{\mathbb{E}^{n}[\exp(\frac{\lambda}{n} \ell(X,W))|W]}{\mathbb{E}_{P_{W}}\left[\mathbb{E}^{n}[\exp(\frac{\lambda}{n} \ell(X,W))|W]\right]}$.





\subsection{Proof of Theorem \ref{thm2nsa}}
\label{proof_of_thm2nsa}
First, note that given any random variable $Z\in\mathcal{Z}$ and any  $A\subseteq \mathcal{Z}$ where $\mathbb{P}_{Z}(A)>0$, we have
\[
\mathbb{P}_{Z}(A)\mf\left(\frac{1}{\mathbb{P}_{Z}(A)}\right)+(1-\mathbb{P}_{Z}(A))\mf(0)=D_{\mf}(P_{Z|\{Z\in A\}}\|P_{Z})
\]
where $\frac{dP_{Z|\{Z\in A\}}}{dP_{Z}}=\frac{\boldsymbol{1}\{Z\in A\}}{P_{Z}(A)}$.
Let
\[
A\triangleq\{(x_{1},\ldots,x_{n},w)\in\mathcal{X}^{\otimes n}\times \mathcal{W}:\,\frac{1}{n}\sum_{i=1}^{n}\ell(x_{i},w)\ge \delta\}.
\]
Then,
\begin{align}
\gamma\mf\left(\frac{1}{\gamma}\right)+(1-\gamma)\mf(0)=D_{\mf}(P_{(S,W)|\{(S,W)\in A\}}\|P_{S,W}).\label{ldkjn1}
\end{align}
Note that under $\tilde{P}_{S,W}=P_{(S,W)|\{(S,W)\in A\}}$, we have
\[\mathbb{E}\left[\frac{1}{n}\sum_{i=1}^{n}\ell(X_{i},W)\right]\ge \delta.
\]
Thus, 
\begin{align}
    D_{\mf}(P_{(S,W)|\{(S,W)\in A\}}\|P_{S,W})\ge \inf_{\substack{R_{S,W}:\\
    \mathbb{E}_R\left[\frac{1}{n}\sum_{i=1}^{n}\ell(X_{i},W)\right]\ge \delta
    }}D_{\mf}(R_{S,W}\|P_{S,W}).\label{ldkjn2}
\end{align}
Take some arbitrary $R_{S,W}$ and let \[\bar{R}_{X,W}(x,w):=\frac1n\sum_iR_{X_i,W}(x,w).\]
Note that $\bar{R}_W=R_W$, and $\mathbb{E}_R\left[\frac{1}{n}\sum_{i=1}^{n}\ell(X_{i},W)\right]=\mathbb{E}_{\bar R}\left[\ell(X,W)\right]$.
From Corollary \ref{corr:def-sub-mod22}  we have
\begin{align}
D_{\mf}(R_{S,W}\|P_{S,W})&\geq -(n-1)D_{\mf}(R_W\|P_W)+\sum_{i}D_{\mf}(R_{X_i,W}\|P_{X_i,W})
\\&\geq 
-(n-1)D_{\mf}(\bar R_W\|P_W)+nD_{\mf}(\bar R_{X,W}\|P_{X,W})
\end{align}
 where in the last step we used Jensen's inequality and the convexity of $D_\mf$ (See Theorem \ref{theorem_properties_D_f}). Thus, corresponding to every arbitrary $R_{S,W}$ there exists some $\bar{R}_{X,W}$ such that 
\begin{align}
D_{\mf}(R_{S,W}\|P_{S,W})&\geq 
-(n-1)D_{\mf}(\bar R_W\|P_W)+nD_{\mf}(\bar R_{X,W}\|P_{X,W})
\end{align}
and $\mathbb{E}_R\left[\frac{1}{n}\sum_{i=1}^{n}\ell(X_{i},W)\right]=\mathbb{E}_{\bar R}\left[\ell(X,W)\right]$.
 Since $\bar{R}_{X,W}$ with these properties can be found for any arbitrary $R_{S,W}$, we obtain the following inequality:
\begin{align}
     \inf_{\substack{R_{S,W}:\\
    \mathbb{E}_R\left[\frac{1}{n}\sum_{i=1}^{n}\ell(X_{i},W)\right]\ge \delta
    }}D_{\mf}(R_{S,W}\|P_{S,W})\geq
     \inf_{\substack{\bar R_{X,W}:\\
    \mathbb{E}_{\bar R}\left[\ell(X,W)\right]\ge \delta
    }}\left\{-(n-1)D(\bar R_W\|P_W)+nD_{\mf}(\bar R_{X,W}\|P_{X,W})\right\}.
    \label{ldkjn3}
\end{align}
From \eqref{ldkjn1}-\eqref{ldkjn3}, we get the desired result. 


\subsection{Proof of Theorem \ref{new properties_I_f}}\label{proof_new properties_I_f}

Proof of (i):
\begin{align*}
    I^{PV}_{\mathsf f}(X;Y)= \min_{Q_Y}\mathbb{E}_{P_{X}Q_{Y}}\mathsf f\left(\frac{dP_{Y|X}}{dQ_{Y}}\right)
\end{align*}
is a minimum over some linear functions of $P_{X}$ and hence is a concave function. 

Next, consider $I^{MBGYA}_{\mathsf f}(X;Y)$.
Note that 
\[
\mathbb{E}_{P_{X}Q_{Y}}\left[\mathsf f\left(\frac{dP_{Y|X}}{dQ_{Y}}(Y,X)\right)\right]
\]
is a linear function of $P_{X}$. The expression
\[
\mathbb{E}_{Q_{Y}}\left[\mf\left(\frac{dP_{Y}}{dQ_{Y}}(Y)\right)\right]
\]
is convex in $P_{Y}$. It is also convex in $P_{X}$ since $P_{Y}$ is a linear function of $P_{X}$ (as $P_{Y|X}$ is fixed). So
\begin{align}
    I^{\text{MBGYA}}_{\mathsf f}(X;Y)= \min_{Q_Y}\mathbb{E}_{P_{X}Q_{Y}}\left[\mathsf f\left(\frac{dP_{Y|X}}{dQ_{Y}}(Y,X)\right)\right]-\mathbb{E}_{Q_{Y}}\left[\mf\left(\frac{dP_{Y}}{dQ_{Y}}(Y)\right)\right].
\end{align}
is a minimization over a sum of a linear function of $P_{X}$ and a concave function of $P_{X}$. Therefore, it is concave in $P_X$.

\vspace{0.5cm}

Proof of (ii): for simplicity of exposition, we prove the statement for discrete variables. Take channels $p_i(y|x)$ and non-negative weights $\lambda_i$ adding up to one. Let $\bar p(y|x)=\sum_i\lambda_i p_i(y|x)$ be the average channel. 

We begin with the CKZ definition of mutual $\mathsf f$-information. Define the perspective function of $\mathsf{f}$ as \begin{align}g(t,z)= t\mathsf{f}(z/t)\label{def-pfsc}, \qquad t>0, z\geq 0.\end{align}  It is known that $\mf$ is a convex function, $g$ is a jointly convex function of $(t,z)$; the Hessian equals
\begin{align*}
    \begin{pmatrix}
    &\frac{1}{t}\mathsf f^{''}\left(\frac{z}{t}\right)&-\frac{z}{t}\mathsf f^{''}\left(\frac{z}{t}\right)\\
    &-\frac{z}{t}\mathsf f^{''}\left(\frac{z}{t}\right)&\frac{z^2}{t^3}\mathsf f^{''}\left(\frac{z}{t}\right)
    \end{pmatrix}\succeq0.
\end{align*}
Applying Jensen's inequality, we obtain $g(t,z)=t\mathsf{f}(z/t)$ \begin{align*}\sum_i\lambda_i\sum_{x,y}p(x)p_i(y)\mf\left(\frac{p_i(y|x)}{p_i(y)}\right)
&\geq 
\sum_{x,y}p(x)\left\{\sum_i\lambda_ip_i(y)\right\}\mf\left(\frac{\sum_i\lambda_ip_i(y|x)}{\sum_i\lambda_ip_i(y)}\right)
\\&=
\sum_{x,y}p(x)\bar p(y)\mf\left(\frac{\bar p(y|x)}{\bar p(y)}\right).
\end{align*}
This yields the desired result. 

Next, consider the PV definition of mutual $\mathsf f$-information:
$$I^{PV}_{\mathsf f}(X;Y)= \min_{Q_Y}D_{\mathsf f}(P_{XY}\|P_{X}\times Q_{Y}).$$
Note that $D_{\mathsf f}(P_{XY}\|P_{X}\times Q_{Y})$ is a jointly convex function of $(q(y),p(y|x))$ for a fixed $p(x)$. This follows from the joint convexity property of $D_\mf$.
From elementary convex analysis,  $\min_{a\in\mathcal{A}}\phi(a,b)$ is convex in $b$ when $\phi(a,b)$ is jointly convex function of $(a,b)$ and $\mathcal{A}$ is a convex set. Therefore,
\[
I^{PV}_{\mathsf f}(X;Y)=\min_{Q_{Y}}D_{\mathsf f}(P_{XY}\|P_{X}\times Q_{Y})
\]
is jointly convex in $p(y|x)$.

Finally consider the MBGYA notion of mutual $\mf$-information. Let
\begin{align}
    J(p_{Y|X}(y)):=\sum_{x\in\mathcal{X}}p_{X}(x)\mf\left(p_{Y|X}(y,x)\right)-\mf\left(\sum_{x\in\mathcal{X}}p_{Y|X}(y,x)p_{X}(x)\right).
\end{align}
This function is jointly convex in $p_{Y|X}$. This follows from the following fact about functions in $\mathcal{F}$: for non-negative weights $\lambda_i$ adding up to one, from \cite[Exercise 14.2]{boucheron2013concentration} we have that  the function
\[(z_1, \cdots, z_n)\mapsto 
\sum_{i=1}^{n}\lambda_{i}\mf(z_{i})-\mf\left(\sum_{i=1}^{n}\lambda_{i}z_{i}\right)
\]
is jointly convex. Using the convexity property of the perspective function (defined above in \eqref{def-pfsc}), we obtain that
\[
\sum_{y\in\mathcal{Y}}q_{Y}(y)J\left(\frac{p_{Y|X}(y)}{q_{Y}(y)}\right)
\]
is a jointly convex function of $q_{Y}$ and $p_{Y|X}$. Then
\[
I_{\mf}^{\text{MBGYA}}(X;Y)=\min_{q_{Y}}\sum_{y\in\mathcal{Y}}q_{Y}(y)J\left(\frac{p_{Y|X}(y)}{q_{Y}(y)}\right)
\]
is convex function in $p_{Y|X}$. Thus, the claim is established.

\vspace{0.5cm}

Proof of (iii): by induction on $n$, it suffices to show that for any independent $X_1$ and $X_2$, we have
\[I_{\mathsf f}(X_1,X_2;U)\geq  I_{\mathsf f}(X_1;U)+I_{\mathsf f}(X_2;U).\]
To show the above inequality when $\mathsf f$ belongs to $\mathcal F$ we proceed as follows: for the MBGYA notion of mutual $\mf$-information, it suffices to show that for any arbitrary $Q_U$, we have
\begin{align}
D_\mf(P_{X_1,X_2,U}\|P_{X_1,X_2}Q_{U})-D_\mf(P_{U}\|Q_{U})
&\geq\sum_{i=1}^2\left\{D_\mf(P_{X_i,U}\|P_{X_i}Q_{U})-D_\mf(P_{U}\|Q_{U})\right\}.\label{eqnsd34}
\end{align}
This follows from independence of $X_1$ and $X_2$ and the super-modularity of $D_\mf$ as defined in \eqref{def-sub-mod}. 

For the CKZ  definition of $\mathsf f$-information, it suffices to consider \eqref{eqnsd34} for $Q_U=P_U$.

\vspace{0.5cm}

Proof of (iv): the inequality $ I_\mf^{CKZ}(X;Y)\geq I_\mf^{PV}(X;Y)\geq I_\mf^{MGBYA}(X;Y)$ follows from the definitions of mutual $\mf$-information.  It remains to show that \[I(X;Y)\cdot \left[\lim_{t\to \infty}t\mathsf f''(t)\right]\geq
I_\mf^{CKZ}(X;Y).\]
 If $\lim_{t\to \infty}t\mathsf f''(t)=\infty$, there is nothing to prove. Assume that $k\triangleq \lim_{t\to \infty}t\mathsf f''(t)<\infty$. From Lemma \ref{newlmma3}, $t\mathsf f^{''}(t)$ is non-decreasing. Thus,
 \begin{align}\label{2050}
    \forall t\in[0,\infty), \,\,\mf^{''}(t)\le \frac{k}{t}.
\end{align}
We claim that
\begin{align}
    \mf(z)-\mf^{'}(1)(z-1)\le k(z\ln(z)-z+1), \qquad \forall z.\label{eqncondforaz}
\end{align}
Take some $1\le z<\infty$. 
Integrating both hands sides of \eqref{2050} twice, each time from $1$ to $z$ yields \eqref{eqncondforaz} for all $z\geq 1$. The proof of \eqref{eqncondforaz} for  $0<z<1$ is similar. Integrating both hands sides of \eqref{2050} from $z$ to $1$ yields
\[
\mf^{'}(1)-\mf^{'}(z)\le -k\ln(z).
\]
Integrating again from $z$ to $1$ yields \eqref{eqncondforaz} for all $z\leq 1$. 
Finally, observe that the upper bound on $\mf(\cdot)$ in \eqref{eqncondforaz} immediately implies that
\[I_\mf^{CKZ}(X;Y)\leq k\cdot I(X;Y).\]
\subsection{Proof of Theorem \ref{property_f_entropy}}\label{proof_property_f_entropy}
Proof of (i): 
\begin{align*}
    &H^{\text{MBGYA}}_{\mathsf f}(X)=\\
&\min_{Q_{X}}\left\lbrace \mathsf f(0)\left(1-\sum_{x\in\mathcal{X}}P_{X}(x)Q_{X}(x)\right)+\sum_{x\in\mathcal{X}}P_{X}(x)Q_{X}(x)\mathsf f\left(\frac{1}{Q_{X}(x)}\right)-\sum_{x\in\mathcal{X}}Q_{X}(x)\mathsf f\left(\frac{P_{X}(x)}{Q_{X}(x)}\right)\right\rbrace
\end{align*}
is a minimum over a concave function of $P_{X}$. Therefore, it is concave in $P_{X}$. 

Next, consider $H^{CKZ}_{\mathsf f}(X)$ under the assumption that $\mathsf f^{'''}(x)\le 0$. Assume that $\mathcal{X}=\{x_1, x_2, \cdots, x_n\}$ and $p_i=\mathbb{P}[X=x_i]$ for $1\leq i\leq n$. We have
\begin{align}
    H^{CKZ}_{\mathsf f}(X)&=\sum_{i=1}^{n}p^{2}_{i}\mathsf f\left(\frac{1}{p_{i}}\right)+\left(1-\sum_{i=1}^{n}p_{i}^{2}\right)\mathsf f(0)
    \\&=\sum_{i=1}^n L_{\mathsf{f}}(p_{i})
\end{align}
where 
\[
L_{\mathsf{f}}(x)\triangleq x^2\mathsf{f}\left(\frac{1}{x}\right)+\left(\frac{1}{n}-x^2\right)\mathsf{f}(0).
\]
To show concavity of $H^{CKZ}_{\mathsf f}(X)$, it suffices to show concavity of $L_{\mathsf{f}}(x)$ for $x\in[0,1]$.  The second derivative of $L_{\mathsf{f}}$ equals:
\begin{align}
  2\mathsf f\left(\frac{1}{x}\right)-\frac{2}{x}\mathsf f^{'}\left(\frac{1}{x}\right)+\frac{1}{x^2}\mathsf f^{''}\left(\frac{1}{x}\right)-2\mf(0)
\end{align}
Note that for all $b\le a$
\[
\mathsf f(a)+(b-a)\mathsf f^{'}(a)+\frac{(b-a)^{2}}{2}\mathsf f^{''}(a)\le \mathsf f(b).
\]
This holds because the residual term in the second order Taylor expansion of $\mathsf f(b)$ at the point $a$ for $0\le b\le a$ is 
\[
R_{2}(b)=\int_{a}^{b}\frac{\mathsf f^{(3)}(t)}{2!}(b-t)^{2}dt
\]
which is non-negative due to $\mathsf f^{(3)}(a)\le 0$ for all $a>0$.
 Letting $a=1/x$ and $b=0$, we obtain
\[
2\mathsf f\left(\frac{1}{x}\right)-\frac{2}{x}\mathsf f^{'}\left(\frac{1}{x}\right)+\frac{1}{x^2}\mathsf f^{''}\left(\frac{1}{x}\right)-2\mathsf f(0)\le 0
\]
as desired. 

Finally, one of the equivalent forms of  $\mathsf{f}\in \mathcal{F}$ presented in \cite[Proposition 11]{beigi2018phi} is that $B_{\mathsf f}:(x,y)\to \mathsf f(x)-\mathsf f(y)-\mathsf f^{'}(x)(x-y)$ is a jointly convex function. Therefore, it is also a separately convex function. If $B_{\mathsf f}$ is seperately convex on $(0,\infty)\times (0,\infty)$, then $\mathsf f^{'''}(x)\le 0$ for $x\in (0,\infty)$  \cite[Exercise 2.3.29]{borwein2010convex}. This completes the proof of part (i).

\vspace{0.5cm}
Proof of (ii): Since the $\mf$-entropy is concave in $p(x)$ and symmetric, it must be maximized at the uniform distribution.

\vspace{0.5cm}
Proof of (iii): The inequality $I^{CKZ}_{\mathsf f}(X;Y)\leq H^{CKZ}_\mf(Y)$ is proved in \cite[Lemma 5]{jiao2017dependence}. The proof for $I^{PV}_{\mathsf f}(X;Y)\leq H^{PV}_\mf(Y)$ and $I^{MBGYA}_{\mathsf f}(X;Y)\leq H^{MBGYA}_\mf(Y)$ are similar. Note that
\begin{align}
    0\le \frac{P_{XY}(x,y)}{P_{X}(x)Q_{Y}(y)}\le \frac{1}{Q_{Y}(y)}
\end{align}
Therefore, Jensen's inequality yields
\begin{align*}
    \mathsf f\left(\frac{P_{XY}(x,y)}{P_{X}(x)Q_{Y}(y)}\right)\le \left(1-\frac{P_{XY}(x,y)}{P_{X}(x)}\right)\mathsf f(0)+\frac{P_{XY}(x,y)}{P_{X}(x)}\mathsf f\left(\frac{1}{Q_{Y}(y)}\right).
\end{align*}
Now averaging over $x,y$ under $P_{X}\times Q_{Y}$, we get
\begin{align*}
    D_{\mathsf f}(P_{XY}\|P_{X}\times Q_{Y})
    &\le \mathsf f(0)\left(1-\sum_{y\in\mathcal{Y}}P_{Y}(y)Q_{Y}(y)\right)+\sum_{y\in\mathcal{Y}}P_{Y}(y)Q_{Y}(y)\mathsf f\left(\frac{1}{Q_{Y}(y)}\right),
\end{align*} 
and
\begin{align*}
    &D_{\mathsf f}(P_{XY}\|P_{X}\times Q_{Y})-D_{\mf}(P_{Y}\|Q_{Y})\\
    &\le \mathsf f(0)\left(1-\sum_{y\in\mathcal{Y}}P_{Y}(y)Q_{Y}(y)\right)+\sum_{y\in\mathcal{Y}}P_{Y}(y)Q_{Y}(y)\mathsf f\left(\frac{1}{Q_{Y}(y)}\right)-\sum_{y\in\mathcal{Y}}Q_{Y}(y)\mathsf f\left(\frac{P_{Y}(y)}{Q_{Y}(y)}\right).
\end{align*}
Taking the minimum of both sides over $Q_{Y}$, we obtain $I^{PV}_{\mathsf f}(X;Y)\leq H^{PV}_\mf(Y)$ and $I^{MBGYA}_{\mathsf f}(X;Y)\leq H^{MBGYA}_\mf(Y)$. Inequality $I^{CKZ}_{\mathsf f}(X;Y)\leq H^{CKZ}_\mf(Y)$ follows by setting $Q_{Y}\equiv P_{Y}$. It suffices to consider $X$ being a discrete random variable because of  the generalization of the Gel'fand-Yaglom-Peres theorem for $\mathsf f$-divergence in \cite[Proposition 1]{gilardoni2009gel}.

\subsection{Proof of Claim \ref{claim1}}\label{proof of claim1}
 To show this, we first prove the claim for $n=1$. Note that for $0\le D\le \frac{1}{2}$ we have
 \begin{align}
     \RR(D)&=1-H_{2}(D)=1+D\log_{2}(D)+(1-D)\log_{2}(1-D)\nonumber\\
     &\overset{(a)}{\le}1+\log_{2}\left(D^{2}+(1-D)^{2}\right)=\log_{2}\left(1+(2D-1)^{2}\right)
     \\&=LB_{\mf}(D,1)
 \end{align}
 where (a) comes from Jensen's inequality for the logarithm function. For $n\ge 2$, we use the Taylor expansion of $1-H_2(D)$ around $D=1/2$:
\begin{align*}
    &1-H_2(D)=\frac{1}{\ln 2}\sum_{k=1}^{\infty}\frac{(2D-1)^{2k}}{2k(2k-1)} 
\\ &=\frac{1}{\ln2}\left[\frac{(2D-1)^{2}}{2}+\frac{(2D-1)^{4}}{12}+\frac{(2D-1)^{6}}{30}+\frac{(2D-1)^{8}}{56}\right]+
\frac{1}{\ln 2}\sum_{k=5}^{\infty}\frac{(2D-1)^{2k}}{2k(2k-1)}.\end{align*}
Next,\begin{align*}
    LB_{\mf}(D,n)&=\frac{1}{n}\log_{2}(n\RR^{CKZ}_{\mf}(D)+1)
    \\
    &=\frac{1}{n}\log_{2}(n(2D-1)^2+1)
    \\&=\frac{1}{\ln2}\sum_{k=1}^{\infty}\frac{n^{k}(2D-1)^{2k}(-1)^{k+1}}{n k}
    \\&=\frac{1}{\ln2}\left[\frac{n(2D-1)^{2}}{n}-\frac{n^{2}(2D-1)^{4}}{2n}+\frac{n^{3}(2D-1)^{6}}{3n}-\frac{n^{4}(2D-1)^{8}}{4n}\right]
    \\&\qquad+
    \frac{1}{\ln2}\sum_{k=5}^{\infty}\frac{n^{k}(2D-1)^{2k}(-1)^{k+1}}{n k}.
\end{align*}
To show $LB_{\mf}(D,n)\geq \RR(D)$, it suffices to show the following two claims:
\begin{align}
    &\frac{n(2D-1)^{2}}{n}-\frac{n^{2}(2D-1)^{4}}{2n}+\frac{n^{3}(2D-1)^{6}}{3n}-\frac{n^{4}(2D-1)^{8}}{4n}\nonumber\\
&-\frac{(2D-1)^{2}}{2}-\frac{(2D-1)^{4}}{12}-\frac{(2D-1)^{6}}{30}-\frac{(2D-1)^{8}}{56}\ge 0\label{meq1d0}
\end{align}
and for any odd $k\geq 5$
\begin{align}
\frac{n^{k}(2D-1)^{2k}}{nk}-\frac{n^{k+1}(2D-1)^{2k+2}}{n(k+1)}-\frac{(2D-1)^{2k}}{2k(2k-1)}-\frac{(2D-1)^{2k+2}}{(2k+2)(2k+1)}\ge 0.\label{meq1d2}
\end{align}

Inequality \eqref{meq1d2} is equivalent with 
\begin{align}\label{im_eq_ex1}
    \frac{n^{k-1}}{k}-\frac{n^{k}(2D-1)^{2}}{k+1}-\frac{1}{2k(2k-1)}-\frac{(2D-1)^{2}}{(2k+2)(2k+1)}\ge 0.
\end{align}
Since $k\ge5$, we have $\frac{1}{2k(k+1)}\ge \frac{1}{2k(2k-1)}$ and $\frac{1}{2k(k+1)}\ge \frac{1}{(2k+2)(2k+1)}$ and it suffices to show that
\[
(k+1)n^{k-1}-kn^{k}(2D-1)^{2}-\frac{1}{2}-\frac{1}{2}(2D-1)^{2}\ge 0
\]
which holds as long as $|D-\frac{1}{2}|\le \frac{1}{2\sqrt{n}}$.

Next, consider \eqref{meq1d0}:
\begin{align*}
    &1-\frac{n(2D-1)^{2}}{2}+\frac{n^{2}(2D-1)^{4}}{3}-\frac{n^{3}(2D-1)^{6}}{4}\\
&-\frac{1}{2}-\frac{(2D-1)^{2}}{12}-\frac{(2D-1)^{4}}{30}-\frac{(2D-1)^{6}}{56}\ge 0.
\end{align*}
Since $|D-\frac{1}{2}|\le \frac{1}{2\sqrt{n}}$, for some constant $c\le1$, we assume that $D=\frac{1}{2}-\frac{c}{2\sqrt{n}}$. Then we get
\begin{align*}
    1-\frac{c^{2}}{2}+\frac{c^{4}}{3}-\frac{c^{6}}{4}-\frac{1}{2}-\frac{c^{2}}{12n}-\frac{c^{4}}{30n^{2}}-\frac{c^{6}}{56n^{3}}\ge 0.
\end{align*}
For $n\ge 2$, it suffices to show that
\begin{align*}
    &\frac{1}{2}-\frac{c^{2}}{2}+\frac{c^{4}}{3}-\frac{c^{6}}{4}-\frac{c^{2}}{24}-\frac{c^{4}}{120}-\frac{c^{6}}{448}\ge \frac{1}{2}-\frac{c^{2}}{2}+\frac{c^{4}}{3}-\frac{c^{4}}{4}-\frac{c^{2}}{24}-\frac{c^{4}}{120}-\frac{c^{4}}{448}\\
    &\ge \frac{1}{2}(1-c^{2})-\frac{c^{2}}{24}+c^{4}\left(\frac{1}{3}-\frac{1}{4}-\frac{1}{120}-\frac{1}{448}\right)\ge \frac{1}{2}(1-c^{2})-\frac{c^{2}}{24}+c^{4}\left(\frac{1}{24}\right)\\
    &\ge (1-c^{2})(\frac{1}{2}-\frac{c^{2}}{24})\ge0
\end{align*}

The last inequality holds for $c\le 1$. The claim is established.


\subsection{Proof of Theorem \ref{thmNewRD}}\label{proof_thmNewRD}

We prove the statement for the CKZ notion of mutual $\mf$-information.  The proof for the MBGYA notion of mutual $\mf$-information is similar. 

Take some arbitrary $P_{\hat X^n|X^n}$ satisfying $I^{CKZ}_{\mathsf f}(X^n;\hat{X}^n)\le nR$. Using Part (iii) of Theorem \ref{new properties_I_f} and Theorem \ref{dpi_f_information}, we have
\begin{align}\label{ineq_im_I_fnn}
    nR\geq I^{CKZ}_{\mathsf f}(X^n;\hat{X}^n)\geq \sum_{i}I^{CKZ}_{\mf}(X_i;\hat{X}^n)\geq \sum_{i}I^{CKZ}_{\mf}(X_i;\hat{X}_i).
\end{align}
We also have
\begin{align}
    \frac1n\sum_{i=1}^n\mathbb{E}[d(X_i,\hat{X}_i)]&\geq 
    \frac1n\sum_{i=1}^n\D^{CKZ}_{{\mathsf{f}}}(I^{CKZ}_{\mf}(X_i;\hat{X}_i))\label{eqnppp1nn}
    \\&
    \geq 
    \D^{CKZ}\left(\frac1n\sum_{i=1}^nI^{CKZ}_{\mf}(X_i;\hat{X}_i)\right)\label{eqnppp2nn}
    \\&\geq \D^{CKZ}\left(R\right)\label{eqnppp3nn}
\end{align}
where \eqref{eqnppp1nn} follows from
the definition of $\D^{CKZ}$, \eqref{eqnppp2nn} follows from convexity of $\D^{CKZ}(\cdot)$ justified below, and \eqref{eqnppp3nn}
follows from \eqref{ineq_im_I_fnn} and the fact that $\D^{CKZ}(\cdot)$ is a decreasing function. Convexity of $\D^{CKZ}(\cdot)$ follows from the fact that Mutual $\mf$-information $I^{CKZ}_{\mf}(A;B)$ is convex in $P_{B|A}$ for a fixed distribution on $P_{A}$ (Theorem \ref{new properties_I_f}, Part (ii)).

Since $P_{\hat X^n|X^n}$ was an arbitrary conditional distribution satisfying $I^{CKZ}_{\mathsf f}(X^n;\hat{X}^n)\le nR$, we deduce from \eqref{eqnppp1nn}-\eqref{eqnppp3nn} the desired inequality.

\subsection{Proof of Theorem \ref{cor_f-dist_lower_bound}}\label{proof_cor_f-dist_lower_bound}

It follows from \eqref{LB_f_RD_F} that
\begin{align}
   \D^{PV}_{\mf}(R)&\geq  -\sup_{Q_{\hat{X}}}\mathbb{F}^{D_{\mf}}_{P_{X}Q_{\hat{X}},R}[-d(X,\hat{X})]
   \\&\overset{(a)}{\ge}-\sup_{Q_{\hat{X}}}\inf_{\lambda\ge0,a\in\mathbb{R}}\left\{\lambda [R+a]+\lambda\mathbb{E}_{P_{X}Q_{\hat{X}}}\mf^{*}\left(-\frac{d(X,\hat{X})}{\lambda}-a\right)\right\}\\
   &\geq-\inf_{\lambda\ge0,a\in\mathbb{R}}\sup_{Q_{\hat{X}}}\left\{\lambda [R+a]+\lambda\mathbb{E}_{P_{X}Q_{\hat{X}}}\mf^{*}\left(-\frac{d(X,\hat{X})}{\lambda}-a\right)\right\}\label{10056a}
   \\
   &= -\inf_{\lambda\ge0,a\in\mathbb{R}}\left\{\lambda [R+a]+\lambda\sup_{\hat{x}}\mathbb{E}_{P_{X}}\mf^{*}\left(-\frac{d(X,\hat{x})}{\lambda}-a\right)\right\}\label{10056}\\
   &\overset{(b)}{=}\sup_{\lambda\ge0,a\in\mathbb{R}}\left\{-\lambda [R+a]-\lambda\phi_{\mf}(-\lambda,a)\right\}
\end{align}
where (a) comes from Part (i) of Theorem \ref{th_lowerbound_Fr_Df_equiv_Orlicz} and (b) follows from the definition of $\phi_{\mf}(\lambda,a)$ in Theorem \ref{cor_f-dist_lower_bound}.

\begin{lemma}
Lower bound in \eqref{lower_general_f_ineq} can be lower bounded by
\begin{align}
   \sup_{\lambda\ge 0,~a\in\mathbb{R}}\left\lbrace-\frac{1}{\lambda}\left[a+R\right]-\frac{1}{\lambda}\phi_{\mf}(-\lambda,a)\right\rbrace\ge -2\max\{1,R\}\cdot \sup_{\hat{x}}\|d(X,\hat{x})\|_{L_{\mf^{*}}}
\end{align}
where $\|\cdot\|_{L_{\mf^{*}}}$ is the 
Orlicz norm as defined in Definition \ref{orlicz_space_def}.
\end{lemma}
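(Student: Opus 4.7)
The plan is to prove the lemma by exhibiting an explicit feasible pair $(\lambda,a)$ in the supremum and computing the resulting lower bound. Let $K:=\sup_{\hat{x}}\|d(X,\hat{x})\|_{L_{\mf^{*}}}$; if $K=\infty$ the stated bound is vacuous, so I may assume $K<\infty$. The Orlicz norm definition gives, for every $\hat{x}$, the pointwise estimate
\begin{equation*}
\mathbb{E}_{P_{X}}\!\left[\mf^{*}\!\left(|d(X,\hat{x})|/K\right)\right]\leq 1,
\end{equation*}
obtained from the defining infimum by monotone convergence. My candidate choice is $\lambda=1/K$ and $a=0$, which gives
\begin{equation*}
-\frac{1}{\lambda}[a+R]-\frac{1}{\lambda}\phi_{\mf}(-\lambda,a) \;=\; -KR-K\sup_{\hat{x}}\mathbb{E}_{P_{X}}\!\left[\mf^{*}\!\left(-d(X,\hat{x})/K\right)\right].
\end{equation*}

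The key technical step is to bound $\mathbb{E}[\mf^{*}(-d/K)]$ by $\mathbb{E}[\mf^{*}(|d|/K)]$. I will show the pointwise inequality $\mf^{*}(-y)\leq \mf^{*}(|y|)$ for all $y\in\mathbb{R}$. Since $\mf$ is convex with $\mf(1)=0$ being the global minimum (so $\mf\geq 0$ throughout the domain of interest), for every $z\leq 0$ and $t\geq 0$ one has $tz-\mf(t)\leq -\mf(t)\leq 0$, giving $\mf^{*}(z)\leq 0$ on $(-\infty,0]$. On the other hand, $\mf^{*}(z)\geq 1\cdot z-\mf(1)=z\geq 0$ for $z\geq 0$. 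Combining these, whenever $d(X,\hat{x})\geq 0$ we have $\mf^{*}(-d/K)\leq 0\leq \mf^{*}(d/K)=\mf^{*}(|d|/K)$, and whenever $d(X,\hat{x})<0$ we have $-d/K=|d|/K$ and equality holds. Hence $\mathbb{E}[\mf^{*}(-d/K)]\leq \mathbb{E}[\mf^{*}(|d|/K)]\leq 1$ uniformly in $\hat{x}$.

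Plugging this bound into the earlier display yields
\begin{equation*}
\sup_{\lambda\geq 0,\,a\in\mathbb{R}}\!\left\{-\frac{1}{\lambda}[a+R]-\frac{1}{\lambda}\phi_{\mf}(-\lambda,a)\right\}\;\geq\; -KR-K\;=\;-K(R+1).
\end{equation*}
Finally, the elementary inequality $R+1\leq 2\max\{1,R\}$ (check both cases $R\geq 1$ and $R<1$ separately) gives $-K(R+1)\geq -2K\max\{1,R\}$, completing the argument.

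The main obstacle is the comparison $\mf^{*}(-y)\leq \mf^{*}(|y|)$, which rests on the assumption that $\mf$ attains its minimum at $t=1$ (equivalently, $\mf\geq 0$). This is standard in the $\mf$-divergence literature and consistent with the paper's Definition of the class $\mathcal{F}$, but is worth stating explicitly. If one only assumes $\mf(1)=0$ without nonnegativity, the same argument goes through after choosing $a=-\mf^{*}(0)\cdot(\text{constant})$ to absorb the shift, at the cost of a slightly worse universal constant in front of $\max\{1,R\}$; this refinement would only be needed if the Orlicz-norm definition in the appendix requires it.
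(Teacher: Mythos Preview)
Your proof is correct and, in essence, follows the same idea as the paper: plug in a specific feasible pair $(\lambda,a)$ (namely $a=0$ and $\lambda$ equal to the reciprocal of the Orlicz norm) and reduce to the defining inequality $\mathbb{E}[\mf^{*}(|d(X,\hat x)|/K)]\le 1$. The paper phrases this through the smoothed-expectation functional $\mathbb{F}^{D_{\mf}}$ and the Amemiya--Orlicz norm equivalence (Theorem~\ref{th_lowerbound_Fr_Df_equiv_Orlicz}(b)), whereas you work directly with the variational expression; your route is shorter and, in fact, establishes the lemma \emph{as stated} more transparently, since the paper's displayed chain literally bounds $\D^{PV}_{\mf}(R)$ rather than the supremum in \eqref{lower_general_f_ineq}.

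One simplification: your comparison $\mf^{*}(-y)\le \mf^{*}(|y|)$ does not need the hypothesis $\mf\ge 0$. Because the conjugate in \eqref{eqnfs} is taken over $t\ge 0$, the map $y\mapsto \mf^{*}(y)=\sup_{t\ge 0}\{ty-\mf(t)\}$ is nondecreasing in $y$ (each $ty-\mf(t)$ is nondecreasing in $y$). Since $-y\le |y|$ always, monotonicity gives $\mf^{*}(-y)\le \mf^{*}(|y|)$ directly. This removes the caveat in your final paragraph and makes the argument work for any convex $\mf$ with $\mf(1)=0$, not only those with $\mf\ge 0$.
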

\begin{proof}

The lower bound in the part (ii) for $\mf$-rate-distortion function follows from
\begin{align}\label{lbfh_psi_1_norm}
 \D^{PV}_{\mf}(R)&\geq  -\sup_{Q_{\hat{X}}}\mathbb{F}^{D_{\mf}}_{P_{X}Q_{\hat{X}},R}[-d(X,\hat{X})]\geq  -\sup_{Q_{\hat{X}}}\mathbb{F}^{D_{\mf}}_{P_{X}Q_{\hat{X}},R}[|d(X,\hat{X})|]\\
 &\geq -2\sup_{Q_{\hat{X}}}\left[\max\{1,R\}\cdot \|d(X,\hat{X})\|_{L_{\mf^{*}}}\right]\label{lbh_F_Df_1_norm},
\end{align}
where in \eqref{lbh_F_Df_1_norm} we have $(X,\hat X)\sim P_{X}Q_{\hat{X}}$ and use Part (b) of Theorem \ref{th_lowerbound_Fr_Df_equiv_Orlicz}. Next, observe that for any $Q_{\hat{X}}$ we have
\[ \|d(X,\hat{X})\|_{L_{\mf^{*}}}\leq \sup_{\hat{x}}\|d(X,\hat{x})\|_{L_{\mf^{*}}}
\]
which holds because $\mathbb{E}_{P_{X}Q_{\hat{X}}}\left[\mf^{*}\left(\frac{d(X,\hat{X})^{2}}{t^{2}}\right)\right]\le \sup_{\hat{x}}\mathbb{E}_{P_{X}}\left[\mf^{*}\left(\frac{d(X,\hat{x})^{2}}{t^{2}}\right)\right]$ for each $t\ge0$ due to the fact that $\mf^{*}$ is a non-decreasing function.
\end{proof}



\subsection{Proof of Theorem \ref{Th_gen_Uf_If}}
\label{Proof_Th_gen_Uf_If}

Let $\tw=(\tilde{w}_1, \tilde{w}_2, \cdots, \tilde{w}_n)\in\mathcal{W}^n$ be a sequence of length $n$. Let
\begin{align}
\bar{\U}^{I_{\mf}}_{1}(R)&\triangleq \sup_{P_{\tW|S}:~I_{\mf}(S;\tW)\leq R}
\frac1n\sum_{i=1}^n\mathbb{E}\left[L_{\mu}(\tilde{W}_i)-\ell(\tilde{W}_i,Z_i)\right]\label{tildeU}
\end{align}
where $S=(Z_1, Z_2, \cdots, Z_n)$ and
$\tW=(\tilde{W}_1, \tilde{W}_2, \cdots, \tilde{W}_n)$. Observe that if the entries of the vector $\tW$ are all equal, the expression in \eqref{tildeU} reduces to the one  in \eqref{eqnRDf1}. Therefore, in \eqref{tildeU} we are taking the supremum over a larger set. Thus, $\bar{\U}^{I_{\mf}}_{1}(R)\geq \U^{I_{\mf}}_{1}(R)$. 
We claim that
$\bar{\U}^{I_{\mf}}_{1}(R)\leq\U^{I_{\mf}}_{2}(R/n)$. This follows from Theorem \ref{thmNewRD} since $-\bar{\U}^{I_{\mf}}_{1}(R)$ and $-\U^{I_{\mf}}_{2}(R/n)$ can be expressed as rate-distortion functions. The cardinality bound in the statement of the theorem follows from Corollary \ref{newcorcard}.


\subsection{Proof of Example \ref{claim_example_main_theorem}}
\label{proof_claim_example_main_theorem}

Proof of (i):  The convex conjugate of $\mf(x)=(1-x)^{2}$ for $x\in\mathbb{R}$ is $\mf^{*}(y)=y+\frac{y^{2}}{4}$. Then,
\[
D_{\mf}(P\|Q)=\chi^{2}(P\|Q)=\mathbb{E}_{Q}\left[\left(\frac{dP}{dQ}\right)^{2}\right]-1
\]
\[
\mathbb{E}_{P_{Z}}\left[\lambda(\ell(w,Z)-\mathbb{E}\ell(w,Z))-a+\frac{(\lambda\ell(w,Z)-\lambda\mathbb{E}\ell(w,Z)-a)^{2}}{4}\right]\le \frac{\lambda^{2}\sigma^{2}}{4}+\frac{a^{2}}{4}-a
\]
So 
\begin{align*}
    &\Psi^{\mf^{*}}_{\ell(\cdot,Z)}(\chi^{2}(P_{WZ_{i}}\|P_{W}P_{Z_{i}}))=\inf_{\lambda\ge0,a\in \mathbb{R}}\left\{\frac{1}{\lambda}\chi^{2}(P_{WZ_{i}}\|P_{W}P_{Z_{i}})+\frac{a}{\lambda}+\frac{1}{\lambda}\left[\frac{\lambda^{2}\sigma^{2}}{4}+\frac{a^{2}}{4}-a\right]\right\}\\
    &\inf_{\lambda\ge0}\left\{\frac{1}{\lambda}\chi^{2}(P_{WZ_{i}}\|P_{W}P_{Z_{i}})+\frac{\lambda\sigma^{2}}{4}\right\}=\sqrt{\sigma^{2}\chi^{2}(P_{WZ_{i}}\|P_{W}P_{Z_{i}})}.
\end{align*}
Then using Theorem \ref{th27},
\begin{align*}
    \text{\rm{gen}}(\mu,P_{W|S})\le \frac{1}{n}\sum_{i=1}^{n}\sqrt{\sigma^{2}\chi^{2}(P_{WZ_{i}}\|P_{W}P_{Z_{i}})}.
\end{align*}
Then from Jensen's inequality and supermodularity property of $D_{\mf}$ (See Corollary \ref{corr:def-sub-mod22} for constant $W$) for $\mf(x)=x^{2}-x$.
\begin{align}
    \frac{1}{n}\sum_{i=1}^{n}\sqrt{\sigma^{2}\chi^{2}(P_{WZ_{i}}\|P_{W}P_{Z_{i}})}\le \sqrt{\frac{1}{n}\sum_{i=1}^{n}\sigma^{2}\chi^{2}(P_{WZ_{i}}\|P_{W}P_{Z_{i}})}\le \sqrt{\frac{\sigma^{2}}{n}\chi^{2}(P_{WS}\|P_{W}P_{S})}
\end{align}

Proof of (ii): \[
D_{\mf}(P\|Q)=D\left({\bar{\alpha} P+\alpha Q}\|Q\right)
\]
Its convex conjugate function is $\mf^{*}(y)=e^{\frac{y-\bar{\alpha}}{\bar{\alpha}}}-\frac{\alpha}{\bar{\alpha}}y$ for $y\in\mathbb{R}$. We use the theorem \ref{th27} to characterize an upper bound on generalization error.
\[
\mathbb{E}_{{Z}}[e^{\frac{1}{\bar{\alpha}}[\lambda\ell(w,Z)-\lambda\mathbb{E}_{{Z}}\ell(w,Z)]-\frac{a}{\bar{\alpha}}-1}]-\frac{\alpha}{\bar{\alpha}}\mathbb{E}_{{Z}}(\lambda\ell(w,Z)-\lambda\mathbb{E}_{{Z}}\ell(w,Z)-a)\le e^{-\frac{a}{\bar{\alpha}}-1}e^{\frac{\lambda^{2}}{2\bar{\alpha}^{2}}\sigma^{2}}+\frac{\alpha}{\bar{\alpha}}a\quad\forall \lambda,a\in \mathbb{R}.
\]
Then
\begin{align*}
     &\Psi^{\mf^{*}}_{\ell(\cdot,Z)}(I^{CKZ}_{\mf}(W;Z_{i}))\triangleq\inf_{\lambda\ge0, a\in\mathbb{R}}\left\lbrace\frac{1}{\lambda}I^{CKZ}_{\mf}(W;Z_{i})+\frac{a}{\lambda}+\frac{1}{\lambda}\phi(\lambda,a)\right\rbrace\\
     &=\inf_{\lambda\ge0}\left\lbrace\frac{1}{\lambda}I^{CKZ}_{\mf}(W;Z_{i})+\frac{1}{\lambda}\frac{\lambda^{2}\sigma^{2}}{2\bar{\alpha}^{2}}\right\rbrace=\sqrt{\frac{2\sigma^{2}}{\bar{\alpha}^{2}}D\left(\bar{\alpha} P_{WZ_{i}}+\alpha P_{W}P_{Z_{i}}\|P_{W}P_{Z_{i}}\right)}.
\end{align*}

Obviously $\alpha=0$ is deduced to Example 1. We will also look at regime $\alpha\to 1$.
Define $Q_{WZ_{i}}^{\alpha}\triangleq \bar{\alpha} P_{WZ_{i}}+\alpha P_{W}P_{Z_{i}}$ and then $Q_{WZ_{i}}^{1}=P_{W}P_{Z_{i}}$, we get
\begin{align}
    \text{\rm{gen}}(\mu,P_{W|S})&\le \lim_{ \alpha\to 1}\left\{\frac{1}{n}\sum_{i=1}^{n}\sqrt{\frac{2\sigma^{2}}{\bar{\alpha}^{2}}D\left(Q_{WZ_{i}}^{\alpha}\|Q_{WZ_{i}}^{1}\right)}\right\}=\frac{1}{n}\sum_{i=1}^{n}\sqrt{2\sigma^{2}\lim_{ \alpha\to 1}\frac{D\left(Q_{WZ_{i}}^{\alpha}\|Q_{WZ_{i}}^{1}\right)}{(1-\alpha)^{2}}}\nonumber\\
    &\overset{(a)}{=}\frac{1}{n}\sum_{i=1}^{n}\sqrt{2\sigma^{2}\chi^{2}(P_{WZ_{i}}\|P_{W}P_{Z_{i}})}
\end{align}
(a) comes from the fact that the fisher information $J_{\alpha}$ defined on $Q_{WZ_{i}}^{\alpha}$ in $\alpha=1$ is equal to $\lim_{ \alpha\to 1}\frac{D\left(Q_{WZ_{i}}^{\alpha}\|Q_{WZ_{i}}^{1}\right)}{(1-\alpha)^{2}}$. So $$J_{1}=\mathbb{E}_{Q_{WZ_{i}}^{1}}\left[\left(\frac{\frac{\partial dQ_{WZ_{i}}^{\alpha}}{\partial \alpha}}{dQ_{WZ_{i}}^{1}}\right)^{2}\right]=\mathbb{E}_{P_{W}P_{Z_{i}}}\left[\left(1-\frac{dP_{WZ_{i}}}{dP_{W}dP_{Z_{i}}}\right)^{2}\right]=\chi^{2}(P_{WZ_{i}}\|P_{W}P_{Z_{i}})$$.

\appendices


\section{Improving the rate-distortion upper bound for mean estimation}\label{rate_of_consistency}

With the knowledge $\mathsf{Var}(W)\le \kappa(n)$ and $I(Z_i;\A(S))\leq R_i$, we can write the following upper bound on the generalization error:
\begin{align}
\mathrm{gen}\left(\mu,\A\right)
\leq \frac1n\sum_{i=1}^n{\hat{\U}}_{2}(R_{i}).\label{eqnf2233}
\end{align}
where
\begin{equation}
\hat{\U}_2(R)=\sup_{\substack{P_{\hat W|Z}:~I(\hat W;Z)\leq R,\\ \mathsf{Var}(\hat W)\le \kappa(n)}}
\mathbb{E}\left[L_{\mu}(\hat W)-\ell(\hat W,Z)\right].
\end{equation}
We will now compute the bound in \eqref{eqnf2233} for the example of learning the mean of a random vector $Z\in\mathbb{R}^d$ with $\mathbb{E}[Z]=\beta$ under the loss function $\ell(w,z)=\|w-z\|^{2}$. Random vector $Z$ is assumed to be a continuous random variable with finite differential entropy, but not necessarily Gaussian (we will obtain a general result and specialize it to Gaussian distributions later).
First, note that without loss of generality, we can restrict the maximum to $P_{\hat W|Z}$ for which $\mathbb{E}[\hat W]=\mathbb{E}[Z]=\beta$. The reason is that adding a constant vector $r$ to $\hat W$ does not change the generalization error, variance, or $I(\hat W;Z)$. To see this note that
\[
\mathbb{E}[\|\hat W-r-Z\|^{2}]=\mathbb{E}[\|\hat W-Z\|^{2}]+\|r\|^{2}-2r^{T}(\mathbb{E}[\hat W]-\mathbb{E}[Z])
\]
as $r$ is a constant vector. Terms involving $r$ only depend on the marginal distributions of $\hat W$ and $Z$. Then $\mathbb{E}_{P_{\hat W}P_{Z}}[\ell(\hat W-r,Z)]-\mathbb{E}_{P_{\hat W,Z}}[\ell(\hat W-r,Z)]=\mathbb{E}_{P_{\hat W}P_{Z}}[\ell(\hat W,Z)]-\mathbb{E}_{P_{\hat W,Z}}[\ell(\hat W,Z)]$. Moreover, $I(\hat{W}+r;Z)=I(\hat{W};Z)$ (as the translation does not change the differential entropy).
Thus,
\begin{align}
\hat{\U}_2(R)&=\sup_{\substack{P_{\hat W|Z}:~I(\hat W;Z)\leq R,\\ \mathsf{Var}(\hat W)\le \kappa(n)}}
\mathbb{E}\left[L_{\mu}(\hat W)-\ell(\hat W,Z)\right]\\
&=\sup_{\substack{P_{\hat W|Z}:~I(\hat W;Z)\leq R,\\ \mathsf{Var}(\hat W)\le \kappa(n)
\\ \mathbb{E}[\hat W]=\mathbb{E}[Z]
}}
\mathbb{E}\left[L_{\mu}(\hat W)-\ell(\hat W,Z)\right]\\
&\le \sup_{\substack{P_{\hat W|Z}:\mathsf{Var}(\hat W)\le \kappa(n)\\
\mathbb{E}[\hat W]=\mathbb{E}[Z]}}\mathbb{E}[L_\mu(\hat W)]-\inf_{P_{\hat W|Z}:\,I(\hat W;Z)\le R}\mathbb{E}[\ell(\hat W,Z)]\nonumber
\\
&= \sup_{\substack{P_{\hat W}:\mathsf{Var}(\hat W)\le \kappa(n)\\
\mathbb{E}[\hat W]=\mathbb{E}[Z]}}\mathbb{E}_{P_ZP_{\hat W}}[\|\hat W-Z\|_2^2]-\inf_{P_{\hat W|Z}:\,I(\hat W;Z)\le R}\mathbb{E}[\ell(\hat W,Z)]\nonumber
\\
&=\sup_{\substack{P_{\hat W}:\\ \mathsf{Var}(\hat W)\le \kappa(n)\\ \mathbb{E}[\hat W]=\mathbb{E}[Z]}}\mathbb{E}\|\hat W-\mathbb{E}[\hat{W}]\|^{2}_{2}+\mathbb{E}\|Z-\beta\|^{2}-\inf_{P_{\hat W|Z}:\,I(\hat W;Z)\le R}\mathbb{E}[\ell(\hat W,Z)]\nonumber\\
&=\kappa(n)+\mathsf{Var}(Z)-\inf_{P_{\hat{W}|Z}:\,I(\hat{W};Z)\le R}\mathbb{E}[\ell(\hat{W},Z)]\nonumber\\
    &\overset{(a)}{\leq}\kappa(n)+\mathsf{Var}(Z)-\frac{d}{2e\pi}\exp\left(\frac{2h(Z)-2R}{d}\right)\nonumber
\end{align}
where (a)  follows from Shannon's lower bound: note that Corollary \ref{th_shannon_lower_bound}  implies that
 \begin{align}
     \RR(D)&\ge h(Z)-\ln\left(\left(\frac{2De}{d}\right)^{\frac{d}{2}}\Gamma(1+d/2)V_{d}\right)
\\&=
h(Z)-\frac{d}{2}\ln\left(\frac{2\pi D e}{d}\right)
 \end{align}
where we used the fact that $V_{d}\Gamma(1+d/2)=\pi^{d/2}$. Therefore,
\[D\geq \frac{d}{2\pi e}\exp\left(\frac{2h(Z)-2\RR(D)}{d}\right).\]
To sum this up, for any arbitrary distribution $Z$  we obtain the following upper bound 
\begin{align}
\hat{\U}_2(R)
\leq \kappa(n)+\mathsf{Var}(Z)-\frac{d}{2e\pi}\exp\left(\frac{2h(Z)-2R}{d}\right).
\end{align}
Next, let us specialize this general bound to Gaussian distributions and assume that $Z\sim \mathcal{N}(\beta,\sigma^{2}I_{d})$. Then, observe that $\frac{d}{2\pi e}\exp(2h(Z)/d)= \mathsf{Var}(Z)$ \cite[Theorem 8.6.6]{thomas2006elements} and we deduce the following upper bound
\begin{align}
\hat{\U}_2(R)
\leq \kappa(n)+\mathsf{Var}(Z)\left[1-\exp\left(\frac{-2R}{d}\right)\right].
\end{align}
If for some algorithm, $\kappa(n)=c\frac{\mathsf{Var}(Z)}{n}$ for some constant $c\ge 1$ and $R\le \frac{d}{2}\ln\frac{n}{n-1}$, we get 
\begin{align}
\hat{\U}_2(R)\le \frac{(c+1)\mathsf{Var}(Z)}{n}.\label{mdfeq}
\end{align}
This yields an upper bound of order $1/n$. 
In particular if $\kappa(n)=\mathsf{Var}(Z)/n$, we obtain
\begin{align}
\hat{\U}_2(R)\le \frac{2\mathsf{Var}(Z)}{n}\label{mdfeq2}
\end{align}
which exactly matches the generalization error of the ERM algorithm on the Gaussian mean estimation problem as given in \eqref{eqnkkkk2}.

\section{Some ideas for improving the rate-distortion upper bound}
\label{further-ideas}

If only the knowledge of $I(Z_i;\A(S))\le R_{i}$ for $1\leq i\leq n$ is available to us, we obtain
\begin{align}
\mathrm{gen}\left(\mu,\A\right)=\mathbb{E}[\mathrm{gen}_{\mu}(W,S)]=\frac{1}{n}\sum_{i=1}^{n}\mathbb{E}\left[\mathrm{gen}_{\mu}(W,Z_{i})\right]
\leq \frac1n\sum_{i=1}^n{\U}_{2}(R_{i})\label{eqnOB1}
\end{align}
where $\text{\rm{gen}}_{\mu}(w,z_i)=\mathbb{E}_{Z \sim \mu}\left[\ell(Z,w)\right]-\ell(z_i,w)$ and
\begin{equation}
\mathcal{U}_2(R)\triangleq \sup_{P_{\hat W|Z}:~I(\hat W;Z)\leq R}
\mathbb{E}\left[\mathrm{gen}_{\mu}(\hat{W},Z)\right]\label{eqnRDN221}
\end{equation}
where $Z\in\mathcal{Z}$ is distributed according to $\mu$.

The above bound can be improved by adding extra constraints on the domain of the supermum. Given fixed $n$, $S=(Z_1, Z_2, \cdots, Z_n)\sim\mu^{\otimes n}$ and the hypothesis alphabet set  $\mathcal W$, define the following set:
$$\mathcal{P}=\bigcup_{P_{\hat W|S}}\{(P_{\hat W,Z_1}, P_{\hat W,Z_2}, \cdots, P_{\hat W,Z_n})\}$$
where the supremum is over all algorithms $P_{\hat W|S}$ where $\hat W\in\mathcal{W}$. In other words, $\mathcal{P}$ denotes the set of all possible marginal distributions on $(\hat W,Z_i)$ that is induced by an arbitrary algorithm $P_{\hat W|S}$.

Assume that $R_i=I(Z_i;W)$ of the original algorithm is known. Then, we can write the following bound instead of \eqref{eqnOB1}:
\begin{align}
\mathrm{gen}\left(\mu,\A\right)=\mathbb{E}[\rm{gen}_\mu(W,S)]\leq \sup_{(P_{\hat W,Z_1}, \cdots, P_{\hat W,Z_n})\in \mathcal{P}:~~ I(\hat W;Z_i)\leq R_i~\forall 1\leq i\leq n}\frac1n\sum_{i=1}^n\mathbb{E}\left[\rm{gen}_\mu(\hat W_i,Z)\right].\end{align}
The above bound yields a potential improvement over \eqref{eqnOB1} since
\begin{align*}
\sup_{\substack{(P_{\hat W,Z_1}, \cdots, P_{\hat W,Z_n})\in \mathcal{P}:\\ I(\hat W;Z_i)\leq R_i~\forall 1\leq i\leq n}}\frac1n\sum_{i=1}^n\mathbb{E}\left[\rm{gen}_\mu(\hat W_i,Z)\right]\leq
\sup_{P_{\hat W,Z_i}: I(\hat W;Z_i)\leq R_i,~\forall 1\leq i\leq n}\frac1n\sum_{i=1}^n\mathbb{E}\left[\rm{gen}_\mu(\hat W_i,Z)\right].\end{align*}
Finding a complete characterization of the convex set $\mathcal{P}$ seems difficult. However, we provide two ideas that yield some restrictions on the set $\mathcal{P}$. The first idea works only when $\mathcal{W}$ is a finite set while the second idea applies to an arbitrary set  $\mathcal{W}$.

\textbf{First idea:} 
 For a finite set $\mathcal{W}$, one restriction on the set $\mathcal{P}$ can be found using the CKZ notion of mutual $\mf$-information as follows: take some $\mf\in\mathcal{F}$. Then, we claim that for any arbitrary $P_{\hat W|S}$ we have
 \begin{align}
\mathsf f(0)\left(1-\frac{1}{|\mathcal{W}|}\right)+
\frac{1}{|\mathcal{W}|}\mf\left(|\mathcal{W}|\right)&\geq \sum_{i}I^{CKZ}_\mf(\hat W;Z_i).\label{nclnw2}
\end{align}
 In particular, when $\mf(x)=x\log(x)$, we obtain
  \begin{align}
\log|\mathcal{W}|&\geq \sum_{i}I(\hat W;Z_i).
\end{align}
Observe that  \eqref{nclnw2} implies that $\hat W$ cannot simultaneously have a high dependence on all $Z_i$'s. 
 The proof of \eqref{nclnw2} is as follows: let $K$ be a random variable that has a uniform distribution on $\mathcal{W}$. 
\begin{align}
\mathsf f(0)\left(1-\frac{1}{|\mathcal{W}|}\right)+
\frac{1}{|\mathcal{W}|}\mf\left(|\mathcal{W}|\right)&=
H^{CKZ}_\mf(K)
\\&\geq H^{CKZ}_\mf(\hat W)\label{jf0n}
\\&=I^{CKZ}_\mf(\hat W;\hat W)\label{jf1n}
\\&\geq I^{CKZ}_\mf(\hat W;Z_1,Z_2,\cdots,Z_n)
\\&\geq \sum_{i}I^{CKZ}_\mf(\hat W;Z_i)\label{jf3n} 
\end{align}
where \eqref{jf0n} follows from property (ii) of Theorem \ref{property_f_entropy}, \eqref{jf1n} follows from the definition of $\mf$-entropy and finally \eqref{jf3n} follows from property (iii) of Theorem \ref{new properties_I_f}. As a side observation, note that the derivation in \eqref{jf0n}-\eqref{jf3n} is similar to that in \eqref{jf0}-\eqref{jf5}. 

\textbf{Second idea:}
Another restriction on the set $\mathcal{P}$ can be found as follows: let $\tilde\ell(w,z)$ be an ``auxiliary" loss function; an arbitrary loss function of our choice which can be different from the original loss function $\ell(w,z)$. We show that the average risk of the ERM algorithm on the auxiliary loss function $\tilde{\ell}$ can be used to bound the generalization error of a different algorithm $\A$, which runs on the same training data as the ERM algorithm, but with the original loss function $\ell(w,z)$. Let
$$\mathsf{ERM}(z_1, \cdots, z_n)=\min_{w}\sum_{i=1}^n\frac1n\tilde\ell(w,z_i)$$
be the risk of the ERM algorithm given a training sequence $s=(z_1, z_2, \cdots, z_n)$ according to $\tilde\ell$. Let
\begin{align}v_n=\mathbb{E}_{S\sim (\mu)^{\otimes n}}\mathsf{ERM}(Z_1, \cdots, Z_n)\label{defvn}\end{align}
be the average risk of the ERM algorithm. Let us, for now, assume that $v_n$ is known to us (estimating $v_n$ is discussed in Section \ref{findingvnsec}).

Take an arbitrary algorithm $P_{\hat W|S}$. Then, the risk of this algorithm with respect to $\tilde{\ell}$ is greater than or equal the risk of the ERM algorithm, \emph{i.e.,}
\begin{align}\frac1n\sum_{i=1}^n\mathbb{E}\left[\tilde\ell(\hat W,Z_i)\right]\geq v_n\label{eqnArg11}.\end{align}
The above inequality yields a constraint on the set $\mathcal{P}$ since $\mathbb{E}\left[\tilde\ell(\hat W,Z_i)\right]$ depends only on the marginal distribution $P_{\hat W,Z_i}$, $1\leq i\leq n$. An application is given in the next subsection.

\subsection{An application}
Assume that we only know $I(S;\A(S))$. While $\mathcal{U}_1(R)$ (as defined in \eqref{eqnRD1}) is the sharpest 
possible bound on the generalization error given an upper bound $R$ on $I(S;\A(S))$, the single-letter bound 
$\mathcal{U}_2(R/n)$ in Theorem \ref{Th2_without_mismatch} is not. In fact, the following relaxation is used in  the proof of Theorem \ref{Th2_without_mismatch}: instead of producing one output hypothesis $W$ for the entire sequence $S=(Z_1, Z_2, \cdots, Z_n)$, we produce $n$ output hypothesis $\tilde{W}_1, \tilde{W}_2, \cdots, \tilde{W}_n$. To tighten the gap between  $\mathcal{U}_1(R)$ and $\mathcal{U}_2(R/n)$, we can utilize the idea of  ``auxiliary" loss functions.

Let $\tilde\ell(w,z)$ be an auxiliary loss functions. Then, for our algorithm $W=\A(S)$, the risk with respect to $\tilde{\ell}$ is greater than or equal the risk of the ERM algorithm, \emph{i.e.,}
\begin{align}\mathbb{E}\left[\sum_{i=1}^n\frac1n\tilde\ell(W,Z_i)\right]\geq v_n\label{eqnArg1}.\end{align}
Let $Q$ be a random variable, independent of all previously defined variables, and uniform on the set $\{1,2,\cdots, n\}$. Set $\tilde{Z}=Z_{Q}$. Observe that $\tilde{Z}\sim \mu$ because $Z_i\sim \mu$ for all $i$ and $Q$ is independent of $(Z_1,\cdots, Z_n)$. 
Using this definition for $\tilde{Z}$, the risk of $\A$ with respect to the loss $\tilde\ell$ equals
\begin{align}\mathbb{E}[\tilde\ell(W,\tilde{Z})]=\mathbb{E}\left[\sum_{i=1}^n\frac1n\tilde\ell(W,Z_i)\right]\label{eqnArg2}\end{align}
and the generalization error with respect to the loss $\ell$ can be characterized as
\begin{align}	\mathrm{gen}\left(\mu,\A\right)=\frac1n\sum_{i=1}^n\mathbb{E}\left[L_{\mu}(W)-\ell(W,Z_i)\right]= \mathbb{E}\left[L_{\mu}(W)-\ell(W,\tilde{Z})\right].\label{eqnArg3}\end{align}
Thus, we can write a better bound as follows:

\begin{theorem}\label{thm11}
Let
\begin{equation}
\tilde{\mathcal{U}}_2(R)\triangleq \sup_{P_{\hat W|\tilde{Z}}:~I(\hat W;\tilde{Z})\leq R,~~\mathbb{E}[\tilde\ell(\hat W,\tilde{Z})]\geq v_n}
\mathbb{E}\left[L_{\mu}(\hat W)-\ell(\hat W,\tilde{Z})\right]
\end{equation}
where $v_n$ is defined in \eqref{defvn}.
Then,
\[
\mathcal{U}_{1}(R)\le\tilde{\mathcal{U}}_2(R/n)\le {\mathcal{U}}_2(R/n).
\]
\end{theorem}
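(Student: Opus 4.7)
The plan is to combine the single-letterization argument from the proof of Theorem \ref{Th_gen_Uf_If} with the additional constraint \eqref{eqnArg1} furnished by the auxiliary loss function. The second inequality $\tilde{\mathcal{U}}_2(R/n) \le \mathcal{U}_2(R/n)$ will be immediate: the feasible set defining $\tilde{\mathcal{U}}_2$ is obtained by imposing the extra constraint $\mathbb{E}[\tilde{\ell}(\hat W, \tilde Z)] \ge v_n$ on top of the feasible set for $\mathcal{U}_2$, so the supremum can only decrease.

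For the main inequality $\mathcal{U}_1(R) \le \tilde{\mathcal{U}}_2(R/n)$, I would fix an arbitrary algorithm $\A$ with $I(S;\A(S)) \le R$ and exhibit, for the induced $W = \A(S)$, a feasible candidate for the optimization defining $\tilde{\mathcal{U}}_2(R/n)$ whose objective value equals $\mathrm{gen}(\mu,\A)$. Introducing the time-sharing variable $Q \sim \mathrm{Unif}\{1,\dots,n\}$ independent of $(S,W)$ and setting $\tilde Z := Z_Q$, the marginal law $P_{W,\tilde Z}$ is the natural candidate, and $\tilde Z \sim \mu$ since the $Z_i$ are i.i.d.\ from $\mu$. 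It remains to verify three properties: (i) the objective value $\mathbb{E}[L_\mu(W) - \ell(W,\tilde Z)]$ equals $\mathrm{gen}(\mu,\A)$ by \eqref{eqnArg3}; (ii) the auxiliary constraint $\mathbb{E}[\tilde\ell(W,\tilde Z)] = \frac{1}{n}\sum_i \mathbb{E}[\tilde\ell(W,Z_i)] \ge v_n$ holds by combining \eqref{eqnArg1}--\eqref{eqnArg2} with the definition of $v_n$ as the average ERM risk; and (iii) the information constraint $I(W;\tilde Z) \le R/n$ holds.

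The only substantive step is (iii). Since $Q$ is independent of $W$, data processing yields
\begin{align*}
I(W;\tilde Z) \le I(W;\tilde Z,Q) = I(W;Q) + I(W;\tilde Z\mid Q) = \tfrac{1}{n}\sum_{i=1}^n I(W;Z_i).
\end{align*}
Applying the $AB$-property of Shannon mutual information (part (iii) of Theorem \ref{new properties_I_f} specialized to $\mf(x)=x\ln x$) together with the independence of the $Z_i$'s gives $\sum_i I(W;Z_i) \le I(W;S) \le R$, hence $I(W;\tilde Z) \le R/n$. Combining (i)--(iii), the candidate $P_{W,\tilde Z}$ is feasible for the program defining $\tilde{\mathcal{U}}_2(R/n)$ with objective $\mathrm{gen}(\mu,\A)$, whence $\mathrm{gen}(\mu,\A) \le \tilde{\mathcal{U}}_2(R/n)$. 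Taking the supremum over all $\A$ with $I(S;\A(S)) \le R$ yields the claim. There is no real obstacle here beyond keeping careful track of which variables $Q$ must be independent of; the auxiliary constraint is preserved essentially for free because it depends only on the marginals $P_{W,Z_i}$, which are the same objects that the single-letterization averages over.
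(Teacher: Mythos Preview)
Your proposal is correct and follows essentially the same route as the paper's own proof: introduce the time-sharing variable $Q$, set $\tilde Z=Z_Q$, and verify that the resulting marginal $P_{W,\tilde Z}$ is feasible for $\tilde{\mathcal U}_2(R/n)$ with objective equal to $\mathrm{gen}(\mu,\A)$. The only cosmetic difference is that the paper expands $I(W;Z_Q)\le I(Q,W;Z_Q)=I(W;Z_Q\mid Q)$ using $I(Q;Z_Q)=0$ (since the $Z_i$'s are i.i.d.), whereas you expand $I(W;\tilde Z)\le I(W;\tilde Z,Q)=I(W;Q)+I(W;\tilde Z\mid Q)$ using $I(W;Q)=0$; both are valid and lead to the same bound.
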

Proof of Theorem \ref{thm11} can be found in Section \ref{appenEE}.

\begin{example}
Consider the setting in Example \ref{ex1}. Left picture of Figure \ref{fig:fig2} illustrates this improvement in $\mathcal{U}_2(R/n)$ when  $\tilde{\ell}(w,z)=-\mathbf{1}[w\neq z]$ and $n=10$. Right picture of Figure \ref{fig:fig2} illustrates this improvement in $\mathcal{U}_2(R/n)$ when  $\tilde{\ell}(w,z)=(w-z)^{2}$ and $n=10$.
\end{example}

{\color{black}
A dual form of $\tilde{\mathcal{U}}_{2}(R)$ is given in the following theorem:
\begin{theorem}\label{th_dual_form_D_tilde}
We have the following upper bound on $\tilde{\mathcal{U}}_{2}(R)$:
    \begin{align}
        \tilde{\mathcal{U}}_{2}(R)\le \min_{\substack{\lambda\geq 0,\\\eta\ge0}}\lambda R-\eta v_{n}+\lambda\sup_{\hat{w}}\ln\left(\mathbb{E}_{P_{Z}}\left[e^{\frac{L_{\mu}(\hat{w})-\ell(\hat{w},Z)+\eta\tilde{\ell}(\hat{w},Z)}{\lambda}}\right]\right)
    \end{align}
\end{theorem}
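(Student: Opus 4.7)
The plan is to perform Lagrangian relaxation on the two constraints defining $\tilde{\mathcal{U}}_2(R)$, and then apply the Donsker--Varadhan variational formula conditionally on $\hat{W}$. Fix any feasible conditional distribution $P_{\hat{W}|\tilde{Z}}$ with $I(\hat{W};\tilde{Z})\le R$ and $\mathbb{E}[\tilde{\ell}(\hat{W},\tilde{Z})]\ge v_n$. For every $\eta\ge 0$, feasibility of the second constraint gives
\begin{equation*}
\mathbb{E}[L_\mu(\hat{W})-\ell(\hat{W},\tilde{Z})]
\;\le\;
\mathbb{E}\bigl[g_\eta(\hat{W},\tilde{Z})\bigr] - \eta v_n,
\qquad g_\eta(\hat{w},\tilde{z}) := L_\mu(\hat{w})-\ell(\hat{w},\tilde{z})+\eta\tilde{\ell}(\hat{w},\tilde{z}).
\end{equation*}

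Next I would bound $\mathbb{E}[g_\eta(\hat{W},\tilde{Z})]$ in terms of $I(\hat{W};\tilde{Z})$. Writing the expectation by first conditioning on $\hat{W}$ and then using the Donsker--Varadhan representation of $D(P_{\tilde{Z}|\hat{w}}\|P_{\tilde{Z}})$ with test function $g_\eta(\hat{w},\cdot)/\lambda$, we obtain, for every $\lambda>0$ and every $\hat{w}$,
\begin{equation*}
\mathbb{E}_{P_{\tilde{Z}|\hat{w}}}\!\left[g_\eta(\hat{w},\tilde{Z})\right]
\;\le\;
\lambda\, D\!\bigl(P_{\tilde{Z}|\hat{w}}\,\bigl\|\,P_{\tilde{Z}}\bigr)
+\lambda\ln\mathbb{E}_{P_{\tilde{Z}}}\!\left[e^{g_\eta(\hat{w},\tilde{Z})/\lambda}\right].
\end{equation*}
Averaging over $P_{\hat{W}}$, the first term becomes $\lambda I(\hat{W};\tilde{Z})\le \lambda R$, and the second is upper bounded by $\lambda\sup_{\hat{w}}\ln\mathbb{E}_{P_Z}[e^{g_\eta(\hat{w},Z)/\lambda}]$ (recalling $\tilde{Z}\sim\mu=P_Z$). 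Combining these pieces yields
\begin{equation*}
\mathbb{E}[L_\mu(\hat{W})-\ell(\hat{W},\tilde{Z})]
\;\le\;
\lambda R - \eta v_n + \lambda\sup_{\hat{w}}\ln\mathbb{E}_{P_{Z}}\!\left[e^{(L_\mu(\hat{w})-\ell(\hat{w},Z)+\eta\tilde{\ell}(\hat{w},Z))/\lambda}\right].
\end{equation*}
Since this bound holds for every feasible $P_{\hat{W}|\tilde{Z}}$ and every $\lambda\ge 0,\eta\ge 0$, taking the supremum on the left and the infimum on the right gives the claimed inequality (the case $\lambda=0$ being handled by continuity or by excluding it from the infimum).

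The main obstacle I anticipate is the step in which the average $\mathbb{E}_{P_{\hat{W}}}[\ln\mathbb{E}_{P_Z}[e^{g_\eta(\hat{W},Z)/\lambda}]]$ is replaced by $\sup_{\hat{w}}\ln\mathbb{E}_{P_Z}[e^{g_\eta(\hat{w},Z)/\lambda}]$: this is where we lose tightness, and one must verify that the resulting integrand is measurable and that the supremum is finite for the bound to be meaningful (otherwise the right-hand side of the claimed inequality is vacuously $+\infty$). The Lagrangian relaxation and Donsker--Varadhan steps are standard; the only subtlety is that the dual variables $\lambda,\eta$ are constrained to be nonnegative because the primal constraint $\mathbb{E}[\tilde{\ell}]\ge v_n$ is a lower bound (hence $\eta\ge 0$), and $\lambda$ multiplies an $I(\hat{W};\tilde{Z})\le R$ constraint.
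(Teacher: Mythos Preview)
Your proof is correct and reaches the same bound, but by a different and somewhat cleaner route than the paper's. The paper first performs weak Lagrangian duality, then introduces an auxiliary distribution $Q_{\hat W}$ via the identity $D(P_{\hat W Z}\|Q_{\hat W}P_Z)=I(\hat W;Z)+D(P_{\hat W}\|Q_{\hat W})$, solves the inner supremum over $P_{\hat W|Z}$ by a Gibbs measure (this is Donsker--Varadhan applied in the $\hat W$-direction, conditioning on $Z$), obtaining $\mathbb{E}_{P_Z}\bigl[\ln\mathbb{E}_{Q_{\hat W}}[e^{g_\eta/\lambda}]\bigr]$, and then needs an extra Jensen step to push $\mathbb{E}_{P_Z}$ inside the logarithm before taking the pointwise $\sup_{\hat w}$.

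Your approach instead applies Donsker--Varadhan in the $Z$-direction, conditioning on $\hat W$: this yields directly $\mathbb{E}_{P_{\hat W}}\bigl[\ln\mathbb{E}_{P_Z}[e^{g_\eta(\hat W,Z)/\lambda}]\bigr]$, which is already in the right form to be bounded by $\sup_{\hat w}\ln\mathbb{E}_{P_Z}[e^{g_\eta(\hat w,Z)/\lambda}]$ without any Jensen step or auxiliary $Q_{\hat W}$. The trade-off is that the paper's intermediate expression (before Jensen) is in principle a sharper bound, whereas your route goes straight to the stated inequality with fewer moving parts. Both arguments rest on the same two ingredients---Lagrangian relaxation of the constraints and the variational formula for KL divergence---just assembled in a different order.
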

\begin{proof}
\begin{align}
	\tilde{\mathcal{U}}_{2}(R)&=\sup_{\substack{P_{\hat{W}|Z}: I(Z;\hat{W})\le R,\\\mathbb{E}[\tilde{\ell}(\hat{W},Z)]\ge v_{n}}} \mathbb{E}\left[L_{\mu}(\hat{W})-\ell(\hat{W},Z)\right]\nonumber
	\\
	&{=}
	\sup_{P_{\hat{W}|Z}}\min_{\substack{\lambda\geq 0,\\\eta\ge0}}\mathbb{E}\left[L_{\mu}(\hat{W})-\ell(\hat{W},Z)\right]+\lambda[R- D(P_{\hat{W}Z}\|P_{\hat{W}}P_{Z})]+\eta[\mathbb{E}[\tilde{\ell}(\hat{W},Z)]- v_{n}]
	\nonumber\\
	&\le\min_{\substack{\lambda\geq 0,\\\eta\ge0}}\sup_{P_{\hat{W}|Z}}\mathbb{E}\left[L_{\mu}(\hat{W})-\ell(\hat{W},Z)\right]+\lambda[R- D(P_{\hat{W}Z}\|P_{\hat{W}}P_{Z})]+\eta[\mathbb{E}[\tilde{\ell}(\hat{W},Z)]- v_{n}]\nonumber\\
&\overset{(a)}{=}
\min_{\substack{\lambda\geq 0,\\\eta\ge0}}\sup_{P_{\hat{W}|Z}}\max_{Q_{\hat{W}}}\lambda R-\eta v_{n}+\mathbb{E}_{P_{\hat W Z}}\left[L_{\mu}(\hat{W})-\ell(\hat{W},Z)+\eta\tilde{\ell}(\hat{W},Z)\right]-\lambda D(P_{\hat{W}Z}\|Q_{\hat{W}}P_{Z})\label{dual_gen_RDnnn}\\
&\overset{(b)}{=}\min_{\substack{\lambda\geq 0,\\\eta\ge0}}\max_{Q_{\hat{W}}}\lambda R-\eta v_{n}+\lambda\mathbb{E}_{P_{Z}}\ln\left(\mathbb{E}_{Q_{\hat{W}}}\left[e^{\frac{L_{\mu}(\hat{W})-\ell(\hat{W},Z)+\eta\tilde{\ell}(\hat{W},Z)}{\lambda}}\right]\right)\nonumber\\
&\overset{(c)}{\le} \min_{\substack{\lambda\geq 0,\\\eta\ge0}}\max_{Q_{\hat{W}}}\lambda R-\eta v_{n}+\lambda\ln\left(\mathbb{E}_{Q_{\hat{W}}P_{Z}}\left[e^{\frac{L_{\mu}(\hat{W})-\ell(\hat{W},Z)+\eta\tilde{\ell}(\hat{W},Z)}{\lambda}}\right]\right)\nonumber\\
&\le \min_{\substack{\lambda\geq 0,\\\eta\ge0}}\lambda R-\eta v_{n}+\lambda\sup_{\hat{w}}\ln\left(\mathbb{E}_{P_{Z}}\left[e^{\frac{L_{\mu}(\hat{w})-\ell(\hat{w},Z)+\eta\tilde{\ell}(\hat{w},Z)}{\lambda}}\right]\right)
\end{align}
where (a) comes from the fact that 
\[
D(P_{\hat{W}Z}\|Q_{\hat{W}}P_{Z})=D(P_{\hat{W}Z}\|P_{\hat{W}}P_{Z})+D(P_{\hat{W}}\|Q_{\hat{W}});
\]
to obtain (b), note that for every fixed $Q_{\hat W}$, the minimizing $P_{\hat{W}|Z}$ in \eqref{dual_gen_RDnnn} is the Gibbs measure:
\[
dP^{*}_{\hat{W}|Z}(\hat{w}|z):=\frac{dQ_{\hat{W}}(\hat{w})e^{\frac{L_{\mu}(\hat{w})-\ell(\hat{w},z)+\eta \tilde{\ell}(\hat{w},z)}{\lambda}}}{\mathbb{E}_{Q_{\hat{W}}}\left[e^{\frac{L_{\mu}(\hat{W})-\ell(\hat{W},z)+\eta \tilde{\ell}(\hat{W},z)}{\lambda}}\right]}.
\]
Finally, (c) follows from Jensen's inequality for the concave function $\ln(\cdot)$. This completes the proof.
\end{proof}
}

\begin{figure}
	   \centering
	\includegraphics[scale=1,width=1\linewidth]{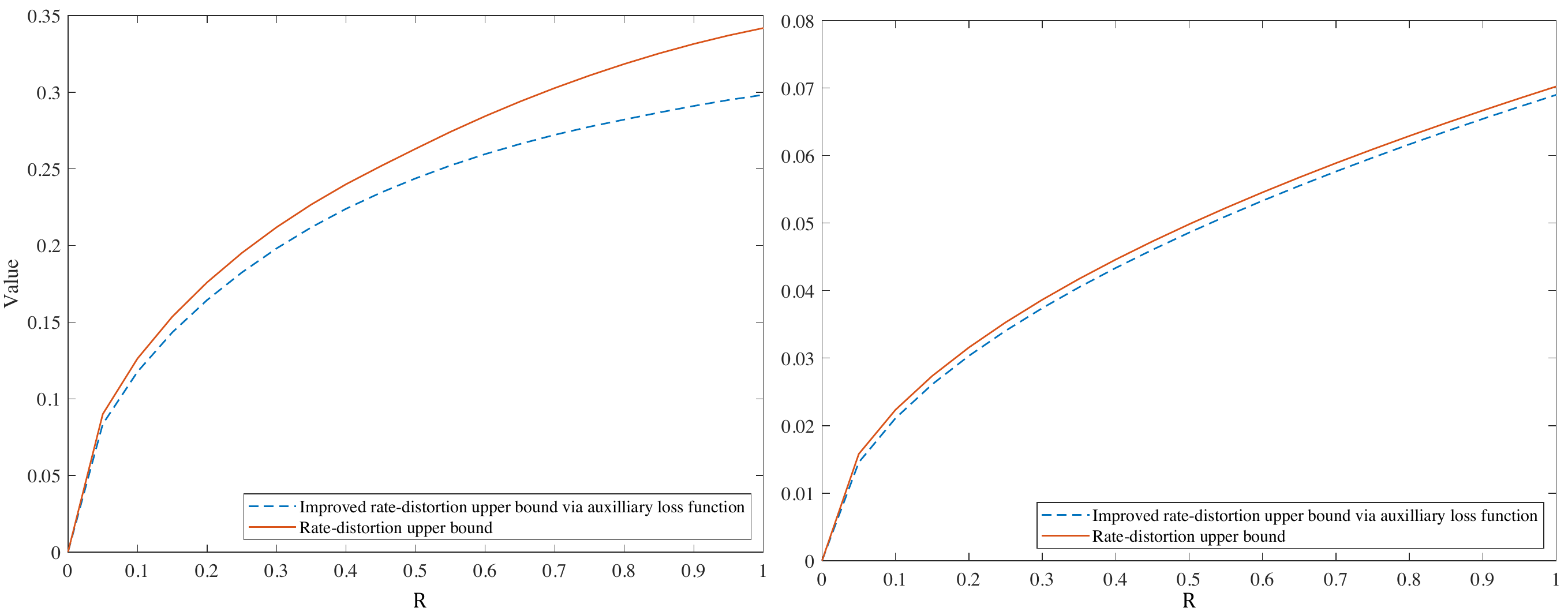}
	\caption{Left picture: The bound in  Theorem \ref{Th2_without_mismatch} and its improved version via the auxiliary loss function $\tilde{\ell}(w,z)=(w-z)^{2}$ for the learning setting $\mathcal{W}=[0,1],\,\mathcal{Z}=\{0,1\}$ and $n=10$ and the original loss function $\ell(w,z)=|w-z|$. Right picture: The bound in  Theorem \ref{Th2_without_mismatch} and its improved version via the auxiliary loss function $\tilde{\ell}(w,z)=-\mathbf{1}[w\neq z]$ for $\mathcal{W}=\mathcal{Z}=\{0,1\}$ and $n=10$ and the original loss function $\ell(w,z)=w\cdot z$.}
	\label{fig:fig2}
\end{figure}
\subsection{Estimating $v_n$}\label{findingvnsec}
In order to use the bound in Theorem \ref{thm11}, one must know the value of $v_n$. However, this is not known in practice. For instance, consider the special case of loss function $\tilde{\ell}(w,z)=(w-z)^{2}$. Given a training data $(z_1, z_2, \cdots, z_n)$, the output of the ERM algorithm with the quadratic loss is just the average of the traning data samples and $v_n$ equals \[\dfrac{n-1}{n}\mathsf{Var}_{\mu}(Z).\]
The variance of the test data is not known, but can be estimated from the training dataset itself. Below we show how to estimate  $v_n$ by running the ERM algorithm on the available training data.  
Assume that the auxiliary loss satisfies $|\tilde{\ell}(w,z)-\tilde{\ell}(w,z')|\leq c$ for all $w,z,z$. Then, we have
\begin{align*}
    \mathsf{ERM}(z_1,z_2,\cdots,z_n)&=\min_{w}\frac1n\sum_{i=1}^n\tilde\ell(w,z_i)
    \\&\leq \min_{w}\left[\frac{c}{n}+
    \frac1n\tilde\ell(w,z'_1)+
    \frac1n\sum_{i=2}^n\tilde\ell(w,z_i)
    \right]
    \\&=\frac{c}{n}+\mathsf{ERM}(z'_1,z_2,\cdots,z_n).
\end{align*}
Then McDiarmid's inequality implies high concentration around expected value for the ERM algorithm:
$$\mathbb{P}\left[\big|\mathsf{ERM}-\mathbb{E}[\mathsf{ERM}]\big|\geq t\right]\leq 2e^{-\frac{2nt^2}{c^2}}.
$$
Thus, one can find an estimate for $v_n$ with high probability based on the available training data sequence.

At the end, we remark that it is also possible to write bounds based on multiple auxiliary loss functions rather than just one.

\subsection{Proof of Theorem \ref{thm11}}\label{appenEE}
It is clear that $\tilde{\mathcal{U}}_{2}(R/n)\le \mathcal{U}_{2}(R/n)$ from their definitions.
By the definition of $v_n$ for any arbitrary $p_{W|S}$ where  $S=(Z_1, Z_2,\cdots, Z_n)$ we have
\[
\mathbb{E}\left[\sum_{i=1}^n\frac1n\tilde\ell(W,Z_i)\right]\geq v_n.
\]
It follows that
\begin{align*}
\mathcal{U}_{1}(R)&=
 \sup_{P_{W|S}:~I(W;S)\leq R}
\mathbb{E}\left[L_{\mu}(W)-L_{S}(W)\right]
\\&
=\sup_{\substack{P_{W|S}:~I(W;S)\leq R,\\ \mathbb{E}\left[\sum_{i=1}^n\frac1n\tilde\ell(W,Z_i)\right]\geq v_n}}
\mathbb{E}\left[L_{\mu}(W)-L_{S}(W)\right].
\end{align*}
For any arbitrary $p_{W|S}$ we have
\[I(W;S)\geq \sum_{i=1}^n I(W;Z_i).\]
Thus,
\begin{align*}
\mathcal{U}_{1}(R)&\leq
\sup_{\substack{P_{W|S}:~\frac1n\sum_{i=1}^n I(W;Z_i)\leq \frac{R}{n},\\ \frac1n\sum_{i=1}^n\mathbb{E}\left[\tilde\ell(W,Z_i)\right]\geq v_n}}
\frac{1}{n}\sum_{i=1}^{n}\mathbb{E}\left[L_{\mu}(W)-\ell(W,Z_{i})\right].
\end{align*}
Take some arbitrary $p_{W|S}$ and a time-sharing random variable $Q$ uniform on $\{1,2,\cdots, n\}$, independent of previously defined variables. Note that
\begin{align}
I(W;Z_Q)&\leq \nonumber
I(Q,W;Z_Q)\\
&=I(W;Z_Q|Q)\label{eqnpf11a}
\\&=\frac1n\sum_{i=1}^n I(W;Z_i)\nonumber
\\&\leq \frac{R}{n}\nonumber
\end{align}
where \eqref{eqnpf11a} follows from the fact that $Z_i$'s are iid. We also have
\[\mathbb{E}\left[\tilde\ell(W,Z_Q)\right]=
\frac1n\sum_{i=1}^n\mathbb{E}\left[\tilde\ell(W,Z_i)\right]\geq v_n,
\]
Thus, the joint distribution $p_{W,Z_Q}$ satisfies the constraints of $\tilde{\mathcal{U}}_2(R/n)$. Moreover, $Z_Q\sim\mu$ and
\[\mathbb{E}\left[L_{\mu}(W)-\ell(W,Z_{Q})\right]=
\frac{1}{n}\mathbb{E}\sum_{i=1}^{n}\left[L_{\mu}(W)-\ell(W,Z_{i})\right]
\]
Thus, we deduce that $\mathcal{U}_1(R)\leq \tilde{\mathcal{U}}_2(R/n)$ as desired.

\section{Orlicz norm and sub-Gaussian random variables}
\label{appendixBreview}
\begin{definition}\label{orlicz_space_def}
 We define the Orlicz space $L_{\psi}$ to be the set of all random variables defined on $\mathcal{X}$ such that
\[
\mathbb{E}[\psi(a|X|)]<\infty
\]
for some $a>0$. The $\psi$ space is a Banach space with respect to the norm
\[
\|X\|_{L_{\psi}}\triangleq\inf\left\lbrace t>0:\,\, \mathbb{E}[\psi(|X|/t)]\le 1\right\rbrace.
\]
\end{definition}
\begin{definition}\label{def2}
 Let $\psi_{2}(x)=e^{x^{2}}-1$. We define the sub-Gaussian random variables to be the set of all random variables defined on $\mathcal{X}$ such that
\[
\mathbb{E}[e^{a^{2}|X|^{2}}]<\infty
\]
for some $a>0$. We also call this set $L_{\psi_{2}}$. The $L_{\psi_{2}}$ space is a Banach space with respect to the norm
\[
\|X\|_{L_{\psi_{2}}}\triangleq\inf\left\lbrace t>0:\,\, \mathbb{E}\left[e^{|X|^{2}/t^{2}}\right]\le 2\right\rbrace.
\]
\end{definition}
\begin{remark}
There is a useful characterization of sub-Gaussian random variables via moment generating function which is equivalent with finiteness of $L_{\psi_{2}}$ norm. The random variable $X$ is said to be sub-Gaussian with parameter $\sigma^2$ if 
	\begin{align}
		\mathbb{E}\left[e^{s(X-\mathbb{E}[X])}\right]\le e^{\frac{\sigma^{2}s^{2}}{2}}, \qquad \forall s\in\mathbb{R}.
	\end{align}
	Using the Chernoff's bound, we obtain,
	\begin{align}
		\mathbb{P}\left(|X-\mathbb{E}[X]|>t\right)\le e^{\frac{-t^{2}}{2\sigma^{2}}}, \qquad \forall t\geq 0.
	\end{align}
\end{remark}
\begin{theorem}\label{th_lowerbound_Fr_Df_equiv_Orlicz}
Let $\mf:[0,\infty)\to \mathbb{R}$ be a convex function satisfying $\mf(1)=0$. Take a random variable $Z$ with a given distribution $\mu$ on the alphabet set $\mathcal Z$. Let 
 $$\mathcal{G}_{\mf^*}:=\left\{g:\mathcal{Z}\mapsto\mathbb{R}: \|g(Z)\|_{L_{\mf^{*}}}<\infty \right\}
$$
where $\mf^*$ is defined in \eqref{eqnfs} and $\|g(Z)\|_{L_{\mf^{*}}}$ is the
Orlicz $L_{\mf^{*}}$ norm on random variable $g(Z)$ which is defined in  Appendix \ref{appendixBreview}. Then,
\begin{enumerate}[(i).]
    \item For $g\in\mathcal{G}_{\mf^{*}}$, a characterization of $\mathbb{F}^{D_{\mf}}_{\mu,R}[g(Z)]$ using the moment generating function of $g(Z)$ is as follows:
    \begin{align}
    \mathbb{F}^{D_{{\mf}}}_{\mu,R}[g(Z)]\le\inf_{\lambda\ge0,a\in\mathbb{R}}\lambda(R+a)+\lambda \mathbb{E}_{Z\sim \mu}\mf^{*}\left(\frac{g(Z)}{\lambda}-a\right)
    \end{align}
    \item The set $\mathcal{G}_{\mf^{*}}$ is a linear space (under ordinary addition and multiplication of functions) and  $\mathbb{F}^{D_{\mf}}_{Q,R}[|g(Z)|]$ is a norm on the space $\mathcal{G}$. Moreover, this norm relates to the Orlicz $L_{\mf^{*}}$ norm on random variable $g(Z)$ as follows:
\begin{align*}
\mathbb{F}^{D_{{\mf}}}_{\mu,R}[|g(Z)|]\le 2\max\{1,R\}\|g(Z)\|_{L_{\mf^{*}}}
\end{align*}
\end{enumerate}
\end{theorem}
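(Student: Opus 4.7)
\textbf{Proof proposal for Theorem \ref{th_lowerbound_Fr_Df_equiv_Orlicz}.}

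The overall strategy is standard convex (Lagrangian) duality applied to the constrained maximization defining $\mathbb{F}^{D_\mf}_{\mu,R}$, followed by a direct comparison between the resulting dual bound and the Orlicz norm. Part (i) will supply the key inequality, and Part (ii) will then follow by specializing Part (i) and checking norm axioms on the supremum.

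For Part (i), I will introduce two Lagrange multipliers: a multiplier $\lambda\ge 0$ for the constraint $D_\mf(\nu\|\mu)\le R$, and a multiplier $a\in\mathbb{R}$ for the normalization $\int h\,d\mu=1$, where $h=d\nu/d\mu\ge 0$. Weak duality gives, for every $\lambda\ge 0$ and $a\in\mathbb{R}$,
\begin{align*}
\mathbb{F}^{D_\mf}_{\mu,R}[g(Z)]
&\le \lambda R+\sup_{\nu\ll\mu}\bigl\{\mathbb{E}_\nu[g(Z)]-\lambda D_\mf(\nu\|\mu)\bigr\} \\
&\le \lambda R + a+\sup_{h\ge 0}\mathbb{E}_\mu\bigl[(g(Z)-a)h(Z)-\lambda\mf(h(Z))\bigr].
\end{align*}
The inner supremum can be taken pointwise in $h\ge 0$; by the definition of the Legendre conjugate $\mf^{*}(y)=\sup_{t\ge 0}\{ty-\mf(t)\}$, it equals $\lambda\,\mf^{*}\!\bigl((g(Z)-a)/\lambda\bigr)$ for $\lambda>0$. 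After the substitution $a\mapsto\lambda a$, which is bijective on $\mathbb{R}$ for each fixed $\lambda>0$, taking infima over $\lambda$ and $a$ yields the bound in the statement; the $\lambda=0$ boundary is handled by noting that the right-hand side then becomes $+\infty$ whenever the problem is non-trivial, so it can be dropped.

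For Part (ii), I will first verify linearity of $\mathcal G_{\mf^{*}}$ and the norm axioms of $\|\cdot\|_{L_{\mf^{*}}}$ via the standard convexity argument: for $t_1>\|g_1\|_{L_{\mf^{*}}}$ and $t_2>\|g_2\|_{L_{\mf^{*}}}$, convexity of $\mf^{*}$ gives
\begin{equation*}
\mf^{*}\!\left(\frac{|g_1(Z)|+|g_2(Z)|}{t_1+t_2}\right)\le \frac{t_1}{t_1+t_2}\mf^{*}\!\left(\frac{|g_1(Z)|}{t_1}\right)+\frac{t_2}{t_1+t_2}\mf^{*}\!\left(\frac{|g_2(Z)|}{t_2}\right),
\end{equation*}
which after taking $\mu$-expectation proves $\|g_1+g_2\|_{L_{\mf^{*}}}\le\|g_1\|_{L_{\mf^{*}}}+\|g_2\|_{L_{\mf^{*}}}$. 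The fact that $\mathbb{F}^{D_\mf}_{\mu,R}[|\cdot|]$ is itself a norm on $\mathcal G_{\mf^{*}}$ is then immediate from the supremum form: subadditivity follows from $\sup_\nu(\mathbb{E}_\nu|g_1|+\mathbb{E}_\nu|g_2|)\le\sup_\nu\mathbb{E}_\nu|g_1|+\sup_\nu\mathbb{E}_\nu|g_2|$, positive homogeneity is trivial, and positive definiteness follows by taking $\nu=\mu$ (which is feasible since $D_\mf(\mu\|\mu)=0\le R$) giving $\mathbb{E}_\mu|g|\le \mathbb{F}^{D_\mf}_{\mu,R}[|g|]$.

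For the final quantitative bound, I will apply Part (i) to $|g(Z)|$ with $a=0$ and $\lambda=t$ for any $t>0$ with $\mathbb{E}_\mu[\mf^{*}(|g(Z)|/t)]\le 1$. This yields $\mathbb{F}^{D_\mf}_{\mu,R}[|g(Z)|]\le tR+t\cdot 1=(R+1)t$, and taking the infimum over such $t$ gives $\mathbb{F}^{D_\mf}_{\mu,R}[|g(Z)|]\le(R+1)\|g(Z)\|_{L_{\mf^{*}}}\le 2\max\{1,R\}\|g(Z)\|_{L_{\mf^{*}}}$, as required. The only mildly delicate point of the whole proof is ensuring that the swap from $\mathbb{E}_\mu[h]=1$ to an unconstrained Lagrangian (and the pointwise evaluation of the inner supremum) does not lose more than a weak-duality gap, which is harmless here since we only claim the inequality direction.
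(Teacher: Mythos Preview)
Your proof is correct. For Part (i) you follow essentially the same Lagrangian route as the paper: introduce multipliers $\lambda\ge 0$ and $a$ for the divergence and normalization constraints, take the pointwise supremum in $h=d\nu/d\mu$ to produce $\mf^{*}$, and make the substitution $a\mapsto \lambda a$. The only cosmetic difference is that you rely only on weak duality, which is all that is needed for the claimed inequality, whereas the paper actually verifies Slater's condition and invokes strong duality.

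For Part (ii) your argument is genuinely simpler than the paper's. The paper sets $a=0$ in Part (i), rewrites the resulting bound as $R$ times an Amemiya norm $\|g(Z)\|^{A}_{L_{\mf^{*}/R}}$, and then appeals to the Amemiya--Orlicz equivalence $\|g\|^{A}_{L_{\mf^{*}}}\le 2\|g\|_{L_{\mf^{*}}}$ from \cite{hudzik2000amemiya}, handling the cases $R>1$ and $R\le 1$ separately. You bypass all of this: plugging $a=0$ and $\lambda=t$ with $\mathbb{E}_\mu[\mf^{*}(|g(Z)|/t)]\le 1$ directly gives $\mathbb{F}^{D_\mf}_{\mu,R}[|g(Z)|]\le t(R+1)$, so taking the infimum over such $t$ yields $(R+1)\|g(Z)\|_{L_{\mf^{*}}}\le 2\max\{1,R\}\|g(Z)\|_{L_{\mf^{*}}}$. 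This is self-contained and in fact gives the slightly sharper constant $R+1$. The paper's route has the advantage of making explicit the connection with the Amemiya norm, but yours reaches the stated bound with less machinery.
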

\subsection{Proof of Theorem \ref{th_lowerbound_Fr_Df_equiv_Orlicz} }\label{f-distortion_norm_append}
\emph{Proof of the part (a):}
Note that 
\begin{align}
	  \mathbb{F}^{D_{{\mf}}}_{\mu,R}[g(Z)]&=\sup_{\nu\ll \mu:~D_{\mf}(\nu\|\mu)\le R}\mathbb{E}_{Z\sim\nu}[g(Z)]
	  \end{align}
is a constrained supremum over all probability distributions $\nu$ satisfying $D_{\mf}(\nu\|\mu)\le R$. A probability distribution $\nu$ can be thought of as a non-negative measure satisfying
$\mathbb{E}_{Z\sim\mu}\left[\frac{d\nu}{d\mu}(Z)\right]=1$. Observe that
\begin{align}
	  \mathbb{F}^{D_{{\mf}}}_{\mu,R}[g(Z)]&=\sup_{\nu\ll \mu:~D_{\mf}(\nu\|\mu)\le R}\mathbb{E}_{Z\sim\nu}[g(Z)]
	\\
	&=
	\sup_{\nu}\min_{\lambda\geq 0,a\in\mathbb{R}}\mathbb{E}_{Z\sim\nu}[g(Z)]+\lambda (R- D_{\mf}(\nu\|\mu))+a\left(1-\mathbb{E}_{\mu}\left[\frac{d\nu}{d\mu}(Z)\right]\right)\label{pf7a}
\end{align}
where in \eqref{pf7a} we take maximum over all non-negative measures $\nu$ that are not necessarily normalized. Observe that the definition for $D_{\mf}(\nu\|\mu)$ can be extended to unnormalized distributions $\nu$. Next, one can switch maximum and minimum in \eqref{pf7a} because the maximization is a concave optimization problem with a feasible choice of $\nu=\mu$ meeting the constraint $D_{\mf}(\mu\|\mu)\le R$. So, the Slater's condition, a sufficient condition for strong duality is satisfied. Therefore,
    \begin{align}
	  \mathbb{F}^{D_{{\mf}}}_{\mu,R}[g(Z)]
	&=
	\min_{\lambda\geq 0,a\in\mathbb{R}}\sup_{\nu}\left\{\mathbb{E}_{Z\sim\nu}[g(Z)]+\lambda R-\lambda D_{\mf}(\nu\|\mu)+a\left(1-\mathbb{E}_{\mu}\left[\frac{d\nu}{d\mu}(Z)\right]\right)\right\}
	\\
	&=	\min_{\lambda\geq 0,a\in\mathbb{R}}\sup_{\nu}\left\{\lambda R+a+\lambda\int d\mu(z)\left[\frac{d\nu}{d\mu}(z)\cdot\frac{g(z)-a}{\lambda}-\mf\left(\frac{d\nu}{d\mu}(z)\right)\right]\right\}\label{eq100145}\\
		&\leq\min_{\lambda\ge0,a\in\mathbb{R}}\left\{\lambda R+a+\lambda \int d\mu(z)\sup_{t(z)\geq 0}\left[t(z)\cdot\frac{g(z)-a}{\lambda}-\mf\left(t(z)\right)\right]\right\}\label{eq100146}
\\
	&=\min_{\lambda\ge0,a\in\mathbb{R}}\left\{\lambda(R+a)+\lambda \mathbb{E}_{Z\sim\mu}\mf^{*}\left(\frac{g(Z)}{\lambda}-a\right)\right\}
\end{align}
\emph{Proof of the part (b): }
First we show that $\mathbb{F}^{D_{\mf}}_{\nu,R}[|g(Z)|]:\mathcal{G}_{\mf^{*}}\to \mathbb{R}_{+}$ is a norm. Clearly, $\mathbb{F}^{D_{\mf}}_{\nu,R}[|g(Z)|]\ge
\mathbb{E}_{Z\sim\mu}[|g(Z)|]\geq
0$. Moreover, $\mathbb{F}^{D_{\mf}}_{\nu,R}[|g(Z)|]=0$ implies
$\mathbb{E}_{Z\sim\mu}[|g(Z)|]=0$ and hence
 $g(Z){=}0$ almost surely with respect to measure $\mu$. We also have $\mathbb{F}^{D_{\mf}}_{\nu,R}[|c\cdot g(Z)|]=|c|\cdot\mathbb{F}^{D_{\mf}}_{\nu,R}[|g(Z)|]$. The triangle inequality property is immediate from the sub-additivity of the function $|\cdot|$. 

It remains to show the following inequality for a convex function $\mf$ with $\mf(1)=0$:
\begin{align}
  \mathbb{F}^{D_{{\mf}}}_{\mu,R}[|g(Z)|]\le 2\max\{1,R\}\|g(Z)\|_{L_{\mf^{*}}(\mu)}.
\end{align}
We use the equivalence between the Ameniya norm and the Orlicz norm (see \cite{hudzik2000amemiya}) \begin{align}\|g(Z)\|_{L_{\mf^{*}}(\mu)}\le\|g(Z)\|^{A}_{L_{\mf^{*}}(\mu)}\le 2\|g(Z)\|_{L_{\mf^{*}}(\mu)}\label{eqnequao}\end{align} 
where the Ameniya norm defined as follows:
\begin{align}
    \|g(Z)\|^{A}_{L_{f^*}(\mu)}\triangleq \inf_{t>0}\left[\frac{1+\mathbb{E}_{Z\sim\mu}f^*(|tg(Z)|)}{t}\right].
\end{align}

From part (a) of Theorem \ref{th_lowerbound_Fr_Df_equiv_Orlicz} we have
\begin{align}
    \mathbb{F}^{D_{{\mf}}}_{\mu,R}[|g(Z)|]
    &=\inf_{\lambda\ge0,a\in\mathbb{R}}\left\{\lambda(R+a)+\lambda \mathbb{E}_{Z\sim\mu}\mf^{*}\left(\frac{|g(Z)|}{\lambda}-a\right)\right\}
    \\&\le R\cdot \inf_{\lambda\ge0}\left\{\lambda+\lambda \mathbb{E}_{Z\sim\mu}\left(\frac{\mf^{*}}{R}\left(\frac{|g(Z)|}{\lambda}\right)\right)\right\}
    \\&=R\cdot \|g(Z)\|^{A}_{L_{\frac{\mf^{*}}{R}}(\mu)}\label{neq1}
\end{align}
Consider two cases: if $R>1$, then from \eqref{neq1} and \eqref{eqnequao} we obtain
$\mathbb{F}^{D_{{\mf}}}_{\mu,R}[|g(Z)|]\leq 2R\|g(Z)\|_{L_{\frac{\mf^{*}}{R}}(\mu)}$. On the other hand, for $R>1$ we have $\|g(Z)\|_{L_{\frac{\mf^{*}}{R}}(\mu)}\leq \|g(Z)\|_{L_{\mf^{*}}(\mu)}$, so we get
$\mathbb{F}^{D_{{\mf}}}_{\mu,R}[|g(Z)|]\leq 2R\|g(Z)\|_{L_{\mf^{*}}(\mu)}$ as desired. Next, if $R\leq 1$, then from the definition of $\mathbb{F}^{D_{{\mf}}}_{\mu,R}$, \eqref{neq1} for the choice of $R=1$, and \eqref{eqnequao} we have
\[\mathbb{F}^{D_{{\mf}}}_{\mu,R}[|g(Z)|]\leq \mathbb{F}^{D_{{\mf}}}_{\mu,1}[|g(Z)|]\leq \|g(Z)\|^{A}_{L_{\mf^{*}}(\mu)}\leq 2\|g(Z)\|_{L_{\mf^{*}}(\mu)}.\]
This yields the desired inequality in both cases.
\end{document}